\documentclass{article}

\usepackage[latin1]{inputenc}
\usepackage{amsfonts}
\usepackage{amssymb}
\usepackage{times,fullpage}
\usepackage{amsmath,amsthm}
\usepackage{latexsym}

\newcommand{\removeforqip}[1]{}
\newcommand{\removeforfsttcs}[1]{#1}
\newcommand{\COMMENT}[1]{}

\newcommand{\hide}[1]{}

\newcommand{\zo}{\{0,1\}}

\newcommand{\abs}[1]{\mid #1 \mid}

\newcommand{\sgn}{\mathrm{sgn}}

\newcommand{\eps}{\varepsilon}
\renewcommand{\phi}{\varphi}
\newcommand{\ie}{\textit{i.e.},}
\newcommand{\norm}[1]{\parallel #1 \parallel}

\newtheorem{theorem}{Theorem}
\newtheorem{corollary}{Corollary}
\newtheorem{definition}{Definition}
\newtheorem{proposition}{Proposition}
\newtheorem{lemma}{Lemma}
\newtheorem{claim}{Claim}

\newtheorem{defn}{Definition}
\newtheorem{cor}[theorem]{Corollary}

\newcommand{\Ftwo}{\mathbb {GF}_2}

\newcommand{\Real}{\mathbb R}

\newcommand{\enlever}[1]{}
\newcommand{\Prob}{{\rm Prob}}
\newcommand{\NLB}{NLB}
\newcommand{\nlb}{\mbox{non-local box}}
\newcommand{\nlbs}{\mbox{non-local boxes}}
\newcommand{\rk}{{\rm rank_{\mathbb{GF}_2}}}
\newcommand{\rkF}{{\rm rank_{\mathbb{F}}}}

\newcommand{\epsrk}{\eps{-}\rk}
\newcommand{\NL}{NL}
\newcommand{\NLpx}{NL^{||,\oplus}}
\newcommand{\NLp}{NL^{||}}
\newcommand{\NLo}{NL^{\rm ord}}
\newcommand{\NLx}{NL^{\oplus}}
\newcommand{\RNLpx}{NL^{||,\oplus}_\eps}
\newcommand{\RNLp}{NL^{||}_\eps}
\newcommand{\RNLo}{NL^{\rm ord }_\eps}
\newcommand{\RNL}{NL_\eps}

\newcommand{\disc}{Disc}
\newcommand{\galpha}{\gamma_2^\alpha}
\newcommand{\ginf}{\gamma_2^\infty}
\newcommand{\ba}{\mathbf{a}}
\newcommand{\bb}{\mathbf{b}}

\newcommand{\X}{\mathcal{X}}
\newcommand{\Y}{\mathcal{Y}}

\newcommand{\com}{{\rm com}}
\newcommand{\pub}{{\rm pub}}
\newcommand{\priv}{{\rm priv}}
\newcommand{\AND}{AND}
\newcommand{\OT}{OT}
\newcommand{\OTsec}{\widehat{OT}}

\newcommand{\NAND}{N-AND}

\begin{document}

\title{Non-Local Box Complexity and Secure Function Evaluation}
\author{Marc Kaplan\thanks{LRI - Universit\'e Paris-Sud}
\and
Iordanis Kerenidis$^*$
%\thanks{Supported in part by ACI Securit\'e Informatique SI/03 511 and ANR AlgoQP grants of the French Ministry and in part by the European Commission under the Intergrated Project Qubit Applications (QAP) funded by the IST directorate as Contract Number 015848.}\\
%CNRS - LRI\\
%Universit\'e Paris-Sud\\
%jkeren@lri.fr
\and
Sophie Laplante$^*$
% \thanks{LRI - Universit\'e Paris-Sud}
\and
J\'er\'emie Roland\thanks{NEC Laboratories America}
}

\date{}

%\begin{abstract}
%\end{abstract}
%\newpage

\maketitle

\begin{abstract}
A non-local box is an abstract device into which Alice and Bob input bits $x$ and $y$ respectively and receive outputs $a$ and $b$ respectively, where $a,b$ are uniformly distributed and $a \oplus b = x \wedge y$. Such boxes have been central to the study of quantum or generalized non-locality, as well as the simulation of non-signaling distributions. In this paper, we start by studying how many non-local boxes Alice and Bob need in order to compute a Boolean function $f$. We provide tight upper and lower bounds in terms of the communication complexity of the function both in the deterministic and randomized case. We show that non-local box complexity has interesting applications to classical cryptography, in particular to secure function evaluation, and study the question posed by Beimel and Malkin \cite{BM} of how many Oblivious Transfer calls Alice and Bob need in order to securely compute a function $f$. We show that this question is related to the non-local box complexity of the function and conclude by greatly improving their bounds. Finally, another consequence of our results is that traceless two-outcome measurements on maximally entangled states can be simulated with 3 \nlbs, while no finite bound was previously known. 
\end{abstract}

\section{Introduction}

\paragraph{Communication complexity.} Communication complexity is a central model of computation, which was
first defined by Yao in 1979~\cite{yao79}.
\COMMENT{ and has since found numerous applications.}
\removeforfsttcs{It has found applications in
many areas of theoretical computer science including Boolean circuit complexity, time-space tradeoffs, data structures, automata, formula size, etc.}
In this model Alice and Bob receive inputs~$x$ and~$y$ respectively and are allowed to communicate in order to compute a function $f(x,y)$. The goal is to find the minimum amount of communication needed for this task. In different variants of the model, we allow Alice and Bob to err with some probability, and to share common resources in an attempt to enable them to solve their task more efficiently. 

One such resource is shared randomness\removeforqip{, \ie~a common random string that both Alice and Bob know before they receive their inputs}. When Alice and Bob are not allowed any errors, shared randomness does not reduce the communication complexity. On the other hand, when they are allowed to err, a common random string can reduce the amount of communication needed. However, 
Newman's result tells us that shared randomness can be replaced by private randomness at an additional cost logarithmic in the input size~\cite{newman91}. 

Another very powerful shared resource is entanglement. Using teleportation, Alice and Bob can transmit quantum messages by using their entanglement and only classical communication. This model has been proven to be very powerful, in some cases exponentially more efficient than the classical one. Another way to understand the power of entanglement is by looking at the CHSH game~\cite{chsh69}, where Alice and Bob receive uniformly random bits $x$ and $y$ respectively and their goal is to output bits $a$ and $b$ resp. such that $a \oplus b=x\wedge y$ without communicating. It is easy to conclude that even if Alice and Bob share randomness, their optimal strategy will be successful with probability $0.75$ over the inputs. However, if they share entanglement, then there is a strategy that succeeds with probability approximately $0.85$. This game proves that quantum entanglement can enable two parties to create correlations that are impossible to create with classical means. 

\removeforfsttcs{Even though the setting of the previous game is not exactly the same as the model of communication complexity, we can easily transform one to the other. From now on, in our communication complexity model, instead of requiring Bob to output the value of the function $f(x,y)$, we require Alice and Bob to output two bits $a$ and $b$ respectively, such that $a \oplus b = f(x,y)$. We call this ``computing $f$ in parity''.  It is easy to see that the two models are equivalent up to one bit of communication.}

\paragraph{Non-local boxes.} As we said, entanglement enables Alice and Bob to succeed in the CHSH game with probability $0.85$. But what if they shared some resource that would enable them to win the game with probability 1? Starting from such considerations, Popescu and Rohrlich ~\cite{pr94} defined the notion of a \nlb. 
A \nlb\ is an abstract device shared by Alice and Bob.  By one use of a \nlb, we mean that Alice inputs $x$, Bob inputs $y$, Alice gets an output
$a$ and Bob gets $b$ where $a,b$ are uniformly distributed and more importantly $a \oplus b = x \wedge y$. The name \nlb\ is due to the property that one use of a \nlb\ creates correlations between two bits that are maximally non-local (allowing to win the CHSH game with probability one), but still does not allow to communicate, since taken separately, each bit is just an unbiased random coin. As such, a \nlb\ may be considered as a unit of non-locality. We note here an important property of a \nlb, namely that, similar to entanglement, one player can enter an input and receive an output even before the second player has entered an input.  

The importance of the notion of a \nlb\ has become increasingly evident in the last years. Non-local boxes were first introduced to study (quantum or generalized) non-locality. In particular, it was shown than one of the most studied versions of the EPR experiment, where Alice and Bob perform projective measurements on a maximally entangled qubit pair, may be simulated using only one use of a \nlb~\cite{cgmp05}. More generally, it was shown that any non-signaling distribution over Boolean outputs may be exactly simulated with some finite number of \nlbs\ (for finite input size)~\cite{barrettpironio05,jm05}. This was later generalized to any  non-signaling distribution, except that the simulation may not always be performed exactly for non-Boolean outputs~\cite{fw08}. These results rely on the fact that the set of non-signaling distributions is a polytope, so it suffices to simulate the extremal vertices to be able simulate the whole set. In the context of non-locality, another application of \nlbs\ is the study of pseudo-telepathy games~\cite{bm05}.

It is easy to see that one use of a \nlb\ can be simulated with
one bit of communication and shared randomness: Alice outputs a
uniform bit $r$ and sends $x$ to Bob, who outputs $r\oplus x\cdot y$. However, the converse cannot possibly hold, since a \nlb\ cannot be used for communication.

The first question is what happens if we use \nlbs\ as shared resource in the communication complexity model. Van Dam showed \removeforqip{a remarkable result,} that for any Boolean function $f:\zo^n \times \zo^n \rightarrow \zo$, Alice and Bob can use $2^n$ \nlbs\ and no communication at all and at the end output bits $a$ and $b$ such that $a \oplus b=f(x,y)$~\cite{vandam05}. In other words, if \nlbs\ were physically implementable, then all functions would have trivial communication complexity. His results were strengthened by Brassard \textit{et al.} who showed that even if a non-ideal \nlb\ existed, one that solves the CHSH game with probability $0.91$, then still all functions would have trivial communication complexity~\cite{bbl+06}. 
%\comment{add reference to Cleve et al., where the upper bound is the size of the circuit}
Note that in these results, the number of \nlbs\ needed may be exponential in the input size and do not take into account any properties of the function and more precisely its communication complexity without \nlbs. 
It also follows from the work of \cite{barrettpironio05,bbl+06} that for any Boolean function $f$, if $f$ has a circuit with fan-in 2 of size $s$, then there is a deterministic $\nlb$ protocol of complexity $O(s)$, where the bits of the input of $f$ are split arbitrarily among the players. This implies  that exhibiting an explicit function for which the deterministic $\nlb$ complexity is superlinear, would translate into a superlinear circuit lower bound for this function.  This is a notoriously difficult problem, and while a simple counting argument shows that a random function requires exponential size circuits, the best lower bound to date for an explicit function is linear~\cite{LR01,IM02}.

\paragraph{Secure function evaluation.}
Non-local boxes have also been studied in relation to cryptographic primitives such as Oblivious Transfer or Bit Commitment.
Wolf and Wullschleger \cite{ww05}
% studied whether a \nlb\ can be used for secure function evaluation and
showed that Oblivious Transfer is equivalent to a \emph{timed} version of a \nlb\ (up to a factor of 2). 
To maintain the non-signaling property of the \nlb, one can define timed \nlb\ as having  a predefined time limit, and if any of the players have not entered an input by this time, then some fixed input, say 0, is used instead.
%\comment{I think this better corresponds to the simulation, Iordanis, was there a specific reason for your definition?}
%(One might also define it as the \nlb\ waits until both inputs have been entered before sending the outputs.)
Subsequently, Buhrman \textit{et al.} \cite{BCUWW} showed how to construct Bit Commitment and Oblivious Transfer by using \nlbs\ that do not need to be timed but have to be trusted.

In this paper, we are interested in secure function evaluation, which is one of the most fundamental cryptographic tasks. In this model, Alice and Bob want to evaluate some function of their inputs in a way that does not leak any more information than what follows from the output of the function. 
It is known that there exists functions that cannot
be evaluated securely in the information-theoretic setting (\cite{BGW88,CCD88,CK91,K92}).
However, 
all functions can be computed securely in the information theoretic setting if the players
have access to a black box that performs Oblivious Transfer or some other complete function, e.g. the AND function (\cite{GV88,K88}). 

There has been a lot of work trying to identify, in various settings, which functions can be easily evaluated in a secure way, \ie~without any invocation of the black box, and which are hard to evaluate securely, \ie~require at least one invocation of the black box (\cite{CK91,K92,BMM99,K91,KKMO00,K00}).
Moreover, Beaver \cite{B96} showed that there exists a hierarchy of different degrees of hardness for the information-theoretic setting. In other words,
for all $k$, there are functions that can be securely evaluated with $k$ invocations of the AND box but cannot be computed with $k-1$ uses of the black box.

Beimel and Malkin \cite{BM} proposed a quantitative approach to secure function evaluation by studying how many calls to an Oblivious Transfer or other complete black box one needs in order to securely compute a given function $f$ in the honest-but-curious model. For a Boolean function $f:\X\times\Y\to\zo$ and deterministic protocols, they provide a combinatorial characterization of the minimal number of AND calls required, which however does not lead to an efficient algorithm to determine how many ANDs are actually required. They also show that $2^{|\X|}$ ANDs are sufficient for any function. In the randomized case, they provide lower bounds depending on the truth-table of the function which can be at most of the order of $n$. They also state that ``it would be very interesting to try and explore tighter connections with the communication complexity of the functions".

Finally, Naor and Nissim \cite{NN} have given some connections between the communication complexity of a function $f$ and the communication complexity for securely computing $f$. These results, translated into the Beimel-Malkin and our model, only show that the number of ANDs is at most exponential in the communication complexity.

\paragraph{Summary of results.}
In this paper, we provide more evidence on the importance of  non-local boxes by showing how they relate to different models of communication complexity as well as how they can be used as a tool to quantitatively study secure function evaluation.

First, we study how many \nlbs\ are needed in order to distributively compute a Boolean function $f$. 
We define four different variants denoted $NL, \RNL, \NLp, \RNLp$, where the first two are the deterministic and randomized \nlb\ complexity and the latter two are the deterministic and randomized complexity where the \nlbs\ are only used in parallel. 
We provide lower and upper bounds for all these models in terms of communication models and show that in many cases our bounds are tight.

For the deterministic parallel \nlb\ complexity, we show that $\NLp(f)$ is equal to the rank of the function $f$ over $\Ftwo$.
This also implies that it is equivalent to the communication complexity of the function in the following model: Alice and Bob send to a referee one message each and the referee outputs the Inner Product of the two vectors mod 2. 
Moreover we show that $\NLp(f)$ is always greater than the deterministic communication complexity $D(f)$ and less than $2^{D(f)}$. These bounds are optimal as can be seen by looking at the functions of Inner Product and Disjointness. 

In the randomized parallel case, we define a notion of approximate rank over $\Ftwo$ which is equal to $\RNLp(f)$, under the assumption that the output of the protocol is the XOR of the outcomes of the \nlbs. The notion of approximate rank over $\Real$ has been used for communication complexity~\cite{bw99} and gives upper and lower bounds in the randomized model. 

\COMMENT{
{\bf If we show that apprrank over $\Ftwo$ is less than approxrank over $\Real$ then $\RNLp \leq (\gamma_2)^6$}
}

For the deterministic \nlb\ complexity $NL(f)$, we show that it is at least the communication complexity $D(f)$ and, of course, smaller than $\NLp(f)$, which is again a tight bound. In the randomized case, we prove that it is bounded above by the communication complexity $R^{||,MAJ}(f)$ in the following model: Alice and Bob send to a referee one message each and the referee outputs 1 if for the majority of indices, the two messages are equal. This is a natural model of communication complexity that has appeared repeatedly, for example in the simulation of quantum protocols by classical ones and in various upper bounds on simultaneous messages~\cite{KNR,grolmusz, shi05, ls07}. This model is also bounded above in terms of $\ginf$, a quantity which has been used for upper and lower bounds on communication complexity~\cite{ls07}.

In another application of our work, using the recent result of Regev and Toner~\cite{rt07}, we show that traceless two-outcome measurements on maximally entangled states can be simulated with 3 \nlbs. Previously, no finite bound was known for this case.
In order to do this we need to extend our results from Boolean functions to any distribution. 

Then, we look at the consequences of our results in the area of secure function evaluation. The main question we study is how many calls to a secure primitive one needs to make in order to securely evaluate a function~$f$. 
%Such secure primitives include an Oblivious Transfer black box or an AND black box. Since protocols using \nlbs\ are inherently secure due to their non-signaling property, a natural approach would be to show how to reduce \nlbs\ to cryptographic primitives and vice-versa, in order to use our previous results. However, note that such a reduction has been proven to be quite problematic due to the definition of a NLB box as a generalisation of entanglement.
% We will expand on this later on, but we would like to stress that our goal here is different. We would like to provide upper and lower bounds on the number of classical Oblivious Transfer or AND calls that one needs to make using the know-how we obtained from the study of \nlbs.
%We overcome this difficulty by defining a third restricted variant of \nlb\ complexity, where Alice and Bob use their \nlbs\ in the same order. Note that we still use the usual \nlbs\ without any timing properties and hence the protocols remain non-signaling. We show how this model relates to the number of Oblivious Transfer calls and conclude by considerably improving the bounds of Beimel and Malkin. 
%
Specifically, in the honest-but-curious model, we exactly characterize the number of secure AND boxes we need in order to evaluate $f$ by the one-way communication complexity of $f$. Our proof will be reminiscent of our proofs for the \nlb\ complexity. In the malicious model, we upper bound the number of Oblivious Transfer boxes needed by the \nlb\ complexity of $f$, when the \nlbs\ are used in order. This implies strong upper bounds in terms of the communication complexity as well as $\ginf$. For the lower bounds, we show that the communication complexity of $f$ remains a lower bound for \emph{optimal} protocols that securely evaluate~$f$.

Our results show that \nlbs, introduced for the study of quantum correlations or more general non-locality, can provide a novel way of looking at questions about classical communication complexity, secure function evaluation and complexity theory. 

\section{Preliminaries} 

\subsection{Communication Complexity}

Let $f:\X\times\Y\to\zo$ be a bipartite Boolean function. Alice gets an input $x\in \mathcal{X}$ and Bob gets an input $y\in \mathcal{Y}$. We say that Alice and Bob compute $f(x,y)$ in parity if after executing a protocol, Alice outputs a bit $a$ and Bob outputs a bit $b$ such that $a\oplus b=f(x,y)$, where we use $\oplus$ to denote both the logical $XOR$ and addition mod 2. 
This model differs from the standard model, where one of the players
outputs the value of the function, by at most 1 bit.

We use the following notions of communication complexity.  In probabilistic
models, we assume that the players have a common source of randomness.
\begin{itemize}
\item
$D(f)$ and $R_{\eps}(f)$: \COMMENT{denote} deterministic and $\eps$-bounded error communication complexity of $f(x,y)$ in parity.
\item
$D^\rightarrow(f)$ and $R^\rightarrow_{\eps}(f)$: \COMMENT{are the} one-way deterministic and bounded-error communication complexity of $f(x,y)$ in parity.
\item
\COMMENT{Finally, }$D^\parallel(f)$ and $R^\parallel_{\eps}(f)$: \COMMENT{are the} deterministic and bounded-error communication complexities in the model of simultaneous messages, where Alice and Bob each send a message to the referee and the referee outputs
the value of the function $f(x,y)$.
\end{itemize}

For the model of simultaneous messages, we also consider some natural restrictions on how the referee computes
the output from the messages he receives from the players.
We assume the messages sent are of the same length.
Suppose the referee receives bits $\ba= (a_1,\ldots,a_t)$ 
from Alice, and $\bb= (b_1,\ldots,b_t)$ from Bob.  If 
the referee always computes a predefined function $g(\ba,\bb)$,
then we write $D^{\parallel,g}(f)$ or $R^{\parallel,g}_{\eps}(f)$ to be the length of
the message sent by the players (not the sum of these 
lengths, as is done in the standard model).  
In this paper, we will consider two functions, the
inner product modulo 2,
$IP_2(\ba,\bb) = \bigoplus_i(a_i\cdot b_i)$ (where $\cdot$ denotes the multiplication over $\Ftwo$, which corresponds to the logical $AND$)
and the majority function, $MAJ(\ba,\bb) = MAJ(a_1\oplus b_1,\ldots, a_t\oplus b_t)$.

\subsection{Non-local box Complexity}

\begin{defn}[Non-local box]
A $\nlb$ is a device shared by two parties, which on one side takes Boolean input $x$ and immediately produces Boolean output $a$, and on the other side takes Boolean input $y$ and immediately produces Boolean output $b$, according to the following
distribution:
$\mathbf{p}_{NL}(a,b|x,y)=\left\{ 
	\begin{array}{ll} 
			\frac{1}{2}&\text{if } a\oplus b = x\cdot y\\
			0 & \text{otherwise}.
	\end{array}\right.$
\end{defn}

% We use $\oplus$ to denote both the logical $XOR$ and addition mod 2,
% and multiplication (over $\Ftwo$) is written $\cdot$, which
% corresponds to the logical $AND$. 

Let us stress the importance of timing in this definition. Indeed, Alice should receive her output $a$ from the box as soon as she has entered her input $x$, no matter if Bob has already entered his input or not (and vice-versa). This is possible because the input-output distribution is non-signaling, that is, the marginal distribution of Alice's input $a$ does not depend on Bob's input $y$, since $p(a|x,y)=1/2$ for any $a,x,y$. In other words, from Alice's point of view, $a$ is just an unbiased random bit. The reason for this definition is to mimic an EPR experiment, where Alice obtains her measurement outcome as soon as she performs her measurement, independently of whether Bob has performed his measurement or not.

We study a model akin to communication complexity, where Alice and Bob use non-local boxes instead of communication.
In a $\nlb$ protocol, Alice and Bob
wish to compute some function $f:\mathcal{X}\times \mathcal{Y}\rightarrow \{0,1\}$.
Alice gets an input $x\in\mathcal{X}$,
Bob gets an input $y\in \mathcal{Y}$,
and they have to compute $f(x,y)$ in parity.
Recall that it means that at the end
of the protocol, Alice outputs $a\in \{0,1\}$ and Bob
$b\in \{0,1\}$, such that $a\oplus b = f(x,y)$.
For a protocol $P$, we will write $P(x,y)=(a,b)$.
In the course of the protocol, Alice and Bob are allowed shared
randomness and may use $\nlbs$, but they may not communicate.  
Bob is not allowed to see
Alice's inputs to the $\nlbs$, nor does he see the outcome
on Alice's side, and likewise for Alice.

\begin{defn}
For any function $f:\mathcal{X}\times \mathcal{Y}\rightarrow \{0,1\}$,
$\NL(f)$ is the smallest $t$ such that there is a protocol that
computes $f$ in parity exactly, using $t$ $\nlbs$.
\end{defn}

We will label the \nlbs\ with labels from 1 to $t$. (Recall that in general, Alice and Bob are not required to use the $t$ \nlbs\ in the same order.)
We relax the exactness condition and allow the protocol's 
outcome
to be incorrect with constant probability~$\eps$.

\begin{defn}
For any function $f:\mathcal{X}\times \mathcal{Y}\rightarrow \{0,1\}$,
$\RNL(f)$ is the smallest $t$ such that there is a protocol $P$
using $t$ $\nlbs$, with $\Pr[P(x,y)=(a,b) \text{ with } a\oplus b=f(x,y) ] \geq 1-\eps$.
\end{defn}

We will also study two variants of the general model, where the \nlbs\ are used in a restricted manner. First, we assume that 
the \nlbs\ are used in parallel, that is, the
input to any $\nlb$ does not depend on the
outcome of any other. In this model, we denote the complexity
$\NLp$ in the exact case, and
$\RNLp$ in the $\eps$ error case.   

Second, we define the model where both players use the non-local boxes in the same order, that is, the
non-local boxes are labeled from $1$ to $t$ and Alice's input to the \nlb\ with label $i$ does not depend on the outputs from the \nlbs\ labeled from $i+1$ to~$t$ (similarly for Bob).
Note that in the most general case, Alice and Bob may use their $t$ \nlbs\ with labels $1$ through $t$ in whichever order they want. For example, Alice may use the $\nlb$ with label $3$ first, then use the output in order to compute the input for the $\nlb$ with label 1, while Bob might use the $\nlb$ with label 1 first and so forth.
% In this restricted model, we assume that Alice and Bob use their \nlbs\ in the same order, i.e. they both use the $\nlb$ with label 1 first, then the $\nlb$ with label 2 and so on. 
The complexity in this model is denoted $\NLo$ in the exact case, and
$\RNLo$ in the $\eps$ error case. It is clear that this model is more powerful than the parallel model but less powerful than the general \nlb\ complexity. In fact, we will only use this last variant when we talk about secure function evaluation.
\COMMENT{For the relation between communication complexity and \nlbs\ we will only study the general and parallel model.}
Note also that in all these models, the \nlbs\ are still non-signaling and Alice and Bob receive the outputs of the \nlbs\ immediately after they enter their inputs.  

Finally, we consider a restriction where the players always 
output the same predefined function $g$ of the outputs of the $\nlbs$.
Let $(a_1,b_1),\ldots, (a_t,b_t)$
be the outcomes of the $t$ \nlbs\ in some
particular run of a protocol.  
Of particular interest are protocols where Alice
outputs $a=a_1\oplus\cdots \oplus a_t$ and 
Bob outputs $b=b_1\oplus\cdots \oplus b_t$.
The function $g$ is used in a superscript to
denote the  complexity of a function $f$ in this model,
$\NL^g$ in the determinstic case,
and $\RNL^g$ in the $\eps$ error case,
and in particular, $\NLpx$ and $\RNLpx$
when the \nlbs\ are in parallel and $g=\oplus$.

\subsection{Secure Function Evaluation}
We will consider the following cryptographic primitives.
\begin{defn}[Oblivious transfer]
A 2-1 Oblivious Transfer (\OT) is a device which on input bits $p_0,p_1$ for Alice and $q$ for\ Bob, outputs bit $b$ to Bob, such that $b=p_q$.
\end{defn}
\begin{defn}[Secure \AND]
A secure \AND\ is a device which on input bits $p$ for Alice and $q$ for Bob, outputs bit $a$ to Alice, such that $a=p\cdot q$.
\end{defn}
While at first view, these definitions seem similar to the definition of the \nlb, note that the timing properties are different: for the cryptographic primitives, the outputs are produced only after all the inputs have been entered into the device. It is precisely this subtlety that has led to confusion when trying to use \nlbs\ to implement cryptographic primitives, in particular for bit commitment, when timing is particularly important, since a cheating Alice could wait until the reveal phase before committing her bit into the \nlb, without Bob ever realizing it~\cite{BCUWW}. However, we will see that this is not an issue for our results on secure computation.

% \begin{defn}[Secure \PAND]
% A secure \PAND\ is a device which on input bits $p$ for Alice and $q$ for Bob, outputs bits $a$ to Alice and $b$ to Bob, such that $a\oplus b=p\cdot q$ and $\Prob(a=0)=\Prob(b=0)=1/2$.
% \end{defn}
% 
% Note that the input-output distribution of the \PAND\ primitive is exactly the same as for the \nlb. However, It is precisely this subtlety that lead to confusion when trying to use \nlbs\ to implement cryptographic primitives, in particular for bit commitment, when timing is particularly important, since a cheating Alice could wait until the reveal phase before committing her bit into the \nlb, without Bob ever realizing it. This different notion of timing in the definitions also imply another key difference bewteen the \nlb\ and the \PAND\ primitive: while the \PAND\ primitive may be implemented under computational assumptions, it is known that a \nlb\ could never be simulated while satisfying these timing properties, even if we restrict the power of the players, because Tsirelson has shown that this would violate the axioms of quantum mechanics~\cite{tsirelson80}. It is also for this reason that in real life problems, such as secure computation, it is more interesting to focus on protocols using the \PAND\ primitive instead of the \nlb.
\COMMENT{
\begin{defn}[Secure \NAND]
A secure \NAND\ is a device which on input bits $p,a$ for Alice and $q$ for Bob, outputs bit $b$ to Bob, such that $b=a\oplus p\cdot q$.
\end{defn}

Note that a secure \NAND\ is similar to a secure \AND, except that Alice inputs an additional bit $a$ specifying whether Bob should receive $\AND(p,q)$ or $NAND(p,q)$.

\begin{defn}[Non-local box]
A \nlb\ (\NLB) is a device which on input bits $p$ for Alice and $q$ for Bob, outputs bits $a$ to Alice and $b$ to Bob, such that $a\oplus b=p\cdot q$ and $\Prob(a=0)=\Prob(b=0)=1/2$.
\end{defn}
}

Let $f:\X\times\Y\to\zo$ be a bipartite Boolean function.
We study the number of cryptographic primitives required to compute $f$.
In all the models we consider, we require perfect privacy. In the honest-but-curious model, perfect privacy means that when a player follows the protocol, he should not learn \emph{more than required} about the other player's input. In the malicious model, this condition must still hold even if the player does not follow the protocol. \emph{Not more than required} means, for models where the function must be computed in parity, that the players should learn nothing about the other's input, while for models where one of the player should output the function, it means that this player should learn nothing more than what he can infer from his input and the value of the function, while the other player should learn nothing.  

Let us note that \AND\ may not be used as a primitive in the malicious model, so we will consider the \OT\ primitive instead.
\COMMENT{
: indeed, a dishonest Alice may input $1$ in all \AND s, and she obtains Bob's input for free, which still allows her to compute the \AND. Therefore, for a dishonest Alice, each \AND\ is just equivalent to a bit of communication (Bob sends his input to Alice), and this does not allow for unconditional secure computation. For this reason, in the malicious model we consider the \OT\ primitive.}
Moreover, in
\COMMENT{the malicious}
this model, it is known that perfect privacy~\cite{dodismicali} cannot be achieved without randomness.
Therefore, in this setting
we do not consider the deterministic model.  Our bounds in the randomized malicious model also hold for the
weaker honest-but-curious model.
\COMMENT{We define}
%%\comment{*** NOTE ABOUT RANDOMIZED VS DETERMINISTIC IN OT VS AND.}
\begin{itemize}
 \item
$\AND(f)$: \COMMENT{to be the} number of secure AND gates required to securely compute $f(x,y)$ (\emph{not} in parity) in the deterministic, honest-but-curious model. We note that we can allow free two-way communication without in fact changing the complexity \cite{BM}.
\item
\COMMENT{Similarly,} $\OT_\eps(f)$: \COMMENT{is the} number of 2-1 Oblivious Transfer calls required to compute $f(x,y)$ in parity with perfect privacy and $\eps$ error over the players' private coins, assisted with (free) two-way communication, in the malicious model.
\end{itemize}

%\item $\AND_s^\com(f)$: same as $\AND_s(f)$, but assisted with (free) two-way communication.
%\item $\NL(f)$: number of \nlbs\ required to compute $f(x,y)$ in parity.
%\item $\NLp(f)$: number of \nlbs\ required to compute $f(x,y)$ in parity, where all \nlbs\ should be used in parallel.
%\item $\NL_s^{\pub,\com}(f)$: number of \nlbs\ required to securely compute $f(x,y)$ in parity, assisted with public coins and (free) two-way communication, in the malicious model.
%\item $\NL_s^{\priv,\com}(f)$: same as $\NL_s^{\pub,\com}(f)$, but with private coins instead of public coins.

\subsection{Complexity Measures}

We will compare \nlb\ complexity to 
traditional models of communication complexity and prove upper
and lower bounds for this new model. Some of these bounds are in terms of the factorization norms of the communication matrix~\cite{ls07} and related measures.  

\begin{definition}
Let $M$ be a real matrix. The $\gamma_2$ norm of $M$ is $\gamma_2(M) = \min_{X^T Y=M} col(X) col(Y)$,
where $col(N)$ is the largest Euclidian norm of a column of $N$.
\end{definition}

It is known that $2\log(\gamma_2(M))$ gives a lower bound on deterministic communication
complexity of $M$, where $M$ is a sign matrix of 
the Boolean function to be computed~\cite{ls07}. 
In order to lower bound the bounded error
communication complexity in the randomized and quantum case,
we consider a ``smoothed'' version of this measure.

\begin{definition}
Let $M$ be a sign matrix and $\alpha \geq 1$. 
$\gamma_2^\alpha (M) = \min \{\gamma_2(N): \forall i,j\ 1 {\leq} M_{i,j} N_{i,j} {\leq} \alpha\}$.
In particular, $\gamma_2^\infty(M)$ is the minimum $\gamma_2$ norm over all matrices $N$
such that $1 \leq M_{i,j} N_{i,j}$.
\end{definition}

The measures $\gamma_2^\alpha$ and $\gamma_2^\infty$ 
give upper and  lower bounds for bounded-error communication complexity~\cite{ls07}: 
$2\log (\galpha(f)/\alpha) \leq R_\eps(f)$  and $R^{||,MAJ}_\epsilon(f) \leq O((\ginf(f))^2)$ (implicit in \cite{ls07}), where
$\alpha=\frac{1}{1-2\eps}$. 

The discrepancy of a sign matrix M over inputs $X\times Y$ with
respect to distribution $\mu$ over the inputs 
is $\disc_\mu(M)=\max_R \sum_{(x,y)\in R} \mu(x,y)M(x,y)$, where
$R$ is taken from all possible rectangles.  
It is known that $\ginf(f) = \Theta (\frac{1}{\disc(f)})$, 
and for any $\alpha$, $\ginf(f) \leq \galpha(f)$~\cite{ls07}.

Finally, for a Boolean function,
the $L_1$ norm is defined as the sum of the absolute values of its Fourier coefficients. 
\removeforfsttcs{\begin{definition}
Let $f: \{-1,1\}^{2n} \rightarrow \{-1,1\}$, and denote by $\alpha_S$ the Fourier coefficients of $f$, that is
$f(x)=\sum_{S \subseteq \{0,1\}^{2n}} \alpha_S \chi_S(x)$ where $\chi_S(x)=\prod_{i\in S} x_i$.
The $L_1$ norm of $f$ is defined by $L_1(f)= \sum_S |\alpha_S|$.
\end{definition}}
We can think of the $2n$ bits of input of the function as equally split between Alice and Bob. Grolmusz uses this notion to upper bound the randomized communication complexity by proving that $R_\epsilon(f) \leq O(L_1^2(f))$ ~\cite{grolmusz}.

\section{Deterministic $\nlb$ complexity}

\subsection{Characterization of $\NLpx$ in terms of rank}
We start by studying a restricted model of $\nlb$ complexity, where the \nlbs\ are used in parallel and at the end of the protocol, Alice and Bob output the parity of the outputs of their \nlbs\ respectively. 
%For a Boolean function $f:\zo^n \times \zo^n \rightarrow \zo$, we denote it by $\NLpx(f)$. 
We will show that the complexity of $f$ in this model is equal to the rank of the communication matrix of $f$ over $\Ftwo$. It is known that this rank is equal to the minimum $m$, such that $f(x,y)$ can be written as $f(x,y) = \bigoplus_{i=1}^{m} a_i(x) \cdot b_i(y)$
(see also~\cite{bw99}).
\removeforfsttcs{

}
This restricted variant of $\nlb$ complexity is exactly the one that appears in van Dam's work~\cite{vandam05}, where he shows that any Boolean function $f$ can be computed by such a protocol of complexity $2^n$. Moreover, we prove that the restriction that the players output the XOR of the outcomes of the \nlbs\ is without loss of generality. 
\begin{theorem}\label{theorem:nlb-rank}
\COMMENT{For any Boolean function $f:\X\times\Y\to\zo$,} $\NLpx(f) = \rk(M_f) = D^{\parallel,IP_2}(f)$. 
\end{theorem}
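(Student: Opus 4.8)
The plan is to prove the chain of equalities $\NLpx(f) = \rk(M_f) = D^{\parallel,IP_2}(f)$ by establishing the two outer equalities and observing the middle one is essentially a reformulation.

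\textbf{Step 1: $\NLpx(f) \leq \rk(M_f)$.} Suppose $\rk(M_f) = m$, so by the stated fact we can write $f(x,y) = \bigoplus_{i=1}^{m} a_i(x) \cdot b_i(y)$. The protocol uses $m$ \nlbs\ in parallel: on \nlb\ $i$, Alice inputs $a_i(x)$ and Bob inputs $b_i(y)$, obtaining outputs $(\alpha_i,\beta_i)$ with $\alpha_i \oplus \beta_i = a_i(x)\cdot b_i(y)$. Alice outputs $\bigoplus_i \alpha_i$ and Bob outputs $\bigoplus_i \beta_i$; their XOR is $\bigoplus_i (a_i(x)\cdot b_i(y)) = f(x,y)$. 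Since no \nlb\ input depends on another's output, this is a valid parallel $\oplus$-protocol with $m$ \nlbs.

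\textbf{Step 2: $\NLpx(f) \geq \rk(M_f)$.} This is the main obstacle. Given any parallel protocol with $t$ \nlbs\ and shared randomness $r$ where Alice outputs $\bigoplus_i \alpha_i$ and Bob outputs $\bigoplus_i \beta_i$: because the boxes are used in parallel, Alice's input to \nlb\ $i$ is a function $a_i^r(x)$ depending only on $x$ and $r$ (not on outputs), and similarly Bob's is $b_i^r(y)$. Then $\alpha_i \oplus \beta_i = a_i^r(x)\cdot b_i^r(y)$ for each $i$, so the protocol's output satisfies $\bigoplus_i(\alpha_i\oplus\beta_i) = \bigoplus_{i=1}^t a_i^r(x)\cdot b_i^r(y)$. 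For the protocol to be correct, this must equal $f(x,y)$ for every $x,y$ and every $r$ in the support. Fixing any such $r$ exhibits $f$ as a sum of $t$ products $a_i^r(x)\cdot b_i^r(y)$ over $\Ftwo$, hence $\rk(M_f) \leq t$ by the stated characterization of $\Ftwo$-rank. (This also shows shared randomness does not help here, establishing that the $\oplus$ restriction combined with parallelism collapses to a deterministic decomposition; I should also note that allowing general output functions $g$ instead of $\oplus$ cannot reduce complexity below $\rk(M_f)$ either — but that "without loss of generality" claim may be deferred or handled by observing any $g$-protocol's behavior on the $2^t$ possible outcome-XOR patterns.)

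\textbf{Step 3: $\rk(M_f) = D^{\parallel,IP_2}(f)$.} In the $D^{\parallel,IP_2}$ model Alice sends $\ba(x)\in\zo^t$, Bob sends $\bb(y)\in\zo^t$, and the referee outputs $IP_2(\ba(x),\bb(y)) = \bigoplus_i a_i(x)\cdot b_i(y)$. A protocol of cost $t$ is exactly a decomposition $f(x,y) = \bigoplus_{i=1}^t a_i(x)\cdot b_i(y)$, so the minimal $t$ is precisely the minimum length of such a decomposition, which is $\rk(M_f)$ by the stated fact. This direction is immediate once the connection is spelled out. The only subtlety worth a sentence is that in the simultaneous-message model the messages may a priori depend on private randomness, but the same averaging argument as in Step 2 (fix a good $r$) shows randomness gives no advantage, so the deterministic value equals the $\Ftwo$-rank. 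Assembling Steps 1–3 gives the theorem.
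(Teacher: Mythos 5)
Your proposal is correct and follows essentially the same route as the paper: both directions of $\NLpx(f)=\rk(M_f)$ come from the correspondence between parallel $\oplus$-protocols and decompositions $f(x,y)=\bigoplus_i p_i(x)\cdot q_i(y)$, and the equality with $D^{\parallel,IP_2}(f)$ is obtained by translating such a decomposition into messages for the referee and back. The only difference is that you spell out the ``fix a good $r$'' step to dispose of shared randomness, which the paper's proof leaves implicit; your parenthetical aside about general output functions $g$ is not needed here and, as you note, is really the content of the separate Theorem~\ref{XOR}.
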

\begin{proof}
We start by showing that $\NLpx(f) \leq \rk(M_f)$. Let $\rk(M_f)=t$, \ie~$f(x,y)=\bigoplus_{i \in [t]} p_i(x) \cdot q_i(y)$. Then 
we construct a protocol that uses $t$ \nlbs\ in parallel, where Alice and Bob output the parity of the outcomes of the \nlbs\ and for every input $(x,y)$ the output of the protocol is equal to $f(x,y)$. The inputs of Alice and Bob to the $i$-th $\nlb$ are the bits $p_i(x)$ and $q_i(y), i\in [t]$ respectively and let $a_i, b_i$ the outputs of the $\nlb$ such that $a_i \oplus b_i = p_i(x) \cdot q_i(y)$. Alice and Bob output at the end of the protocol the value $(\bigoplus_{i \in [t]} a_i)\oplus(\bigoplus_{i \in [t]} b_i) =  \bigoplus_{i \in [t]} p_i(x) \cdot q_i(y) = f(x,y)$.

Conversely, if there exists a protocol where Alice and Bob use $t$ \nlbs\ in parallel with inputs $p_i(x),q_i(y)$ and outputs $a_i,b_i$, their final output is $(\bigoplus_{i \in [t]} a_i)\oplus(\bigoplus_{i \in [t]} b_i)$ and it always equals $f(x,y)$, then we have 
$f(x,y) = (\bigoplus_{i \in [t]} a_i)\oplus(\bigoplus_{i \in [t]} b_i) = \bigoplus_{i \in [t]} p_i(x) \cdot q_i(y)$ and hence $\rk(M_f) \leq t$.
\removeforfsttcs{

}
From this last argument, we get $D^{\parallel, IP_2}(f) \leq \NLpx(f)$ 
since the players can send $p_i$ and $q_i$ to the referee who computes
the inner product.  For the
converse, if the referee receives $m_A$, $m_B$ from each player and
computes their inner product mod 2, the players can instead input
each bit of the message into a non-local box and output the parity of
the outputs to obtain the same result.
\end{proof}

For the next corollary, we use the fact that %$\rk(M_f) \leq \rkR(M_f)$ and 
$\log(2\rkF(M_f)-1)\leq D(f)+1$ for any field $\mathbb{F}$ 
(see~\cite{KN97}). 
\removeforfsttcs{(The plus one on the right side of the inequality appears because in our model where the value of the function is distributed among the players, the communication complexity can be one bit less than in the standard model.)}

\begin{corollary}\label{cor:NLpx}
\COMMENT{For any Boolean function $f:\X\times\Y\to\zo$,} $\NLpx(f) \leq 2^{D(f)}$. 
\end{corollary}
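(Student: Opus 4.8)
The plan is to read off the result directly from Theorem~\ref{theorem:nlb-rank} together with the classical rank--communication bound quoted just before the corollary. Theorem~\ref{theorem:nlb-rank} gives $\NLpx(f) = \rk(M_f)$, the rank of the communication matrix over $\Ftwo$. The quoted fact (from~\cite{KN97}) states $\log\bigl(2\,\rkF(M_f)-1\bigr)\le D(f)+1$ for every field $\mathbb{F}$, in particular for $\mathbb{F}=\Ftwo$. So the proof is essentially one line of arithmetic.

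Concretely, I would first exponentiate the quoted inequality to get $2\,\rk(M_f)-1 \le 2^{D(f)+1}$, hence $\rk(M_f) \le 2^{D(f)} + \tfrac12$. Since $\rk(M_f)$ is an integer, this forces $\rk(M_f) \le 2^{D(f)}$, and then Theorem~\ref{theorem:nlb-rank} yields $\NLpx(f) = \rk(M_f) \le 2^{D(f)}$, as claimed.

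There is no real obstacle here; the only thing to be careful about is the constant-factor / off-by-one bookkeeping caused by our convention of computing $f$ ``in parity'' (where the communication cost can be one bit below the standard model). This is precisely what the $+1$ on the right-hand side of the cited inequality absorbs, so invoking that version of the bound makes the argument clean.

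As a sanity check (and an alternative self-contained route one could give instead), a deterministic protocol of cost $D(f)$ partitions $\X\times\Y$ into at most $2^{D(f)}$ monochromatic rectangles, and the $1$-rectangles $A_i\times B_i$ form a partition of $f^{-1}(1)$; since they are disjoint, $f(x,y)=\bigoplus_i \mathbf{1}_{A_i}(x)\cdot \mathbf{1}_{B_i}(y)$, which exhibits $f$ as an $\Ftwo$-bilinear form of inner dimension at most $2^{D(f)}$, i.e.\ $\rk(M_f)\le 2^{D(f)}$. I would keep the first argument as the proof since it reuses machinery already set up, and mention this combinatorial version only as a remark if space permits.
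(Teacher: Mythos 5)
Your argument is exactly the paper's intended one: combine the characterization $\NLpx(f) = \rk(M_f)$ from Theorem~\ref{theorem:nlb-rank} with the cited inequality $\log(2\rkF(M_f)-1)\leq D(f)+1$, then exponentiate and use integrality of the rank to clean up the $+\tfrac12$. The integrality observation and the combinatorial remark are nice touches, but the substance coincides with the paper's proof.
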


%In Theorem \ref{simulation}, we provide an explicit protocol for any boolean function $f$ with deterministic communication complexity $D(f)$ that uses $2^{D(f)}-1$ \nlbs.
On the other hand, 
it is easy to see that the one-way communication complexity is a lower bound on the $\nlb$ complexity. 
\COMMENT{Alice can send all her inputs to Bob, and since the non-local box protocol is always correct, they can simulate it, assuming that Alice received only zeros from all non-local boxes.}
\begin{lemma}
\COMMENT{For any Boolean function $f:\X\times\Y\to\zo$, }
$D^\rightarrow(f) \leq  NL(f)$.
\end{lemma}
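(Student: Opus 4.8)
The plan is to show that any non-local box protocol computing $f$ in parity can be converted, at the cost of one round of one-way communication, into a one-way protocol computing $f$ in parity, which gives $D^\rightarrow(f)\leq \NL(f)$. The key observation is the timing property of \nlbs: Alice can feed all her inputs into her $t$ \nlbs\ and obtain all her outputs \emph{before} Bob does anything, since the marginal distribution of her outputs does not depend on Bob's inputs. But there is a subtlety — in the general model the boxes may be used adaptively and Alice and Bob need not use them in the same order, so I need to be careful about what Alice can actually compute on her own.

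Here is the main idea in more detail. Fix a protocol $P$ using $t$ \nlbs\ that computes $f$ in parity exactly. Run the protocol on input $(x,y)$ where we additionally \emph{fix} all of Alice's outputs from the \nlbs\ to be $0$. This is a valid run with positive probability (each box outputs a uniformly random pair consistent with the AND constraint, so the event that Alice gets all zeros has probability $2^{-t}>0$). Conditioned on this event, Alice's behavior is a deterministic function of $x$ alone (and the shared randomness): her input to each box she uses is determined by $x$, the shared randomness, and the outputs she has seen so far, all of which are $0$; her final output bit $a$ is likewise a function of $x$ and the shared randomness. So Alice can compute, locally, the entire sequence $p_1,\dots,p_t$ of inputs she \emph{would} feed to the boxes in this conditioned run, as well as her output bit $a$. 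She sends all of this ($t+1$ bits, or using public randomness just the relevant information) to Bob.

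Now Bob simulates the rest of the protocol by himself: whenever his protocol calls for him to use the box with label $i$, he uses the input $q_i$ that the protocol prescribes (a function of $y$, the shared randomness, and his own previously-seen box outputs), and he knows that on the other side the box was fed $p_i$ and returned $0$ to Alice; hence the box returns $b_i = p_i\cdot q_i$ to Bob. Inductively Bob can reconstruct every $b_i$ and therefore reproduce his own output bit $b$ of the protocol. Since the run is a legal run of $P$, we have $a\oplus b = f(x,y)$, so Bob outputs $b$ and Alice has output $a$; this is a one-way protocol computing $f$ in parity with communication at most $\NL(f)+O(1)$. The $+O(1)$ is absorbed because our communication complexity model already allows the function value to be split among the players and differs from the standard model by a bit; if one wants the clean inequality $D^\rightarrow(f)\le\NL(f)$ one notes Alice need not send her output bit $a$ separately — she can fold it in, or one simply observes that sending the $p_i$'s is $t$ bits and Bob can be the one to output the function.

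The main obstacle is handling the adaptive, possibly-different-order use of the boxes correctly: one must check that conditioning on "all of Alice's outputs are $0$" is consistent and does not secretly constrain Bob, and that Bob really can recover all the $b_i$ in the order \emph{he} needs them. This works precisely because of non-signaling and the fact that once Alice's output of box $i$ is pinned to $0$, the box's behavior toward Bob is the deterministic map $q_i\mapsto p_i\cdot q_i$ with $p_i$ already known to Bob from Alice's message; Bob's order of use is then irrelevant. With shared randomness this is clean; with private randomness one invokes Newman-style arguments, but the statement as given assumes shared randomness is available in the \nlb\ model, so no extra cost is incurred.
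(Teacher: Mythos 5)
Your proof is correct and takes essentially the same approach as the paper's: fix Alice's \nlb\ outputs to all zeros (a positive-probability event, so legal by exactness), which makes her inputs $p_1,\dots,p_t$ a deterministic function of $x$; she sends these $t$ bits one-way to Bob, who can then reconstruct his own box outputs $b_i = p_i\cdot q_i$ and complete the simulation. You just spell out more carefully the adaptivity and timing issues that the paper leaves implicit.
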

\begin{proof}
For any deterministic $\nlb$ protocol of complexity $t$, Alice can send her $t$ inputs to the \nlbs\ to Bob, and  since the protocol is always correct, in particular it is correct if both players assume that the output of Alice's \nlbs\ are 0, Alice can output using this assumption.  Bob can then compute his outputs of the \nlbs\ and complete the simulation of the protocol. This shows that the one-way communication complexity is at most $t$. 
\end{proof}
Notice that similarly to the traditional model, this implies an upper bound on the simultaneous messages model when computing in parity as well since for deterministic communication complexity, $D^\parallel(f)\leq D^\rightarrow(f) + D^\leftarrow(f) + 2$.  To see this, it suffices to see that Alice's message plus her output, together with Bob's message plus his output, determine a monochromatic rectangle in the communication matrix.

\subsection{Removing the XOR restriction}
In this section we show that both in the general and in the parallel model of deterministic $\nlb$ complexity, we can assume without loss of generality that the players output the XOR of the outcomes of the $\nlbs$.

\begin{theorem}
\COMMENT{For any Boolean function $f:\{0,1\}^n \times \{0,1\}^n \rightarrow \{0,1\}, \;\;\;\;\;$ } $NL(f) \leq \NLx(f) \leq NL(f)+2$. 
\end{theorem}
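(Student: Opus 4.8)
The plan is to prove the two inequalities separately, with the left inequality $NL(f) \le \NLx(f)$ being essentially immediate and the right inequality $\NLx(f) \le NL(f)+2$ being the substantive part. For the left side, observe that a protocol in the $\NLx$ model (where the outputs are forced to be the XOR of the $\nlb$ outcomes) is in particular a valid protocol in the unrestricted $NL$ model, since $NL$ allows the players to output any function of their transcript; hence $NL(f) \le \NLx(f)$ trivially. There is nothing to do here beyond stating it.

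For the right inequality, I would start from an optimal deterministic $\nlb$ protocol $P$ using $t = NL(f)$ boxes, in which at the end Alice outputs some function $a = \alpha(x, a_1,\dots,a_t)$ of her input and her $\nlb$ outcomes, and Bob outputs $b = \beta(y, b_1,\dots,b_t)$, with $a \oplus b = f(x,y)$ always. The goal is to simulate this with a protocol using $t+2$ boxes in which the final outputs are literally the parities of the box outcomes. The natural idea is to use two extra $\nlbs$ to "absorb" the corrections $\alpha$ and $\beta$: after running the original $t$ boxes, Alice knows the entire quantity $a \oplus (a_1 \oplus \cdots \oplus a_t)$ locally (it depends only on $x$ and her outcomes), and similarly Bob knows $b \oplus (b_1 \oplus \cdots \oplus b_t)$ locally. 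Call these local bits $a'$ and $b'$ respectively; then $a' \oplus b' = (a\oplus b) \oplus \bigoplus_i(a_i\oplus b_i) = f(x,y) \oplus \bigoplus_i (a_i \oplus b_i)$, which Alice and Bob need to cancel out using two more parallel boxes. Feeding inputs $(a', 1)$ to one extra box and $(1, b')$ to another, or more cleanly feeding $a'$ and $b'$ as inputs to a single box with the other party inputting $1$, produces outcomes whose XOR equals $a' \cdot 1 = a'$ (resp. $1 \cdot b' = b'$); so with two such boxes the combined parity of all $t+2$ boxes' outcomes becomes $\bigl(\bigoplus_i (a_i\oplus b_i)\bigr) \oplus a' \oplus b' = f(x,y)$, as desired. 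One must be slightly careful that a box with a constant input $1$ on one side is legitimate — it is, since constants are allowed inputs — and that this preserves whatever structural restriction is in force.

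The main obstacle, and the point requiring genuine care, is the order/parallelism structure: the theorem as stated concerns $NL$ (the fully general model, boxes in arbitrary order) versus $\NLx$ (general model, XOR output), so I need to check that the two extra boxes can be inserted at the very end without violating the adaptivity structure — which is fine, since their inputs $a', b'$ depend only on already-obtained outcomes, so they come "last" in any valid ordering. If instead one wanted the parallel statement $\NLp(f) \le \NLpx(f) \le \NLp(f)+2$, the same construction works because in a parallel protocol Alice's correction $\alpha$ depends only on $x$ and outcomes $a_1,\dots,a_t$ that are already determined independently of the new boxes' inputs, so the two extra boxes are still usable in parallel with the originals; I would remark on this. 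A secondary point to verify is that uniformity/non-signaling of the extra boxes causes no leakage or correctness issue: each extra box still outputs a uniform bit to each side with the XOR constraint, so the simulation is exact and deterministic in the required sense. I expect the whole argument to be short once the bookkeeping of "local correction bits" is set up cleanly; the only thing to get right is the precise claim that $a \oplus \bigoplus_i a_i$ is computable by Alice alone (true, since it is a function of $x$ and $a_1,\dots,a_t$) and likewise for Bob, which is exactly what makes the two-box fix possible.
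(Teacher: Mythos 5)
Your proof of the stated theorem is correct and takes essentially the same approach as the paper: define Alice's local correction bit $a' = A(x,a_1,\ldots,a_t) \oplus \bigoplus_{i\in[t]} a_i$, feed $(a',1)$ into one extra box and symmetrically $(1,b')$ into a second, and observe that the XOR of all $t+2$ outcomes on each side telescopes to $A\oplus B = f(x,y)$. This is exactly the paper's construction.

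One correction to your aside, though: your claim that ``the same construction works'' for the parallel statement $\NLp(f)\leq\NLpx(f)\leq\NLp(f)+2$ is false. The inputs $a'$ and $b'$ to the two extra boxes depend on the outcomes $a_1,\ldots,a_t$ and $b_1,\ldots,b_t$ of the original boxes, so inserting them makes the protocol adaptive rather than parallel --- in the parallel model, by definition, the input to any box must be independent of the outcome of every other box. This is precisely why the paper treats the parallel case (Theorem~\ref{XOR}) as a separate, substantive result requiring Lemma~\ref{PQR}, a linear-independence normalization (Lemma~\ref{lem:independence}), and an induction over subsets to show that all nonlinear terms in Alice's and Bob's output polynomials must vanish. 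The easy two-box fix simply does not apply there.
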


\begin{proof}
Let $P$ any protocol that uses $t$ \nlbs\ and at the end Alice outputs $A(x,\ba)$ and Bob outputs $B(y,\bb)$, where $\ba=(a_1,\ldots,a_t)$ and $\bb=(b_1,\ldots,b_t)$ are Alice and Bob's $\nlb$ outputs. Instead of outputting these values they use another two \nlbs\ with inputs $(A(x,\ba)\oplus (\bigoplus_{i \in [t]} a_i) ,1), (1,B(y,\bb)\oplus (\bigoplus_{i \in [t]} b_i))$. Denote by $(a_{t+1}, b_{t+1}), (a_{t+2}, b_{t+2})$ the outputs of the $\nlbs$. 
We have 
\begin{eqnarray*}
a_{t+1} \oplus b_{t+1} = A(x,\ba) \oplus \bigoplus_{i \in [t]} a_i &,&
a_{t+2} \oplus b_{t+2} = B(y,\bb)\oplus \bigoplus_{i \in [t]} b_i 
\end{eqnarray*}
Finally,
\begin{eqnarray*}
(\bigoplus_{i \in [t+2]} a_i) \oplus (\bigoplus_{i \in [t+2]} b_i)
& = & (\bigoplus_{i \in [t]} a_i) \oplus (\bigoplus_{i \in [t]} b_i) \oplus A(x,\ba) \\
&  & \oplus  B(y,\bb) \oplus (\bigoplus_{i \in [t]} a_i) \oplus (\bigoplus_{i \in [t]} b_i) \\
& = & A(x,\ba) \oplus B(y,\bb).
\end{eqnarray*}
\end{proof}

Unlike the general case, showing that in the parallel case we can assume that the players output the XOR of the outputs of the \nlbs\ is not a trivial statement.

\begin{theorem}\label{XOR}
\COMMENT{For any Boolean function $f:\{0,1\}^n \times \{0,1\}^n \rightarrow \{0,1\}$, we have} $\NLp(f) \leq \NLpx(f) \leq \NLp(f)+2$.
\end{theorem}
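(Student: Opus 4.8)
The plan is to show that any parallel \nlb\ protocol where Alice and Bob compute arbitrary functions $A(x,\ba)$ and $B(y,\bb)$ of their \nlb-outputs can be converted, at the cost of two extra parallel \nlbs, into one where they output the XOR of all the outputs. The left inequality $\NLp(f)\le\NLpx(f)$ is immediate since an XOR-protocol is a special case of a general parallel protocol. For the right inequality, the naive idea from the previous theorem — append two \nlbs\ to ``correct'' the output — fails because those two correction \nlbs\ would need the \emph{outputs} $\ba,\bb$ of the first $t$ \nlbs\ as inputs, violating parallelism. So the key obstacle is precisely that the output functions $A,B$ depend nonlinearly on data ($\ba$ and $\bb$) that is only available after the parallel round.

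First I would observe that in a parallel protocol, Alice's input to \nlb\ $i$ is a fixed function $p_i(x)$ of her input alone (and likewise $q_i(y)$ for Bob), and the joint distribution of $(\ba,\bb)$ is: each $a_i$ is a uniform bit shared publicly (via the shared randomness, we may think of the \nlb\ as Alice drawing $a_i$ from her local randomness — wait, no, the \nlb\ supplies its own randomness), with $b_i = a_i \oplus p_i(x)q_i(y)$. The correctness guarantee is that $A(x,\ba)\oplus B(y,\bb) = f(x,y)$ for \emph{every} fixing of the \nlb\ internal randomness. I would exploit this: fix any reference value, say $\ba = \mathbf{0}$; then for that branch $b_i = p_i(x)q_i(y)$, so $A(x,\mathbf{0})\oplus B(y,p_1(x)q_1(y),\dots,p_t(x)q_t(y)) = f(x,y)$. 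More usefully, since correctness holds for \emph{all} $\ba$, the difference $A(x,\ba)\oplus A(x,\mathbf{0})$ must be cancelled by $B(y,\bb)\oplus B(y,\ba\oplus\mathbf{s}(x,y))$ where $\mathbf{s}(x,y)_i = p_i(x)q_i(y)$; because this holds identically in $\ba$, $A(x,\cdot)$ restricted to the relevant fibers behaves additively, and one shows $A(x,\ba) = A(x,\mathbf 0)\oplus \ell_x(\ba)$ for some \emph{linear} function $\ell_x$ of $\ba$ on the support, and symmetrically for $B$. This linearization is the heart of the argument.

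Once $A$ and $B$ are affine in $\ba$, $\bb$ respectively, write $A(x,\ba) = \alpha(x) \oplus \bigoplus_i c_i(x) a_i$ and $B(y,\bb) = \beta(y)\oplus\bigoplus_i d_i(y) b_i$. Correctness $A\oplus B = f$ for all \nlb\ outcomes, together with $a_i\oplus b_i = p_i(x)q_i(y)$, forces $c_i(x) = d_i(y)$ for every $i$ (so each coefficient is a constant $c_i \in\zo$ independent of both inputs), and $\alpha(x)\oplus\beta(y)\oplus\bigoplus_{i:c_i=1}p_i(x)q_i(y) = f(x,y)$. Now build the new protocol: keep only the \nlbs\ with $c_i=1$ (drop the others — their outputs were irrelevant), which gives an XOR-protocol for $g(x,y) := f(x,y)\oplus\alpha(x)\oplus\beta(y)$ using at most $t$ parallel \nlbs. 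Finally, use two more parallel \nlbs\ with inputs $(\alpha(x),1)$ and $(1,\beta(y))$ to produce output pairs summing to $\alpha(x)$ and $\beta(y)$; XORing all the outputs yields $g(x,y)\oplus\alpha(x)\oplus\beta(y) = f(x,y)$, and all $t+2$ \nlbs\ are used in parallel. This gives $\NLpx(f)\le\NLp(f)+2$.

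The main obstacle, as noted, is the linearization step: proving that an arbitrary Boolean output function $A(x,\ba)$ that participates in an always-correct parallel protocol must in fact be affine in $\ba$ (on the support of the \nlb\ outcomes). I would prove this by a direct XOR-differencing argument: for fixed $x,y$, vary a single $a_j$ holding the rest fixed; by correctness, the induced change in $A$ equals the induced change in $B$ under the correlated flip $b_j \mapsto b_j\oplus 1$, and iterating over subsets of coordinates and using that the \nlb\ outputs range over all of $\zo^t$ (with $\bb$ determined by $\ba$) pins down all ``second differences'' of $A$ to zero, which is exactly affineness. One should double-check the edge case where some $p_i(x)$ or $q_i(y)$ is identically constant, but this only restricts the reachable $(\ba,\bb)$ pairs in a benign way and does not break the conclusion.
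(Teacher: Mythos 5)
Your overall plan — reduce to affine outputs, extract constant linear coefficients $c_i$, and pay two extra parallel \nlbs\ for the local terms $\alpha(x),\beta(y)$ — is structurally the same as the paper's. But there is a real gap in the linearization step, and it is not confined to the ``edge case'' you dismiss as benign. The paper first invokes Lemma~\ref{lem:independence}, which prunes the protocol so that the products $p_i(x)q_i(y)$ are \emph{linearly independent} as bipartite functions: no nonempty subset of them XORs to a function of the form $\alpha(x)\oplus\beta(y)$. Only under this hypothesis is it true that the higher-order coefficients of $A$ and $B$ vanish, and even then one needs the downward induction of Claim~\ref{claim:XOR-proof-2}, not a bare differencing argument. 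Your XOR-differencing, worked out carefully, yields exactly Claim~\ref{claim:XOR-proof-1} — that the $T$-fold discrete derivatives of $A$ and $B$ agree — which by itself says nothing about those derivatives being zero.

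To see that linearization really can fail without the independence reduction, take two parallel \nlbs\ with $p_1 = p_2 = 1$ and $q_1 = q_2 = q(y)$, so $a_1\oplus b_1 = a_2\oplus b_2 = q(y)$. Set $A(x,\ba) = a_1 a_2$ and $B(y,\bb) = b_1 b_2 \oplus q(y)\,(b_1\oplus b_2\oplus 1)$. One checks $A\oplus B\equiv 0$, so this is a valid parallel protocol for the zero function, yet Alice's output is genuinely quadratic: its second difference in $(a_1,a_2)$ is identically $1$. This is precisely the linearly dependent situation ($p_1 q_1\oplus p_2 q_2\equiv 0 = \alpha(x)\oplus\beta(y)$) that Lemma~\ref{lem:independence} removes first, and it refutes the claim that constant inputs ``do not break the conclusion.'' Your argument becomes sound once you (a) first apply the independence reduction, which only decreases the number of boxes, and (b) replace the informal ``second differences vanish'' assertion with the derivative identity of Claim~\ref{claim:XOR-proof-1} followed by the linear-independence-driven downward induction of Claim~\ref{claim:XOR-proof-2}.
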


We proceed by providing two lemmas before proving our theorem.
\begin{lemma}
\label{PQR}
Let $a,b$ the outcomes of a $\nlb$ and $F,G,H$ arbitrary Boolean coefficients that do not depend on $a,b$. 
If for all $a$, ($F\cdot a) \oplus (G \cdot b) \oplus H=0$, then $F=G$.
\end{lemma}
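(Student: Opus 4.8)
The plan is to exploit the defining relation of the $\nlb$, which pins $b$ down as a deterministic function of $a$ once the box's inputs are fixed, together with the distributivity of $\cdot$ over $\oplus$ in $\Ftwo$.

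First I would observe that, since $F$, $G$ and $H$ do not depend on $a$ or $b$, we may regard all the data determining the two bits that are fed into this particular $\nlb$ as fixed; write $c$ for the logical AND of those two input bits, so $c\in\zo$ is a constant. By the definition of a $\nlb$ we then have $b = a\oplus c$, and as $a$ ranges over $\zo$ both values occur (each with probability $\frac12$). Substituting, and using $G\cdot(a\oplus c) = (G\cdot a)\oplus(G\cdot c)$ together with $(F\cdot a)\oplus(G\cdot a) = (F\oplus G)\cdot a$, the hypothesis becomes
\[
\bigl((F\oplus G)\cdot a\bigr)\oplus(G\cdot c)\oplus H = 0 \qquad\text{for } a\in\zo .
\]

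Next I would simply read off the two instances $a=0$ and $a=1$, namely $(G\cdot c)\oplus H = 0$ and $(F\oplus G)\oplus(G\cdot c)\oplus H = 0$, and take their XOR: the common term $(G\cdot c)\oplus H$ cancels and what remains is $F\oplus G = 0$, i.e.\ $F=G$.

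I do not expect any real obstacle here. The only point worth stating explicitly is that the hypothesis ``for all $a$'' really does supply two distinct linear constraints over $\Ftwo$ — this is precisely where the uniformity (non-signaling property) of the $\nlb$ output is used, since it guarantees that both values of $a$ genuinely occur — after which the conclusion is a one-line calculation.
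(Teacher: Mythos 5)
Your proof is correct and follows essentially the same route as the paper's: both substitute $b = a \oplus c$ (with $c = p\cdot q$ the AND of the box's inputs), evaluate the hypothesis at $a=0$ and $a=1$, and XOR the two resulting constraints so the $(G\cdot c)\oplus H$ term cancels, leaving $F=G$. Your version is slightly more explicit about the intermediate algebraic rearrangement, but the substance is identical.
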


\begin{proof}
Denote by $p,q$ the inputs to the $\nlb$. By setting $a=0$ and $a=1$, we have $(G \cdot p \cdot q) \oplus H=0$ and $F \oplus G \oplus (G \cdot p \cdot q )\oplus H=0$. This implies $F=G$.
\end{proof}

We now fix some notation. Let $f:\{0,1\}^n \times \{0,1\}^n \rightarrow \{0,1\}$ and $P$ a protocol that computes $f$ with zero error and uses $t$ \nlbs\ in parallel.
Let $p_i(x)$, $q_i(y)$ the inputs to the $i$-th $\nlb$ and $a_i$, $b_i$ the corresponding outputs. We also note $\ba=(a_1,\ldots,a_t)$  and $\bb=(b_1,\ldots,b_t)$. Let $A(x,\ba)= \bigoplus_{S\subseteq [t]} A_S(x)\cdot  a_S$ and
$B(y,\bb)= \bigoplus_{S\subseteq [t]} B_S(y)\cdot b_S$ the final outputs of Alice and Bob, where $A_S$ are polynomials in $x$, $B_S$ polynomials in $y$, $a_S=\prod_{i\in S}a_i$ and $b_S=\prod_{i\in S}b_i$. Then, from the correctness of the protocol, we have
\[
\forall (x,y,\ba), f(x,y)=A(x,\ba) \oplus B(y, \bb).
\]

We show that, without loss of generality, we may assume that the inputs to the \nlbs\ satisfy some linear independence condition.

\begin{definition}
A set of bipartite functions $\{f_i(x,y)|i\in T\}$ is linearly independent if $\bigoplus_{i\in T} C_i\cdot f_i(x,y)=\alpha(x)\oplus\beta(y)$, for some $C_i\in\{0,1\}$ and functions $\alpha(x),\beta(y)$, implies $C_i=0\ \forall i\in T$ and $\alpha(x)=\beta(y)$.
\end{definition}

\begin{lemma}
\label{lem:independence}
Let $P$ be a protocol for $f$ using $t$ \nlbs\ in parallel. Then there exists another protocol whose output is always equal to the one of $P$, uses $t'\leq t$ \nlbs\ in parallel and the inputs to the \nlbs\ are such that the set $\{p_i(x) \cdot q_i(y)| i\in [t'] \}$ is linearly independent.
\end{lemma}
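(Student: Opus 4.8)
The plan is to delete the non-local boxes one at a time, as long as a linear dependence exists, trading each deleted box for one fresh shared random bit (but no communication and no new box). After finitely many such steps the surviving family of inputs must be linearly independent, which proves the lemma with $t'\le t$.

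Concretely, suppose $\{p_i(x)\cdot q_i(y)\mid i\in[t]\}$ is \emph{not} linearly independent. By definition this yields coefficients $C_i\in\{0,1\}$, not all zero, and functions $\alpha(x),\beta(y)$ with $\bigoplus_{i\in[t]}C_i\,p_i(x)q_i(y)=\alpha(x)\oplus\beta(y)$. Fix $j$ with $C_j=1$ and set $S'=\{i\neq j:C_i=1\}$; using $a_i\oplus b_i=p_i(x)q_i(y)$ for the remaining boxes this rearranges to
\[
p_j(x)q_j(y)=\alpha(x)\oplus\beta(y)\oplus\bigoplus_{i\in S'}(a_i\oplus b_i).
\]
I would now discard box $j$ and have the players draw one fresh shared random bit $r$, with Alice forming $a_j'=r\oplus\alpha(x)\oplus\bigoplus_{i\in S'}a_i$ and Bob forming $b_j'=r\oplus\beta(y)\oplus\bigoplus_{i\in S'}b_i$; the displayed identity gives $a_j'\oplus b_j'=p_j(x)q_j(y)$, and also $b_j'=a_j'\oplus p_j(x)q_j(y)$. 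Substituting $a_j'$ for $a_j$ (resp.\ $b_j'$ for $b_j$) in Alice's (resp.\ Bob's) output function produces a parallel protocol using only the $t-1$ boxes $\{i\neq j\}$. Iterating strictly decreases the box count, so it terminates at some $t'\le t$ for which the surviving family is linearly independent; this is legitimate since the $\NLp$ model permits shared randomness.

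The one point that needs care — and the only real obstacle — is to verify that the new protocol has the \emph{same output distribution on every input}, so that it still computes $f$ in parity with zero error. In $P$, on input $(x,y)$, Alice's box outputs $(a_1,\dots,a_t)$ are i.i.d.\ uniform and independent of the shared randomness $R$ and of $(x,y)$, with $b_i=a_i\oplus p_i(x)q_i(y)$. In the new protocol $\{a_i\}_{i\neq j}$ is still i.i.d.\ uniform, and because $r$ is a \emph{fresh} uniform bit, $a_j'=r\oplus(\text{a function of }x\text{ and }\{a_i\}_{i\neq j})$ is uniform and independent of $(\{a_i\}_{i\neq j},x,y,R)$. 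Hence $(a_1,\dots,a_{j-1},a_j',a_{j+1},\dots,a_t)$ has the same joint law as $(a_1,\dots,a_t)$, and $b_j'=a_j'\oplus p_j(x)q_j(y)$ plays the role of $b_j$; so the joint distribution of the two players' final outputs is unchanged. (This is exactly why a fresh shared bit is required: without $r$, the bit $a_j'$ would be a deterministic XOR of the other $a_i$'s, collapsing the cube of box outcomes onto a proper affine subspace and possibly altering the output distribution of Alice's function.)
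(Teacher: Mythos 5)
Your proof is correct and follows the same iterative elimination scheme as the paper's: find a linear dependence, express $p_j(x)q_j(y)$ in terms of the remaining boxes plus local terms, have the players synthesize the $j$-th outcomes locally, and recurse until the family is independent. The one place you diverge is the fresh shared random bit $r$, which you assert is \emph{required}. It is not. The paper simply sets $a_j=\alpha(x)\oplus\bigoplus_{i\in S'}C_i a_i$ and $b_j=\beta(y)\oplus\bigoplus_{i\in S'}C_i b_i$ with no randomization, and this already works: the original protocol has \emph{zero error}, i.e.\ $A(x,\ba)\oplus B(y,\bb)=f(x,y)$ holds for \emph{every} outcome vector $\ba$ (with $\bb$ determined by $\ba$ and the box inputs), not merely with probability $1$ over the box randomness. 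Restricting $\ba$ to an affine subspace by making $a_j$ a deterministic function of the other $a_i$'s and $x$ therefore cannot break correctness, since $a_j\oplus b_j=p_j(x)q_j(y)$ is still satisfied and the pair $(\ba,\bb)$ remains a legal transcript on which $P$ is guaranteed to be right. What your extra bit $r$ buys is that the \emph{joint distribution} of $(a_1,\dots,a_t)$ is exactly reproduced, but the lemma only promises that the new protocol's output (the parity $a\oplus b=f(x,y)$) always agrees with $P$'s, not that the marginal distribution of each player's private output is preserved; nor is such distributional equality used in the subsequent proof of Theorem~\ref{XOR}, which re-derives the output polynomials for the reduced protocol from scratch. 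So your argument is sound but slightly over-engineered, and the sentence claiming the fresh bit is necessary should be dropped or softened.
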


\begin{proof}
Suppose that $\bigoplus_{i\in [t]} C_i\cdot p_i(x)\cdot q_i(y)=\alpha(x)\oplus\beta(y)$, with $C_k=1$ for some $k\in [t]$. Then
$p_k(x) \cdot q_k(y)=\alpha(x)\oplus\beta(y)\oplus \bigoplus_{i\in [t]\setminus\{k\}} C_i\cdot p_i(x)\cdot q_i(y)$.
Since $p_k(x) \cdot q_k(y)=a_k \oplus b_k$, Alice and Bob do not need to use the $k$-th $\nlb$ when implementing protocol $P$, it suffices for Alice to set $a_k=\alpha(x)\oplus \bigoplus_{i\in [t]\setminus\{k\}} C_i\cdot a_i$ and for Bob to set $b_k=\beta(y)\oplus \bigoplus_{i\in [t]\setminus\{k\}} C_i\cdot b_i$, which implies a new protocol with $t-1$ $\nlbs$. By repeating this procedure, they can build a protocol using $t'\leq t$ \nlbs\ and such that the whole set $\{p_i(x) \cdot q_i(y)| i\in [t'] \}$ is linearly independent.
\end{proof}

\begin{proof}[Proof of Theorem~\ref{XOR}]
Since by definition $\NLp(f) \leq \NLpx(f)$, it suffices to show that $\NLpx(f) \leq \NLp(f)+2$. Let $\NLp(f)=t$ and let $P$ be a deterministic protocol for $f$, using $t$ $\nlbs$. Let $A(x,\ba)= A_{\emptyset}(x)\oplus \bigoplus_{S\subseteq [t]} A_S(x)\cdot  a_S$
and $B(y,\bb)= B_{\emptyset}(y)\oplus \bigoplus_{S\subseteq [t]} B_S(y)\cdot b_S$ the outputs of Alice and Bob respectively, where the subsets $S$ are non-empty. First, in order to simulate the two local terms $A_{\emptyset}(x)$ and $B_{\emptyset}(y)$, Alice and Bob can use two \nlbs\ with inputs $(A_{\emptyset}(x),1)$ and $(1,B_{\emptyset}(y))$. For the rest of the proof, all the subsets we consider are non-empty. We proceed by proving two claims about the outputs of the protocol.

\begin{claim}\label{claim:XOR-proof-1}
For all $(x,y,\ba)$ and for all $T\subseteq[t],\;\;\;$
$
\bigoplus_{S: T \subseteq S} A_S(x)\cdot a_{S \setminus T} = \bigoplus_{S: T \subseteq S} B_S(y)\cdot b_{S \setminus T}.
$
\end{claim}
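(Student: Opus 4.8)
The plan is to obtain the identity by ``differentiating'' the correctness condition $f(x,y)=A(x,\ba)\oplus B(y,\bb)$ with respect to the \nlb\ outputs indexed by $T$. I will treat the case $T\neq\emptyset$, which is the only one used afterwards. Fix the inputs $(x,y)$; then, for every $\ba\in\zo^t$, correctness gives $f(x,y)=A(x,\ba)\oplus B(y,\bb)$, where $b_i=a_i\oplus c_i$ with $c_i:=p_i(x)\cdot q_i(y)$ fixed. I would regard both sides as $\Ftwo$-polynomials in the variables $\ba$ and apply the discrete partial derivative $\partial_T$: that is, XOR the expression over all $2^{|T|}$ assignments to the variables $\{a_i:i\in T\}$, leaving the remaining coordinates fixed. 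Since $\partial_T$ is linear over $\Ftwo$, it can be evaluated term by term.

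First I would handle the left side: $f(x,y)$ does not depend on $\ba$, so $\partial_T$ XORs $2^{|T|}$ copies (an even number, as $T\neq\emptyset$) of the same bit, yielding $0$. Next, $\partial_T$ applied to $A(x,\ba)=\bigoplus_{S\subseteq[t]}A_S(x)\,a_S$ gives, by the elementary rule for multilinear polynomials over $\Ftwo$, exactly $\bigoplus_{S:\,T\subseteq S}A_S(x)\,a_{S\setminus T}$. The one slightly delicate point is $B(y,\bb)=\bigoplus_{S\subseteq[t]}B_S(y)\,b_S$: here I am differentiating in the $a_i$'s but $B$ is written in the $b_i$'s. Because $b_i=a_i\oplus c_i$ and $c_i$ is fixed, flipping $a_i$ flips $b_i$, and $a_i\mapsto a_i\oplus c_i$ is a bijection of $\zo$; hence XORing over all assignments to $\{a_i:i\in T\}$ coincides with XORing over all assignments to $\{b_i:i\in T\}$. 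Applying the same multilinear rule in the $b$-variables then gives $\partial_T B(y,\bb)=\bigoplus_{S:\,T\subseteq S}B_S(y)\,b_{S\setminus T}$.

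Combining the three evaluations, $0=\partial_T\bigl(A(x,\ba)\oplus B(y,\bb)\bigr)=\bigoplus_{S:\,T\subseteq S}A_S(x)\,a_{S\setminus T}\ \oplus\ \bigoplus_{S:\,T\subseteq S}B_S(y)\,b_{S\setminus T}$, which over $\Ftwo$ is precisely the asserted equality; and since all coordinates outside $T$ were fixed but arbitrary, it holds for every $(x,y,\ba)$, matching the quantifiers in the statement. I expect the main obstacle to be purely bookkeeping: making it fully rigorous that differentiating $B$ in the $a$-variables returns $\bigoplus_{S\supseteq T}B_S(y)\,b_{S\setminus T}$ and not some expression entangling the $a$'s and $b$'s. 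If one prefers to avoid the derivative language, the same result follows by an induction on $|T|$ that peels off one index $i\in T$ at a time from the correctness identity: fixing all other output variables and invoking Lemma~\ref{PQR} with the coefficient of $a_i$ in the $A$-part playing the role of $F$ and the coefficient of $b_i$ in the $B$-part playing the role of $G$ gives the case $|T|=1$, and iterating over the elements of $T$ yields the general case. Note that this claim uses only correctness; the linear-independence reduction of Lemma~\ref{lem:independence} is not needed here, and will enter only later, when the claim is turned into the bound $\NLpx(f)\le\NLp(f)+2$.
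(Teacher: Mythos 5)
Your argument is correct, and your primary route is genuinely different from the paper's. The paper proves the claim by induction on $|T|$: for $|T|=1$ it isolates the $k$-th box in the correctness identity $f(x,y)=A(x,\ba)\oplus B(y,\bb)$ and applies Lemma~\ref{PQR}; the inductive step reapplies the same factoring and Lemma~\ref{PQR} to the identity already obtained for $T$, with a fresh index $k\notin T$. That is precisely the fallback induction you sketch in your closing lines. Your main derivation instead hits the correctness identity with the discrete $T$-fold derivative $\partial_T$ (XOR over all $2^{|T|}$ assignments to $\{a_i:i\in T\}$) in a single step: $\partial_T f=0$ because $f$ is constant in $\ba$ and $T\neq\emptyset$; $\partial_T A=\bigoplus_{S\supseteq T}A_S(x)\,a_{S\setminus T}$ by the rule $\partial_T a_S=a_{S\setminus T}$ if $T\subseteq S$ and $0$ otherwise; and the substitution $b_i=a_i\oplus c_i$ with $c_i$ fixed permutes the assignment set, so the same rule in the $b$-variables gives $\partial_T B=\bigoplus_{S\supseteq T}B_S(y)\,b_{S\setminus T}$. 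This yields the claim for all $|T|\geq 1$ at once, bypassing both the induction and Lemma~\ref{PQR}; the trade-off is that the combinatorics is packed into the multilinear derivative identity rather than spread across inductive steps. Restricting to $T\neq\emptyset$ is not merely convenient but necessary: with the surrounding convention that $S$ ranges over nonempty subsets, the $T=\emptyset$ instance would read $A(x,\ba)\oplus A_\emptyset(x)=B(y,\bb)\oplus B_\emptyset(y)$, forcing $f(x,y)=A_\emptyset(x)\oplus B_\emptyset(y)$, which correctness alone does not imply; and indeed Claim~\ref{claim:XOR-proof-2} only ever invokes the present claim with $|T|\geq 1$.
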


\begin{proof}
We prove this claim by induction on the size of $T$.
By definition, the protocol satisfies for all $(x,y,\ba)$,
$f(x,y)=A(x,\ba)\oplus B(y,\bb)$. By factorizing the $k$-th $\nlb$, we get the following expression for every $(x,y,\ba)$:
\[
f(x,y)= \bigoplus_{S: k \notin S} (A_S(x)\cdot a_S \oplus B_S(y)\cdot b_S ) \oplus a_k \cdot(\bigoplus_{S: k \in S} A_S(x)\cdot a_{S \setminus \{k\}}) \oplus b_k\cdot( \bigoplus_{S: k \in S} B_S(y)\cdot b_{S \setminus \{k\}}).
\]
We can now use Lemma~\ref{PQR}, and have that
\[
\bigoplus_{S: k \in S} A_S(x)\cdot a_{S \setminus \{k\}} = \bigoplus_{S: k\in S} B_S(y)\cdot b_{S \setminus \{k\}},
\]
for all $(x,y,\ba)$ and for all $k\in[t]$. Hence, the claim is true for any subset $T$ with $|T|=1$.
Suppose for the induction that it is true for any set of size $n \in [t-1]$ and consider any $T$ such that $|T|=n$. Let $k \notin T$,
\begin{eqnarray*}
\bigoplus_{S: T \subseteq S} A_S(x)\cdot a_{S \setminus T}  &=& \bigoplus_{S: T \subseteq S} B_S(y)\cdot b_{S \setminus T}, \\
\bigoplus_{S: T \subseteq S, k \notin S} A_S(x) a_{S \setminus T} \oplus a_k(\bigoplus_{S: T \cup \{k\} \subseteq S} A_S(x) a_{S \setminus T\cup \{k\}}) &=& \bigoplus_{S: T \subseteq S, k \notin S} B_S(y) b_{S \setminus T} \oplus b_k( \bigoplus_{S: T \cup \{k\} \subseteq S} B_S(y) b_{S \setminus T\cup \{k\}}).
\end{eqnarray*}
Applying Lemma~\ref{PQR} in the previous equation proves the claim for any $T\cup \{k\}$ and hence any set of size $n+1$, which concludes the proof of Claim~\ref{claim:XOR-proof-1}.
\end{proof}

\begin{claim}\label{claim:XOR-proof-2}
For all $(x,y)$ and for all $T\subseteq[t]$, we have:
\begin{eqnarray*}
|T|>1&\Rightarrow& A_T(x) = B_T(y)=0,\\
|T|=1&\Rightarrow& A_T(x) = B_T(y).
\end{eqnarray*}
\end{claim}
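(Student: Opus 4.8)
The plan is to deduce Claim~\ref{claim:XOR-proof-2} from Claim~\ref{claim:XOR-proof-1} by a downward induction on $|T|$, the engine being the linear independence of the products $\{p_i(x)\cdot q_i(y): i\in[t]\}$, which we may assume by Lemma~\ref{lem:independence}. The statement I would prove, for $k$ running from $t$ down to $1$, is: for every non-empty $T$ with $|T|=k$, $A_T(x)=B_T(y)$ --- necessarily a constant $c_T$, since the left side is a function of $x$ alone and the right a function of $y$ alone --- and moreover $c_T=0$ as soon as $k\ge 2$.

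The easy half is that $A_T=B_T$ holds at each level. Applying Claim~\ref{claim:XOR-proof-1} to a set $T$ of size $k$, every $S\supseteq T$ with $S\ne T$ satisfies $|S|>k$, so by the induction hypothesis its coefficients $A_S,B_S$ vanish; only the $S=T$ term remains, giving $A_T(x)=B_T(y)$. The substantial half is forcing $c_T=0$ for $k\ge 2$, and the idea I would use is to \emph{demote} $T$: fix $i\in T$ and set $T'=T\setminus\{i\}$, which is still non-empty because $k\ge 2$ (so $A_{T'},B_{T'}$ are genuine coefficients of the expansions). Claim~\ref{claim:XOR-proof-1} applied to $T'$ involves only $S=T'$ and the sets $S=T'\cup\{\ell\}$ with $\ell\notin T'$ (larger $S$ vanish); substituting $A_{T'\cup\{\ell\}}=B_{T'\cup\{\ell\}}=c_{T'\cup\{\ell\}}$ from the easy half at level $k$ and using $a_\ell\oplus b_\ell=p_\ell(x)\cdot q_\ell(y)$, the identity collapses to
\[
A_{T'}(x)\oplus B_{T'}(y)=\bigoplus_{\ell\notin T'}c_{T'\cup\{\ell\}}\,p_\ell(x)\cdot q_\ell(y),
\]
an identity in $(x,y)$ whose left side has the form $\alpha(x)\oplus\beta(y)$. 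Linear independence of the $p_\ell\cdot q_\ell$ then forces every $c_{T'\cup\{\ell\}}$ to be $0$; taking $\ell=i$ yields $c_T=0$, and since $T$ was arbitrary this closes the step.

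Running the induction from $k=t$ (where the hypothesis is vacuous) down to $k=2$ gives $A_T=B_T=0$ for all $|T|\ge 2$, and the easy half at $k=1$ then gives $A_T=B_T$ for singletons, which is exactly Claim~\ref{claim:XOR-proof-2}. I expect the demotion step to be the only subtle point: from Claim~\ref{claim:XOR-proof-1} applied to $T$ itself one learns nothing beyond $A_T=B_T$, since no difference $a_\ell\oplus b_\ell$ --- and hence no product $p_\ell\cdot q_\ell$ --- appears there; it is only by passing to a subset of size $k-1$ that the products resurface and linear independence can be invoked, which is also the structural reason the argument pins the constants down to size $2$ but leaves the size-$1$ coefficients merely equal.
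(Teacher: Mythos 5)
Your proof is correct and follows essentially the same route as the paper's: a downward induction on $|T|$ in which the key move is to apply Claim~\ref{claim:XOR-proof-1} not to $T$ itself but to a set one level below, then use the already-established equalities $A_{T'\cup\{\ell\}}=B_{T'\cup\{\ell\}}$ as constants so that the differences $a_\ell\oplus b_\ell=p_\ell q_\ell$ reappear and linear independence (Lemma~\ref{lem:independence}) can annihilate them. The paper bundles the two conclusions (zeros at level $n$, equality at level $n-1$) into a single inductive step, whereas you split them into an ``easy half'' and a ``substantial half'' --- a cosmetic reorganization of the same argument, and your closing remark correctly identifies why the demotion to $T'=T\setminus\{i\}$ is the crux and why the method stops short of forcing the size-one coefficients to vanish.
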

\begin{proof}
We prove this claim by downward induction on the size of $T$, starting with $|T|=t$, that is, $T=[t]$. We immediately obtain from Claim~\ref{claim:XOR-proof-1} that $A_{[t]}(x)=B_{[t]}(y)$. As a consequence, these do not depend on $x$ or $y$, and we may define $C_{[t]}=A_{[t]}(x)=B_{[t]}(y)$. Moreover, we can define $A_{S}(x)=B_{S}(y)=0$ for any $S \supseteq [t]$. 
%\comment{ Marc: I don't understand the last sentence here. Do we really need it? Aren't  $A_S$ and $B_S$ only defined for $S \subseteq [t]$.}

Now let $n\geq 2$ and suppose that for any set $S$ of size equal or larger to $n+1$, we have $A_S(x) = B_S(y)=0$, and for any set $S$ of size $n$, we have $A_S(x) = B_S(y)$. From Claim~\ref{claim:XOR-proof-1}, we obtain that for all sets $T$ of size $n-1$,
\[
A_T(x) \oplus \bigoplus_{k\notin T} C_{T\cup\{k\}}\cdot a_k
= B_T(y) \oplus \bigoplus_{k\notin T} C_{T\cup\{k\}}\cdot b_k,
\]
where we have defined $C_S=A_S(x) = B_S(y)$ for all $S$ of size $n$.
Since $a_k\oplus b_k= p_k(x) \cdot q_k(y)$, we have $\bigoplus_{k\notin T} C_{T\cup\{k\}}\cdot p_k(x) \cdot q_k(y)=A_T(x)\oplus B_T(y)$, and by linear independence, we conclude that $C_{T\cup\{k\}}=0$ and $A_T(x)= B_T(y)$. Using the same argument for any $T$ of size $n-1$, we obtain that $A_S(x) = B_S(y)=0$ for all $S$ of size $n$, and $A_T(x)= B_T(y)$ for all $T$ of size $n-1$, which concludes the proof of Claim~\ref{claim:XOR-proof-2}.
\end{proof}

This claim implies that the protocol $P$ outputs the parity of two local terms plus the outcomes of the $\nlbs$, and as a consequence $\NLpx(f)\leq t+2$.
\end{proof}

\subsection{Optimality of our bounds}

We show here that the bounds we proved in the previous section on the parallel and general $\nlb$ complexity ($D(f) \leq \NLp(f) \leq 2^{D(f)}$ and $D(f) \leq \NL(f) \leq \NLp(f)$ respectively) are optimal by giving examples of functions that saturate them.
The first function we consider is the Inner Product function, $IP(x,y)=\oplus_i (x_i \wedge y_i)$, with $x,y \in \zo^n$. For this function we have that $D(IP)=\NL(IP)=\NLp(IP)=n$.

The second function we consider is the Disjointness function, which is equal to 
$DISJ(x,y)= \vee_i (x_i \wedge y_i)$, with $x,y \in \zo^n$. It is well-known that for the communication matrix of the Disjointness function we have $\rk(M_{DISJ})=2^n$ and hence $\NLpx(DISJ)=2^n$. On the other hand, we have $D(DISJ)=n$ and show that $\NL(DISJ)=O(n)$. We describe below a simple protocol for the Disjointness function that follows from \cite{bbl+06} and was pointed out to us by Troy Lee and Falk Unger. 
The Disjointness function also provides an example of an exponential separation between deterministic parallel and general $\nlb$ complexity.

\begin{proposition}
%\paragraph{Protocol for $DISJ$ with deterministic $\nlb$ complexity of $O(n)$\\\\}
$\NL(DISJ) \leq O(n).$
\end{proposition}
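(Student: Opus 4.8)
The plan is to build, out of the $n$ AND-terms $x_i\wedge y_i$, a value that is \emph{parity-distributed} between Alice and Bob and equals $DISJ(x,y)$, by repeatedly combining parity-distributed bits via the OR operation, where each combination costs only a constant number of \nlbs. I will use the general (adaptive) model, which is what $\NL$ refers to, so combining along a tree of \nlbs\ is allowed.

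First I would isolate the basic gadget. Suppose Alice holds $u_A$ and Bob holds $u_B$ with $u=u_A\oplus u_B$, and likewise Alice holds $v_A$ and Bob holds $v_B$ with $v=v_A\oplus v_B$. I claim two \nlbs\ suffice to produce $w_A$ for Alice and $w_B$ for Bob with $w_A\oplus w_B = u\vee v$. Indeed, over $\Ftwo$ we have $u\vee v = u\oplus v\oplus (u\wedge v)$ and $u\wedge v = u_A v_A \oplus u_B v_B \oplus u_A v_B \oplus u_B v_A$. Feed $(u_A,v_B)$ into one $\nlb$ to get outputs $s_A,s_B$ with $s_A\oplus s_B = u_A\cdot v_B$, and feed $(v_A,u_B)$ into a second $\nlb$ to get $t_A,t_B$ with $t_A\oplus t_B = v_A\cdot u_B$. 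Then Alice sets $w_A = u_A\oplus v_A\oplus u_A v_A\oplus s_A\oplus t_A$ (all terms known to her, the first two of her shares, the diagonal $u_A v_A$ local, the last two her $\nlb$ outputs) and Bob sets $w_B = u_B\oplus v_B\oplus u_B v_B\oplus s_B\oplus t_B$; a one-line $\Ftwo$ computation gives $w_A\oplus w_B = u\vee v$. So: combining two parity-distributed bits into their OR costs $2$ \nlbs, since XORing shares is free and each player computes his own diagonal term locally.

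Now the protocol for $DISJ$. For each $i\in[n]$, use one $\nlb$ on inputs $(x_i,y_i)$ to obtain $a_i\oplus b_i = x_i\wedge y_i$, so Alice and Bob hold a parity-share of $z_i:=x_i\wedge y_i$; this uses $n$ \nlbs. Then arrange $z_1,\dots,z_n$ as the leaves of a binary tree and apply the gadget above at each of the $n-1$ internal nodes, at a cost of $2$ \nlbs\ per node, so that the parity-share produced at a node equals the OR of the values at its two children. At the root, Alice holds $w_A$ and Bob holds $w_B$ with $w_A\oplus w_B = z_1\vee\cdots\vee z_n = DISJ(x,y)$, and they output these bits. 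The total is $n + 2(n-1) = 3n-2 = O(n)$ \nlbs. The \nlbs\ at a node are used adaptively (their inputs depend on the outputs of the children's \nlbs), which is why this is a protocol in the general model rather than the parallel one; all \nlbs\ remain non-signaling, so no timing subtlety arises.

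There is no real conceptual obstacle here; the only point where a reader should pause is the $\Ftwo$ bookkeeping in the gadget — checking that "XOR of parity-shares is free", that the two diagonal terms $u_A v_A$, $u_B v_B$ are correctly assigned to the respective players, and that the gadget's correctness then composes by induction up the tree. After that, the count $3n-2$ is immediate.
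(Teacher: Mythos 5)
Your proof is correct and follows essentially the same route as the paper: use $n$ \nlbs\ to obtain parity-shares of each $x_i\wedge y_i$, then combine them with a constant-cost distributed-OR gadget requiring $2$ \nlbs\ per combination. The only cosmetic difference is that you expand $u\vee v=u\oplus v\oplus uv$ and share out the cross-terms of the product, whereas the paper uses an equivalent identity expressing $(a_k\oplus b_k)\vee(a_\ell\oplus b_\ell)$ as a XOR of four ORs; both yield the same $2$-\nlb\ gadget and the same $O(n)$ total.
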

\begin{proof}
On input $x=x_1\cdots x_n, y=y_1\cdots y_n$, Alice and Bob use $n$ \nlbs\ with inputs $(x_i,y_i)$ and get outputs $a_i,b_i$ with $a_i \oplus b_i = x_i \cdot y_i$. Then, they can use $2$ \nlbs\ in order to compute the OR of two such distributed bits since
$(a_k \oplus b_k)\vee (a_\ell \oplus b_\ell) = (a_k \vee a_\ell) \oplus (a_k \vee b_\ell) \oplus(b_k \vee a_\ell) \oplus (b_k \vee b_\ell)$. The terms $(a_k \vee a_\ell)$ and $(b_k \vee b_\ell)$ can be locally computed by Alice and Bob respectively and hence they only need to use two \nlbs\ with inputs $(\neg a_k, \neg b_\ell)$ and $(\neg a_\ell, \neg b_k)$ to compute the remaining terms. 
By combining $n$ such distributed OR computations they compute $\vee_i (a_i \oplus b_i)$ and hence output the value of $DISJ(x,y)$ after using $3n$ $\nlbs$. 
\end{proof}

\section{Randomized $\nlb$ complexity}

In this section, we consider protocols that use shared randomness and have success probability at least $2/3$. 
We start by comparing the parallel $\nlb$ complexity to communication complexity. 
Then we exactly characterize $\RNLpx$ in terms of the approximate rank (over $\Ftwo$) of the communication matrix. 

\subsection{Upper and lower bounds for $\NL_\epsilon$}
\begin{theorem}\label{thm:randomized-parallel}
\COMMENT{For any Boolean function $f$,} $R_\epsilon^\rightarrow(f) \leq \RNL(f) \leq \RNLpx(f) \leq 2^{R_\epsilon(f)}$.
\end{theorem}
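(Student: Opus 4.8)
The statement is a chain of three inequalities, $R_\epsilon^\rightarrow(f) \leq \RNL(f) \leq \RNLpx(f) \leq 2^{R_\epsilon(f)}$, and I would prove each link separately, largely by adapting the corresponding deterministic arguments from the previous section to the randomized setting.

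For the leftmost inequality $R_\epsilon^\rightarrow(f) \leq \RNL(f)$, the plan is to mimic the deterministic one-way simulation (the Lemma showing $D^\rightarrow(f) \leq NL(f)$). Given a randomized protocol $P$ using $t$ \nlbs\ that succeeds with probability $1-\epsilon$, Alice sends Bob her $t$ inputs to the boxes (together with her share of the randomness that is needed, or we use the shared-randomness convention so that this is free); Bob then plays the role of both parties, using the convention that Alice received $0$ from each box, and Alice correspondingly outputs assuming all-zero box outputs. Since for every fixing of the randomness the box outputs satisfy $a_i \oplus b_i = x\cdot y$-type constraints, the simulated transcript is consistent with a genuine run of $P$, so the success probability is preserved. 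Care is needed because Bob's inputs to the boxes may depend on box outputs — but he can compute those outputs himself once he knows Alice's inputs, so the simulation goes through; this is the one place where I must be slightly careful that the order in which boxes are used does not create a circular dependency, which it does not since each box output is determined by the two inputs.

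The middle inequality $\RNL(f) \leq \RNLpx(f)$ is immediate by definition, exactly as in the deterministic case: the parallel, XOR-output model is a special case of the general model, so any protocol witnessing $\RNLpx(f) = t$ is also a protocol witnessing $\RNL(f) \leq t$.

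For the rightmost inequality $\RNLpx(f) \leq 2^{R_\epsilon(f)}$, the plan is to combine the randomized communication protocol with the deterministic result $\NLpx(g) \leq 2^{D(g)}$ (Corollary~\ref{cor:NLpx}) applied conditionally on the shared randomness. Concretely, fix the shared random string $r$; then the $\epsilon$-error protocol restricted to $r$ becomes a deterministic protocol computing some Boolean function $f_r$ (the function it actually computes on each input, which agrees with $f$ on a $(1-\epsilon)$-fraction of inputs in expectation over $r$) with communication at most $R_\epsilon(f)$. By Corollary~\ref{cor:NLpx}, $f_r$ can be computed exactly by a parallel-XOR \nlb\ protocol using $2^{D(f_r)} \leq 2^{R_\epsilon(f)}$ boxes. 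Now Alice and Bob, sharing $r$, first sample $r$ and then run this deterministic \nlb\ protocol for $f_r$; the overall error is exactly the error of the original randomized protocol, namely at most $\epsilon$. The only subtlety — and I expect this to be the main obstacle to writing cleanly — is bookkeeping about what "the function computed when the randomness is fixed" means when the protocol's output is distributed as $a\oplus b$ and may itself be randomized given $r$: one must argue that fixing $r$ really does yield a deterministic communication protocol of cost $\le R_\epsilon(f)$, and that the \nlbs\ count $2^{R_\epsilon(f)}$ does not secretly depend on $r$ in a way that breaks the bound (it does not, since the exponent is uniformly bounded by $R_\epsilon(f)$, the worst-case communication over all $r$). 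Once this is set up, the inequality follows, and combining the three links completes the proof.
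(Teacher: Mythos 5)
Your middle link ($\RNL \leq \RNLpx$) and rightmost link ($\RNLpx \leq 2^{R_\epsilon}$) are correct and match the paper's argument: the first is definitional, and the second is exactly the paper's decomposition of the randomized protocol into deterministic protocols $P_r$ computing $f_r$, followed by Corollary~\ref{cor:NLpx} applied to each $f_r$ and the observation that the exponent $t=R_\epsilon(f)$ is a uniform bound over all $r$.

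The leftmost link $R_\epsilon^\rightarrow(f) \leq \RNL(f)$, however, has a genuine gap. You port the deterministic trick verbatim: Alice sends her box inputs, and both players pretend her box outcomes were all zeros. In the deterministic lemma this is sound precisely because the protocol is correct for \emph{every} realization of the box randomness, so fixing $\ba=\mathbf 0$ is just picking one realization on which correctness still holds. In the randomized setting the guarantee is only $\Pr_{r,\ba}[\text{success}]\geq 1-\epsilon$, where the probability is over the shared string $r$ \emph{and} the uniform box outcomes $\ba$. Conditioning on $\ba = \mathbf 0$ selects a $2^{-t}$-measure slice of that space, and there is no reason the conditional success probability stays above $1-\epsilon$; your sentence ``the simulated transcript is consistent with a genuine run of $P$, so the success probability is preserved'' asserts exactly this without justification, and it is false in general. (Toy counterexample: two boxes with constant inputs $p_i=q_i=1$, Alice outputs $A=(1\oplus a_1)\wedge(1\oplus a_2)$, Bob outputs $B=0$; for $f\equiv 0$ the error is $1/4$, but conditioned on $\ba=\mathbf 0$ the output is always $1$.) The fix — which is what the paper actually does — is not to condition but to \emph{simulate}: Alice and Bob use their shared randomness to draw $\ba$ uniformly, Alice computes her box inputs and final output from this sampled $\ba$ and sends the inputs to Bob, and Bob reconstructs $\bb$ as $b_i = a_i \oplus p_i q_i$. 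This reproduces the original joint distribution exactly, so the error bound is preserved.
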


\begin{proof}
For the first inequality, Alice sends all her inputs to the \nlbs\ to Bob.
They use the shared randomness to simulate the output of Alice's $\nlbs$,
which Alice can use to compute her output, and Bob uses to compute his
outputs to the \nlbs, and compute his output.
%%This bound can be strengthened, by noticing that Alice and Bob can both send their inputs to a referee who can then simulate the protocol. In other words, the parallel $\nlb$ complexity is lower bounded by half the simultaneous message complexity of $f$. 

For the last inequality, let us fix a randomized communication protocol $P$ for $f$
using $t$ bits of communication. 
\COMMENT{
For any input $x$, randomness $r$ and transcript $T$, let $A(T,x,r)$ be 0 if the transcript is not compatible with $x,r$, and equal to Alice's output in $P$ otherwise. Define $B(T,y,r)$ similarly. 
Alice and Bob simulate this protocol by using \nlbs\ in the following way: First, they use their shared randomness as the randomness $r$ of the protocol $P$. Then, having fixed the randomness, they use two \nlbs\ for each possible transcript $T$. In the first one, Alice inputs 1 if and only if the transcript is compatible with her input and the randomness, and Bob inputs $B(T,y,r)$. In the second one, Alice inputs $A(T,x,r)$ and Bob inputs 1 if and only if the transcript is compatible with his input and the randomness. 
Both players finally output the $XOR$ of all the outputs they received from the $\nlbs$.

As the randomness is fixed, there is only one transcript $T_0$ that is compatible at the same time with Alice's and Bob's input. The $XOR$ of the outcomes of the two \nlbs\ that correspond to this transcript is exactly $A(T_0,x,r) \oplus B(T_0,y,r)$ which is equal to the output of the communication protocol $P$.
For all other transcripts, the parity of the outcomes of the corresponding \nlbs\ is always 0. Therefore, we get 
$\RNLpx(f) \leq O(2^{R(f)})$.
}
We can write $P$ as a distribution over deterministic protocols $P_r$
each using at most $t$ bits of communication, and computing some Boolean
function $f_r$.
By Corollary~\ref{cor:NLpx}, $\NLpx(f_r)\leq 2^{t}$.
Taking the same distribution over the \nlb\ protocols
for $f_r$, we get $\RNLpx(f) \leq 2^{t}$ as claimed.
\end{proof}

Note that in fact any $\NL_\epsilon^{\parallel,g}$ protocol can be simulated in the 
simultaneous messages communication model, so in fact $R_\epsilon^\parallel(f)\leq \NL_\epsilon^{\parallel,g}(f)$, for any $g$.

The approximate rank over the reals has been shown to be a useful complexity measure for communication complexity~\cite{bw99}.  For $\nlb$ complexity,
we now define the notion of approximate rank over $\Ftwo$.

\begin{definition} Let $\mathcal P_t$ denote the convex hull of Boolean matrices with rank over $\Ftwo$ at most $t$.  Then for a
\COMMENT{Boolean}
$[0,1]$-valued\footnote{While in this section we are only interested in Boolean matrices, we give the definition for the general case of $[0,1]$-valued matrices as it will be useful in the next section.}
matrix $A$ the approximate rank over $\Ftwo$ is defined by $\epsrk(A) = \min \{t: \exists A' \in \mathcal P_t \text{ with } \norm{A - A'}_\infty \leq \eps\}$.
\end{definition}

%We can now give an equivalent definition which will enables us to relate the approximate rank to the $\nlb$ complexity.
The next proposition gives an alternative definition of the approximate rank
for Boolean matrices.
This definition enables us to relate the approximate rank to the $\nlb$ complexity.

\begin{proposition}
$\epsrk(M_f)$ is the minimum $t$, such that there exists a set of Boolean matrices $A_1,\ldots,A_R$, and a probability distribution over $[R]$ with the following properties: 
\begin{itemize}
\item For every $r \in [R]$, $\rk(A_r) \leq t$,
\item For every $(x,y)$,  $\Prob_{r}[M_f(x,y)=A_r(x,y)] \geq 1-\eps$.
\end{itemize} 
\end{proposition}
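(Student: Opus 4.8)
The plan is to show the two definitions coincide by a direct two-way inclusion, using the fact that $\mathcal{P}_t$ is the convex hull of Boolean matrices of $\Ftwo$-rank at most $t$, and that $\|M_f - A'\|_\infty \le \eps$ for a $[0,1]$-valued matrix $A'$ controls the probability that $A'$, viewed through its convex decomposition, agrees with $M_f$.

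\medskip

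\textbf{One direction ($\le$).} Suppose $\epsrk(M_f) = t$, so there is $A' \in \mathcal{P}_t$ with $\|M_f - A'\|_\infty \le \eps$. By Carath\'eodory / the definition of convex hull, write $A' = \sum_{r \in [R]} \lambda_r A_r$ where each $A_r$ is Boolean with $\rk(A_r) \le t$ and $(\lambda_r)_r$ is a probability distribution over $[R]$. For a fixed entry $(x,y)$, note that $A'(x,y) = \sum_r \lambda_r A_r(x,y) = \Prob_r[A_r(x,y) = 1]$ since the $A_r(x,y)$ are $0/1$. Hence $\Prob_r[A_r(x,y) \ne M_f(x,y)] = |A'(x,y) - M_f(x,y)| \le \eps$ when $M_f(x,y) \in \{0,1\}$, so the distribution $(\lambda_r)_r$ and matrices $A_r$ witness the right-hand-side condition with the same $t$.

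\medskip

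\textbf{The other direction ($\ge$).} Conversely, suppose we have Boolean matrices $A_1,\dots,A_R$ each of $\Ftwo$-rank at most $t$, and a distribution $(\lambda_r)_r$ on $[R]$ with $\Prob_r[A_r(x,y) = M_f(x,y)] \ge 1-\eps$ for every $(x,y)$. Set $A' = \sum_r \lambda_r A_r \in \mathcal{P}_t$. For each $(x,y)$, again $A'(x,y) = \Prob_r[A_r(x,y)=1]$, and since $M_f(x,y)$ is Boolean, $|A'(x,y) - M_f(x,y)| = \Prob_r[A_r(x,y) \ne M_f(x,y)] \le \eps$. Thus $\|M_f - A'\|_\infty \le \eps$, so $\epsrk(M_f) \le t$. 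Combining the two inequalities gives that $\epsrk(M_f)$ equals the minimum such $t$, as claimed.

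\medskip

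There is essentially no hard part here; the only thing to be slightly careful about is the elementary but crucial observation that for a $0/1$-valued random variable, its expectation equals its probability of being $1$, which is exactly what lets the $\ell_\infty$-approximation condition translate into a pointwise error-probability bound (and vice versa). One should also note that the minimum over valid $R$ and decompositions in the alternative definition is well-defined and that restricting to finite $R$ loses nothing, since any point of $\mathcal{P}_t$ lies in the convex hull of finitely many of its extreme points (Carath\'eodory's theorem, as the ambient space of matrices is finite-dimensional).
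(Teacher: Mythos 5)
Your proof is correct and follows essentially the same route as the paper: decompose a point of $\mathcal{P}_t$ into a convex combination of Boolean rank-$\le t$ vertices, read the resulting entries as probabilities, and observe that the $\ell_\infty$ bound on $M_f - A'$ is exactly the pointwise error probability because $M_f$ is Boolean. The paper phrases the forward direction via the vertices of the polytope and then simply asserts the converse, whereas you spell out both directions; the content is the same.
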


\begin{proof}
Suppose that $\epsrk(M_f) =t$, and let $A \in \mathcal P_t$ such that $\norm{{M_f-A}}_\infty \leq \eps$. Denote by $A_1, \ldots, A_d$ the vertices of $\mathcal P_t$. By definition, $A= \sum_i \mu_i A_i$, with $\sum_i \mu_i=1$ and $\forall i, \, \mu_i \geq 0$. 
For any $(x,y)$, picking $A_i(x,y)$ with probability $\mu_i$ has expected value $E_{\mu}(A_i(x,y))=A(x,y)$. It follows that $\Prob_\mu [M_f(x,y) \neq A_i(x,y)] = {\abs{ M_f(x,y) - E_\mu(A_i(x,y)) }}\leq \eps$.
Moreover, for any $i$, $\rk (A_i) \leq t$. This proves that the set $A_1, \ldots, A_d$ and $\mu$ have the desired properties. The proof goes conversely as well.

\end{proof}

\begin{theorem}
For any Boolean function $f$, $\RNLpx(f) = \epsrk(M_f)$
\end{theorem}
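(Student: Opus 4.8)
The plan is to prove the two inequalities $\RNLpx(f) \leq \epsrk(M_f)$ and $\epsrk(M_f) \leq \RNLpx(f)$ separately, in both cases passing through the alternative characterization of $\epsrk$ just established, together with Theorem~\ref{theorem:nlb-rank} which identifies the exact parallel-XOR complexity with $\Ftwo$-rank. The key observation throughout is that a randomized $\NLpx$ protocol with shared randomness is, after conditioning on the value $r$ of the shared random string, a \emph{deterministic} $\NLpx$ protocol, and conversely a distribution over deterministic protocols is a randomized one.

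For the upper bound $\RNLpx(f) \leq \epsrk(M_f)$: let $t = \epsrk(M_f)$ and invoke the preceding proposition to obtain Boolean matrices $A_1, \ldots, A_R$ with $\rk(A_r) \leq t$ for every $r$, and a distribution $\mu$ on $[R]$ with $\Pr_{r \sim \mu}[M_f(x,y) = A_r(x,y)] \geq 1-\eps$ for every $(x,y)$. Each $A_r$ is the communication matrix of some Boolean function $f_r$ with $\rk(M_{f_r}) \leq t$, so by Theorem~\ref{theorem:nlb-rank} there is a deterministic $\NLpx$ protocol $P_r$ for $f_r$ using at most $t$ \nlbs. Alice and Bob use their shared randomness to sample $r \sim \mu$ and then run $P_r$; the output is correct on $(x,y)$ exactly when $A_r(x,y) = M_f(x,y)$, which happens with probability at least $1-\eps$. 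Hence $\RNLpx(f) \leq t$.

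For the lower bound $\epsrk(M_f) \leq \RNLpx(f)$: let $t = \RNLpx(f)$ and fix an optimal randomized parallel-XOR protocol $P$. Write $P$ as a distribution $\mu$ over the deterministic protocols $P_r$ obtained by fixing the shared randomness to $r$; each $P_r$ is a deterministic $\NLpx$ protocol using at most $t$ \nlbs, hence by the converse direction of Theorem~\ref{theorem:nlb-rank} it computes (in parity, via XOR of outcomes) some Boolean function $f_r$ whose communication matrix satisfies $\rk(M_{f_r}) \leq t$. Since $P$ succeeds with probability $\geq 1-\eps$ on every input, we get $\Pr_{r \sim \mu}[M_{f_r}(x,y) = M_f(x,y)] \geq 1-\eps$ for all $(x,y)$. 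By the alternative characterization in the proposition (the ``converse'' direction mentioned there), this is exactly the condition witnessing $\epsrk(M_f) \leq t$.

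I do not expect a serious obstacle here: the statement is essentially a repackaging of Theorem~\ref{theorem:nlb-rank} through the randomness-conditioning correspondence, and the one subtle point — that ``randomized parallel-XOR'' conditioned on shared randomness is exactly ``deterministic parallel-XOR,'' so the number of \nlbs is preserved run-by-run rather than only in expectation — is immediate once stated. The mildest care needed is to confirm that the correctness guarantee ``$\geq 1-\eps$ for every input $(x,y)$'' translates cleanly into the per-entry $\ell_\infty$ closeness used in the definition of $\epsrk$, which is precisely what the preceding proposition was set up to provide.
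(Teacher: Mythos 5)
Your proof is correct and follows the same two-step argument as the paper: condition on the shared randomness to reduce to the deterministic parallel-XOR case, invoke Theorem~\ref{theorem:nlb-rank} to convert between $\nlb$ count and $\Ftwo$-rank, and use the preceding proposition to translate per-input success probability into the $\ell_\infty$ condition defining $\epsrk$. The one point worth stating a bit more sharply than you did (the paper spells it out) is \emph{why} fixing $r$ yields a deterministic function despite the inherent randomness of the \nlbs: it is precisely the XOR restriction that makes $\bigoplus_i a_i \oplus \bigoplus_i b_i = \bigoplus_i p_i(x)q_i(y)$ independent of the boxes' internal coins.
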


\begin{proof}
Fix a randomized protocol for $f$ that uses $t$ \nlbs\ in parallel and is correct with probability at least $1-\eps$. Let $r$ be the string they share in the beginning of the protocol that is drawn according to some probability distribution from a set $R$. Since Alice and Bob output the $XOR$ of the outcomes of the $\nlbs$, the final outcome of the protocol is independent of the inherent randomness of the $\nlbs$, in other words they always compute some function $g_r(x,y)$. Let us the consider the matrices $A_r=M_{g_r}$. We know that $\rk(A_r) \leq t$ for every $r$. Moreover, the correctness of the protocol implies $\Prob_{r}[M_f(x,y)=A_r(x,y)] \geq 1-\eps$. This proves that $\epsrk(M_f) \leq t$. 

Conversely, suppose that $\epsrk(M_f) = t$. Then fix  a set of Boolean matrices $A_1,\ldots,A_R$ such that for every $r \in [R]$, $\rk(A_r) \leq t$ and $\Prob_{r}[M_f(x,y)=A_r(x,y)] \geq 1-\eps$. Consider the following protocol for $f$: Alice and Bob pick at random $r \in R$ and compute the function $g_r(x,y)=A_r(x,y)$. As $\rk (A_r) \leq t$, computing $g_r$ requires at most $t$ \nlbs\ in parallel. It is straightforward to check that this protocol is correct with probability at least $1-\eps$. Hence, $\RNLpx(f) \leq t$.
\end{proof}

%But we don't know how to get rid of the XOR restriction.
In the randomized case, it is easy to get rid of the $XOR$ restriction in general $\nlb$ protocols, since the proof for the deterministic case still goes through. On the other hand, for the parallel case, this appears to be a surprisingly deep question, which remains open.  The main obstacle appears to be related to the inherent randomness of the \nlbs.

Next, we relate the general $\nlb$ complexity to the following model of communication: Alice and Bob send to a referee one message each and the referee outputs 1 if for the majority of indices, the two messages are equal. We denote the communication complexity in this model by $R^{||,MAJ}_\epsilon(f)$. 
This is a natural model of communication complexity that has appeared repeatedly in the simulation of quantum protocols by classical ones, as well as various upper bounds on simultaneous messages~\cite{KNR,grolmusz, shi05, ls07}. 

\begin{theorem}\label{thmrand}
\COMMENT{For any Boolean function $f$, }
$R_\eps^\rightarrow(f) \leq \RNL(f) \leq O(R^{||,MAJ}_\epsilon(f))$.
\end{theorem}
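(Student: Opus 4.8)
The first inequality $R_\eps^\rightarrow(f) \leq \RNL(f)$ is immediate: it has already been established in Theorem~\ref{thm:randomized-parallel} (Alice sends her $\nlb$ inputs to Bob, shared randomness simulates Alice's $\nlb$ outputs, and since the $\nlb$ protocol is non-signaling Bob can finish alone). So the content is the second inequality, $\RNL(f) \leq O(R^{||,MAJ}_\epsilon(f))$. The plan is to take an optimal simultaneous-message protocol with a $\mathrm{MAJ}$ referee, of cost $t = R^{||,MAJ}_\epsilon(f)$, and convert it into an $\nlb$ protocol of cost $O(t)$. Write the protocol as a distribution over deterministic protocols $P_r$: for each fixed $r$, Alice computes a string $\ba = \ba(x,r) \in \zo^t$ and Bob a string $\bb = \bb(y,r) \in \zo^t$, and the referee's output is $\mathrm{MAJ}(a_1\oplus b_1,\ldots,a_t\oplus b_t)$. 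By Newman-style averaging it suffices to handle one fixed $r$ (the shared randomness of the $\nlb$ protocol simulates $r$), so the real task is: given strings $\ba,\bb$ held by Alice and Bob, compute $\mathrm{MAJ}(\ba\oplus\bb)$ in parity using $O(t)$ $\nlbs$.

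The key step is therefore a \emph{gadget}: an $\nlb$ subprotocol that computes the majority of $t$ ``distributed bits'' $c_i = a_i \oplus b_i$ (where $a_i$ is Alice's, $b_i$ is Bob's). The natural route is to first compute, in parity, the Hamming weight $w = \sum_i c_i$ as an integer written in binary — this is iterated distributed addition — and then output the appropriate threshold bit of $w$. Each distributed bit $c_i = a_i\oplus b_i$; adding two distributed bits in parity requires computing a distributed carry $(a_i\wedge a_j)$-type terms, which as in the Disjointness proposition costs $O(1)$ $\nlbs$ per AND of two distributed bits. Summing $t$ bits into a $\lceil\log t\rceil$-bit counter via a tree of ripple-carry adders costs $O(t)$ AND-gates total (each addition of a new bit into the running $O(\log t)$-bit register costs $O(\log t)$, but a balanced binary tree of additions brings the total to $O(t)$), hence $O(t)$ $\nlbs$. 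Extracting the threshold bit of the final $O(\log t)$-bit distributed count is an $O(\log t)$-depth computation on $O(\log t)$ distributed bits, costing $O(\mathrm{polylog}\, t) = O(t)$ more $\nlbs$. Throughout one uses the basic facts that (i) any bit locally computable by Alice or by Bob is free, (ii) a distributed AND of two distributed bits costs $O(1)$ $\nlbs$ (identity $(a_k\oplus b_k)\wedge(a_\ell\oplus b_\ell)$ expanded into four terms, two local and two $\nlb$ calls, exactly as in the Disjointness proof), and (iii) distributed XOR is free since the players just XOR into their running outputs. Summing, the whole gadget uses $O(t)$ $\nlbs$, and since $t = R^{||,MAJ}_\epsilon(f)$ and the $\nlb$ protocol inherits the error $\eps$ from the communication protocol, we get $\RNL(f) = O(R^{||,MAJ}_\epsilon(f))$.

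The main obstacle is getting the \emph{linear} bound $O(t)$ rather than $O(t\log t)$ or worse from the iterated-addition gadget: a naive ripple-carry chain adding bits one at a time into a $\log t$-bit counter spends $\Theta(\log t)$ $\nlbs$ per bit, giving $\Theta(t\log t)$. The fix is to build the summation as a balanced binary tree (à la Wallace/carry-save), so that at level $j$ there are $t/2^j$ additions each on $O(j)$-bit operands, and $\sum_j (t/2^j)\cdot O(j) = O(t)$; one must also make sure the carry-save representation still lets one read off the threshold bit cheaply at the end, which it does since collapsing the final carry-save pair of $O(\log t)$-bit words costs only $O(\log t)$ $\nlbs$. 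A secondary point to check carefully is that the $\mathrm{MAJ}$ referee's output is exactly a fixed threshold of $\sum_i(a_i\oplus b_i)$ so that no information about the split of each $c_i$ between Alice and Bob is needed beyond what the gadget produces — this is exactly why the $\mathrm{MAJ}$-referee model, as opposed to an arbitrary referee function, is the right hypothesis here.
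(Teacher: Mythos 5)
Your proposal is correct and takes essentially the same approach as the paper's proof: simulate the $\mathrm{MAJ}$-referee simultaneous protocol by replacing each gate of an $O(t)$-size $\{\mathrm{AND},\mathrm{NOT}\}$ circuit for majority with the two-\nlb\ gadget for distributed AND (XOR and NOT being free). The paper simply cites the $O(t)$-size majority circuit as well-known, explicitly via an ``addition circuit''; your carry-save/balanced-tree elaboration is exactly the construction behind that citation, not a different route.
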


\begin{proof}
\COMMENT{The lower bound proof follows directly from the deterministic case.}
For the lower bound, Alice and Bob can use shared randomness to simulate the output of Alice's \nlbs. Alice then computes her inputs to the \nlbs, and sends them to Bob. From Alice's inputs and outputs to the \nlbs, Bob may compute his inputs and outputs. The players may then compute their outputs to the protocol, which have the same probability distribution as the original protocol. 

For the upper bound, fix a $t$-bit simultaneous protocol for $f$,  where the referee receives two messages $\ba$ and $\bb$ of size $t$ from Alice and Bob and outputs $MAJ(a_1\oplus b_1,\ldots,a_t \oplus b_t)$. It is well-known, by using an addition circuit,  that the majority of $t$ bits can be computed by a circuit of size $O(t)$ with $AND,NOT$ gates. 
Moreover, the distributed $AND$ of two bits can be computed using two $\nlbs$~\cite{bbl+06}.
We conclude that the $\nlb$ complexity of the distributed Majority is $O(t)$ and hence the theorem follows.
\end{proof}

Our theorem implies the following relation between $\nlb$ complexity and factorization norms.

\begin{corollary}\label{cor:factorization-norms}
\COMMENT{For any Boolean function $f$, }
$2\log (\galpha(f)/\alpha) \leq \RNL(f) \leq O((\ginf(f))^2)$, where $\alpha=\frac{1}{1-2\eps}$. 
\end{corollary}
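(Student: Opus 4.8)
The plan is to obtain Corollary~\ref{cor:factorization-norms} purely by combining Theorem~\ref{thmrand} with facts about factorization norms that are recalled in the Preliminaries, so there is essentially no new argument to make — the work is in chaining the right inequalities. Concretely, Theorem~\ref{thmrand} gives $R_\eps^\rightarrow(f) \leq \RNL(f) \leq O(R^{||,MAJ}_\epsilon(f))$, so it suffices to (i) lower bound $R_\eps^\rightarrow(f)$ by $2\log(\galpha(f)/\alpha)$ and (ii) upper bound $R^{||,MAJ}_\epsilon(f)$ by $O((\ginf(f))^2)$.

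For step (i), I would recall from the Preliminaries (the bounds attributed to \cite{ls07}) that $2\log(\galpha(f)/\alpha) \leq R_\eps(f)$ with $\alpha = \frac{1}{1-2\eps}$. Since two-way communication is no more powerful than one-way, $R_\eps(f) \leq R_\eps^\rightarrow(f)$, and hence $2\log(\galpha(f)/\alpha) \leq R_\eps^\rightarrow(f) \leq \RNL(f)$. For step (ii), I would again quote the statement from the Preliminaries that $R^{||,MAJ}_\epsilon(f) \leq O((\ginf(f))^2)$ (implicit in \cite{ls07}). Plugging this into the upper bound of Theorem~\ref{thmrand} yields $\RNL(f) \leq O(R^{||,MAJ}_\epsilon(f)) \leq O((\ginf(f))^2)$. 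Putting the two chains together gives exactly the claimed two-sided bound.

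The only place requiring any care — and thus the ``main obstacle,'' though it is minor — is bookkeeping of the error parameter and the parity-vs-standard-model conventions: one must make sure the $\eps$ appearing in $\galpha$ via $\alpha = \frac{1}{1-2\eps}$ is the same constant error used throughout Section~4 (success probability $\geq 2/3$), and that the one-way versus two-way distinction and the additive-constant slack from ``computing in parity'' are absorbed into the constants and the logarithm without affecting the stated form. Since all these are additive $O(1)$ or multiplicative constant adjustments, they are harmless. The proof is therefore a two-line derivation: apply the known factorization-norm bounds on $R_\eps$ and $R^{||,MAJ}_\epsilon$ from the Preliminaries and sandwich $\RNL(f)$ between them via Theorem~\ref{thmrand}.
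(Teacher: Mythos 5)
Your proof is correct and follows exactly the same route as the paper: chain Theorem~\ref{thmrand} with the two factorization-norm bounds from~\cite{ls07} recalled in the Preliminaries, bridging via $R_\eps(f) \leq R_\eps^\rightarrow(f)$. One small wording slip to flag — you write ``two-way communication is no more powerful than one-way,'' which is the reverse of what you mean (one-way is the restricted model, which is why $R_\eps(f) \leq R_\eps^\rightarrow(f)$) — but the inequality you actually invoke is the correct one, so the chain goes through.
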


\begin{proof}
It follows from our theorem and the inequalities $2\log (\galpha(f)/\alpha) \leq R_\eps(f)$ (see~\cite{ls07}) and $R^{||,MAJ}_\epsilon(f) \leq O((\ginf(f))^2)$ (also implicitly in \cite{ls07}).
\end{proof}

It is known that $\ginf(f) = \Theta (\frac{1}{Disc(f)})$, and also that for any $\alpha$, $\ginf(f) \leq \galpha(f)$~\cite{ls07}.
Hence, since discrepancy gives a lower bound on the quantum communication complexity with entanglement $Q_\eps^*(f)$~\cite{kremer}, we get the following corollary. 

\begin{corollary}
$NL_\epsilon (f) \leq O(2^{2Q_\eps^*(f)})$.
\end{corollary}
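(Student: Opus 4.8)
The plan is to chain together three facts already available in the excerpt. First, Corollary~\ref{cor:factorization-norms} gives $\RNL(f) \leq O((\ginf(f))^2)$. Second, the excerpt recalls that $\ginf(f) = \Theta(1/\disc(f))$, where $\disc(f)$ is the discrepancy of the sign matrix of $f$ (minimized, implicitly, over input distributions). Third, a classical result of Kremer~\cite{kremer} states that discrepancy lower-bounds quantum communication complexity with entanglement: concretely, $Q_\eps^*(f) \geq \Omega(\log(1/\disc(f)))$, or equivalently $\disc(f) \geq 2^{-O(Q_\eps^*(f))}$. So the whole corollary is essentially a one-line substitution, and the proof I would write is short.

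The key steps, in order: (i) Start from $\RNL_\eps(f) \leq O((\ginf(f))^2)$ via Corollary~\ref{cor:factorization-norms}. (ii) Substitute $\ginf(f) = \Theta(1/\disc(f))$ to get $\RNL_\eps(f) \leq O(1/\disc(f)^2)$. (iii) Invoke Kremer's bound in the form $1/\disc(f) \leq 2^{O(Q_\eps^*(f))}$, so that $1/\disc(f)^2 \leq 2^{O(Q_\eps^*(f))}$, and hence $\RNL_\eps(f) \leq O(2^{2Q_\eps^*(f)})$ after absorbing constants into the exponent's $O(\cdot)$ (or stating it as $2^{O(Q_\eps^*(f))}$ to be safe about constants). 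I would phrase the final bound exactly as in the statement, noting that the constant in the exponent can be taken to be $2$ up to the choice of normalization in the discrepancy/Kremer inequalities.

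The only real subtlety — and hence the ``main obstacle,'' though it is minor — is bookkeeping about which discrepancy quantity and which error parameters appear in each cited inequality: Corollary~\ref{cor:factorization-norms} is stated for a fixed constant $\eps$ (success $2/3$), the relation $\ginf = \Theta(1/\disc)$ is from~\cite{ls07}, and Kremer's discrepancy bound on $Q_\eps^*$ carries its own constants and its own $\eps$. I would check that these constants compose cleanly (they do, since all error parameters are constants bounded away from $1/2$ and everything is inside an $O(\cdot)$ in the exponent), and otherwise present the derivation as the straightforward three-step substitution above. No new ideas are needed beyond combining the stated results.

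\begin{proof}
By Corollary~\ref{cor:factorization-norms}, $\RNL(f) \leq O((\ginf(f))^2)$. Using $\ginf(f) = \Theta(1/\disc(f))$~\cite{ls07}, this gives $\RNL(f) \leq O(1/\disc(f)^2)$. Finally, since discrepancy lower-bounds quantum communication complexity with entanglement, $\disc(f) \geq 2^{-O(Q_\eps^*(f))}$~\cite{kremer}, so $1/\disc(f)^2 \leq 2^{O(Q_\eps^*(f))}$, and the constant in the exponent can be taken to be $2$, yielding $NL_\epsilon(f) \leq O(2^{2Q_\eps^*(f)})$.
\end{proof}
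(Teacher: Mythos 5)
Your proof is correct and follows exactly the chain of reasoning the paper uses: apply Corollary~\ref{cor:factorization-norms} to bound $\RNL(f)$ by $O((\ginf(f))^2)$, substitute $\ginf(f) = \Theta(1/\disc(f))$, and invoke Kremer's discrepancy lower bound on $Q_\eps^*(f)$. The paper leaves this as an immediate consequence of the preceding remarks, and your three-step substitution is precisely the intended argument.
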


Finally, we can relate the \nlb\ complexity of a function $f$, to the $L_1$ norm of the Fourier coefficients of $f$ by using a result by Grolmusz. Grolmusz showed that for any Boolean function $f$, there exists a randomized public coin protocol that solves $f$ with complexity $O(L_1^2(f))$. This protocol can be easily transformed into a simultaneous messages protocol where the referee outputs the distributed majority of the message bits. Hence,   

\begin{corollary}
$NL_\epsilon(f) \leq O(L_1^2(f))$.
\end{corollary}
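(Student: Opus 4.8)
The plan is to chain two facts: Grolmusz's upper bound on randomized communication complexity via the $L_1$ norm, and the previously proven bound $\RNL(f) \leq O(R^{||,MAJ}_\epsilon(f))$ from Theorem~\ref{thmrand}. So the whole argument reduces to showing that Grolmusz's protocol can be cast as a simultaneous-messages protocol in which the referee outputs the distributed majority of the bits of the two messages, i.e. that $R^{||,MAJ}_\epsilon(f) \leq O(L_1^2(f))$.

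First I would recall the structure of Grolmusz's protocol. Writing $f(x)=\sum_S \alpha_S \chi_S(x)$ with the $2n$ input bits split between Alice and Bob, one normalizes the coefficients to a probability distribution $\mu_S = |\alpha_S|/L_1(f)$. The shared random string selects $O(L_1^2(f))$ characters $S_1,\dots,S_m$ independently according to $\mu$. For each sampled character $S_j$, Alice can compute the part of $\chi_{S_j}$ depending on her bits and Bob the part depending on his; crucially the sign $\sgn(\alpha_{S_j})$ and the split of $S_j$ are determined by the public randomness alone, so Alice can fold the sign into her contribution. Then $\chi_{S_j}(x) = a_j \cdot b_j$ (as $\pm1$ values) where $a_j$ depends only on $x$ and the randomness, and $b_j$ only on $y$ and the randomness, and $\mathbb E_j[a_j b_j] = f(x)/L_1(f)$ is bounded away from $0$ in the direction matching $\mathrm{sgn}(f(x))$. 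By a Chernoff bound, the sign of $\sum_j a_j b_j$ equals $f(x)$ with probability $\geq 2/3$ once $m = O(L_1^2(f))$.

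Next I would observe that $\sgn(\sum_j a_j b_j)$, with $a_j,b_j \in \{\pm1\}$, is exactly a majority vote: writing $\pm1$ as $\{0,1\}$ bits $a_j', b_j'$, one has $a_j b_j = 1$ iff $a_j' = b_j'$, so $\sgn(\sum_j a_j b_j)$ is precisely $MAJ(a_1'\oplus b_1', \dots, a_m'\oplus b_m')$ (handling ties by padding or by the error margin, which is $\Theta(m)$ away from a tie with high probability). Therefore Alice sends $\ba = (a_1',\dots,a_m')$, Bob sends $\bb = (b_1',\dots,b_m')$, each of length $O(L_1^2(f))$, and the referee outputs $MAJ(\ba \oplus \bb)$, giving $R^{||,MAJ}_\epsilon(f) \leq O(L_1^2(f))$. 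Combining with Theorem~\ref{thmrand} yields $NL_\epsilon(f) \leq O(R^{||,MAJ}_\epsilon(f)) \leq O(L_1^2(f))$.

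The main obstacle — though a minor one — is verifying the tie-breaking and the exact correspondence between the sign of a $\pm1$ sum and the $MAJ$ gate as defined in the preliminaries (where $MAJ(\ba,\bb) = MAJ(a_1\oplus b_1,\dots,a_t\oplus b_t)$): one must ensure that with probability $\geq 2/3$ the number of agreements strictly exceeds $m/2$ (or strictly falls below, for the other value), which follows because the expected bias is a constant fraction of $m$ and the sample size is chosen to make deviations smaller than that with constant probability; a negligible amount of padding fixes the boundary case. I would also note in passing that all quantities here are over $\mathbb{R}$ in Grolmusz's argument and only get mapped to $\mathbb{GF}_2$ at the very last step when converting $\pm1$ products to XORs of bits, so there is no subtlety about which field is in play.
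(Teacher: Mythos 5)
Your proposal is correct and takes essentially the same route as the paper: the paper simply cites Grolmusz's $O(L_1^2(f))$ randomized protocol, asserts that it can be recast as a simultaneous-messages protocol with a $MAJ$ referee, and then invokes Theorem~\ref{thmrand}; you have merely filled in the (correct) details of why Grolmusz's sampled-character estimator is already in $R^{||,MAJ}_\epsilon$ form, including the $\pm1$-to-bit conversion and the tie-breaking margin.
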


Let us make here a last remark about the proof of Theorem \ref{thmrand}.  We started from a Simultaneous Messages protocol where the referee outputs a Majority function and we constructed a \nlb\ protocol with complexity equal to the communication complexity. If we look at this protocol, we can see that Alice and Bob can use their \nlbs\ in the same order. 
This will be useful when we relate \nlbs\ to secure function evaluation.

\begin{corollary}\label{ordered}
\COMMENT{For any Boolean function $f$, }
$R_\eps^\rightarrow(f) \leq \RNL(f) \leq \RNLo(f) \leq O(R^{||,MAJ}_\epsilon(f))$.
\end{corollary}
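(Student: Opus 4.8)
The plan is to observe that Corollary~\ref{ordered} is not a fresh statement but a repackaging of Theorem~\ref{thmrand} together with the trivial ordering containments, combined with the ``last remark'' that the upper-bound construction in the proof of Theorem~\ref{thmrand} already uses the \nlbs\ in a fixed common order. Concretely, the chain $R_\eps^\rightarrow(f) \leq \RNL(f)$ is exactly the lower bound half of Theorem~\ref{thmrand}, and $\RNL(f) \leq \RNLo(f)$ is immediate from the definitions, since any protocol using the $t$ \nlbs\ in a common order is in particular a protocol using $t$ \nlbs. So the only thing that needs argument is the last inequality $\RNLo(f) \leq O(R^{||,MAJ}_\eps(f))$.

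First I would recall the construction from the proof of Theorem~\ref{thmrand}: start from a $t$-bit Simultaneous Messages protocol in which the referee, on messages $\ba=(a_1,\dots,a_t)$ and $\bb=(b_1,\dots,b_t)$, outputs $MAJ(a_1\oplus b_1,\dots,a_t\oplus b_t)$. Alice and Bob feed the $t$ pairs $(a_i,b_i)$ into $t$ \nlbs\ (in parallel), obtaining distributed bits $c_i = a_i\oplus b_i$ shared additively across the two players. Then they evaluate the size-$O(t)$ fan-in-2 circuit (built from an addition/threshold circuit for $MAJ$ using $AND$ and $NOT$ gates) on these distributed bits, replacing each $AND$ gate by the two-\nlb\ gadget of~\cite{bbl+06} and each $NOT$ gate by a free local negation. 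The point to stress is that a circuit gives a topological order on its gates, so the \nlb\ gadgets can be invoked in that order: the inputs to the two \nlbs\ simulating a given $AND$ gate depend only on distributed bits produced by \nlbs\ simulating earlier gates in the topological order (and on the initial $t$ parallel \nlbs, which come first). Hence one can label all $O(t)$ \nlbs\ consistently with this order so that Alice's input to \nlb\ $i$ never depends on outputs of \nlbs\ labelled $i+1,\dots$, and symmetrically for Bob; this is precisely the defining condition of an $\RNLo$ protocol.

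So the steps, in order, are: (i) cite $R_\eps^\rightarrow(f)\leq\RNL(f)$ and the definitional bound $\RNL(f)\leq\RNLo(f)$; (ii) take an optimal $R^{||,MAJ}_\eps$ protocol with message length $t = R^{||,MAJ}_\eps(f)$; (iii) reproduce the distributed-$MAJ$ simulation, noting the $O(t)$-size $AND/NOT$ circuit for $MAJ$; (iv) argue that a topological ordering of the circuit's gates, with the initial parallel \nlbs\ placed first, induces a valid common ordering of all $O(t)$ \nlbs; (v) conclude $\RNLo(f) = O(t) = O(R^{||,MAJ}_\eps(f))$, and chain the inequalities.

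The only genuinely substantive point — and the ``main obstacle'' if one were being careful — is step (iv): verifying that the $\nlb$-gadget for a distributed $AND$ from~\cite{bbl+06} can indeed be executed in order, i.e.\ that neither player's inputs to its two \nlbs\ secretly depend on outputs of later-ordered \nlbs. This holds because the gadget is a one-shot reduction (given additively-shared inputs it produces an additively-shared output using a bounded number of \nlbs\ with inputs computed locally from the shares), so there is no feedback from the output side back to the input side; composing such one-shot gadgets along a topological order of an acyclic circuit keeps the whole thing ordered. Everything else is bookkeeping: relabelling the $O(t)$ \nlbs\ in topological order and checking the dependence condition holds for both players. No new inequality or complexity-measure estimate is needed beyond those already invoked in Theorem~\ref{thmrand}.
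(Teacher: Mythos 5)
Your proposal is correct and follows exactly the route the paper takes: the first inequality is Theorem~\ref{thmrand}, the second is definitional, and the third is the observation (stated as a remark just before the corollary) that the circuit-based construction in the proof of Theorem~\ref{thmrand} already uses the \nlbs\ in a common order. You have in fact spelled out the topological-ordering justification in more detail than the paper, which merely asserts it, but the argument is the same.
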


\subsection{Optimality of our bounds and an efficient parallel protocol for Disjointness}

In the deterministic case, we showed that our bounds are tight and also that the parallel and the general $\nlb$ complexity can be exponentially different. Is the same true for the randomized case?

In fact, the Disjointness and Inner Product functions almost saturate our bound in terms of $\ginf$ for the general randomized $\nlb$ complexity. More precisely, 
for the Disjointness function, we have that $\RNL(DISJ) = \Theta(n)$ (since  $R_{\eps}(DISJ)=\Omega(n)$ and $\NL(DISJ) \leq O(n))$ and 
using discrepancy~\cite[Exercise 3.32]{KN97}, we have $(\ginf(DISJ))^2 = \Theta(n^2)$.
On the other hand, for the Inner Product function we have $\RNL(IP)=\Theta(n)$ but $(\ginf(IP))^2= \Theta(2^n)$.

The case of parallel $\nlb$ complexity is more interesting. We can give a simple parallel protocol for the Disjointness function of complexity $O(n)$, hence showing that the exponential separation does not hold anymore. It is an open question whether or not parallel and general randomized $\nlb$ complexity are polynomially related.

\begin{proposition}
% \paragraph{Protocol for $DISJ$ with parallel $\nlb$ complexity $O(n)$\\\\} 
$\NLp_{1/3}(DISJ) \leq O(n).$
\end{proposition}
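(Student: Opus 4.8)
The plan is to use the randomized approximate rank characterization $\RNLpx(f) = \epsrk(M_f)$ together with the fact that an $\epsrk$ upper bound is achieved by a probability distribution over low-$\Ftwo$-rank Boolean matrices that agree with $M_{DISJ}$ pointwise with high probability. So the task reduces to exhibiting a distribution over Boolean matrices $A_r$, each of $\Ftwo$-rank $O(n)$, such that for every pair $(x,y)$ we have $\Pr_r[A_r(x,y) = DISJ(x,y)] \geq 2/3$. Equivalently, by Theorem~\ref{theorem:nlb-rank}, each $A_r$ should be computable by $O(n)$ parallel \nlbs\ with the XOR output convention, i.e.\ writable as $\bigoplus_{i=1}^{O(n)} p_i^{(r)}(x)\cdot q_i^{(r)}(y)$.

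First I would recall the structure of Disjointness: $DISJ(x,y) = \neg \bigvee_i (x_i \wedge y_i)$, i.e.\ it is $1$ iff the sets are disjoint. The obstruction to a deterministic parallel protocol is exactly that the OR of the $n$ bits $x_i\wedge y_i$ has full $\Ftwo$-rank. The randomized idea is the standard one for reducing OR to parity: pick a uniformly random subset $S \subseteq [n]$ (or a random vector in $\Ftwo^n$), and compute the \emph{parity} $\bigoplus_{i \in S}(x_i \wedge y_i)$ instead of the OR. If $x$ and $y$ are disjoint, every term is $0$, so this parity is always $0$, matching $DISJ = 1$ after a final negation. If $x$ and $y$ intersect, then $\bigoplus_{i\in S}(x_i\wedge y_i)$ is $1$ with probability exactly $1/2$ over the choice of $S$ (since at least one index contributes and a random subset flips that coordinate's inclusion with probability $1/2$). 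So a single such $A_r$ errs with probability $1/2$ on the intersecting inputs — not good enough — but this is easily amplified: take the AND (over $\Ftwo$: via the $\oplus$-AND gadget from \cite{bbl+06}, or simply note that Boolean matrices of rank $O(n)$ over $\Ftwo$ are closed under the relevant combination) of a constant number $k$ of independent copies, or equivalently output $1$ only if all $k$ random parities vanish. With $k$ a constant, intersecting inputs are declared disjoint with probability $2^{-k} \leq 1/3$, and disjoint inputs are always declared disjoint. Each parity has $\Ftwo$-rank $1$, so the combined object has $\Ftwo$-rank $O(1)$; after accounting for the $\oplus$-AND gadget overhead and the local terms, the total is $O(1)$, in particular $O(n)$, \nlbs\ used in parallel.

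Concretely, I would argue: each index contributes a single \nlb\ with inputs $(x_i, y_i)$ only when $i\in S$, giving at most $n$ \nlbs\ per copy and $kn$ overall before the AND combination — already $O(n)$ — and then the constant number of AND-of-parities combinations adds only $O(1)$ more \nlbs\ by the gadget of \cite{bbl+06} (two \nlbs\ per AND). Then invoke the characterization: the protocol uses \nlbs\ in parallel, outputs the XOR of the outcomes up to a final negation absorbed into a local term, is correct with probability $\geq 2/3$, hence $\NLp_{1/3}(DISJ) = \RNLpx_{1/3}(DISJ) \leq O(n)$, where the first equality uses Theorem~\ref{XOR}. The main obstacle I anticipate is a bookkeeping one rather than a conceptual one: making sure the "AND of several $\Ftwo$-parities" step genuinely stays within the parallel model — the $\oplus$-AND gadget feeds outputs of \nlbs\ as inputs to further \nlbs, which is a priori sequential — so one must verify that combining only a constant number of parities keeps the dependency depth constant and can be flattened into a parallel protocol, or alternatively run the amplification at the level of the outer probability distribution over matrices (majority vote over $k$ independent draws of $S$) so that no \nlb\ ever depends on another's output; the latter is cleaner and is what I would write up, since it keeps everything manifestly parallel and the rank bound per matrix at $O(n)$, with the final majority handled by the distribution, not by additional \nlbs.
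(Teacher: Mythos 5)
Your core idea is the right one and matches the paper's: reduce Disjointness to a random inner product $\bigoplus_{i\in S}(x_i\wedge y_i)$ over a uniformly random $S\subseteq[n]$, which costs $n$ parallel \nlbs, is always correct on disjoint inputs, and is correct with probability exactly $1/2$ on intersecting inputs. The problem is everything you do after that step.

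First, a factual error: a single random parity $\bigoplus_{i\in S}(x_i\wedge y_i)$ has $\Ftwo$-rank $|S|=\Theta(n)$, not $1$ as you claim (``each parity has $\Ftwo$-rank $1$''), so the premise of your ``combined object has $\Ftwo$-rank $O(1)$'' conclusion is false. Second, the two amplification schemes you propose both fail in the parallel model. Taking the $\Ftwo$-\AND\ of $k$ independent parities $P_1,\ldots,P_k$ is not a parallel protocol; when you try to flatten it (expand $\prod_j P_j$ as a sum of rank-one terms $\bigl(\prod_j x_{i_j}\bigr)\bigl(\prod_j y_{i_j}\bigr)$) the $\Ftwo$-rank is generically $\Theta(n^k)$, so for any $k\geq 2$ this gives $O(n^k)$, not $O(n)$. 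Your fallback, ``majority vote over $k$ independent draws of $S$ handled by the outer distribution,'' does not amplify at all: on an intersecting input each draw is a fair coin, so the majority is still a fair coin, and in any case the $\epsrk$ characterization requires each individual matrix $A_r$ in the support to be low rank, so you would still have to compute $MAJ$ inside a single $A_r$, which again destroys the rank. Third, your final line ``$\NLp_{1/3}(DISJ)=\RNLpx_{1/3}(DISJ)$ by Theorem~\ref{XOR}'' is wrong: Theorem~\ref{XOR} is only stated and proved for the \emph{deterministic} parallel model, and the paper explicitly lists removing the $\oplus$-restriction in the randomized parallel case as an open question. (It is also unnecessary: only the trivial direction $\RNLp\leq\RNLpx$ is needed.)

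The missing idea, which is what the paper actually does, is \emph{not} to drive the one-sided error down but to \emph{balance} it by injecting artificial error into the already-correct side. Having computed a single random parity with $n$ parallel \nlbs, Alice and Bob use shared randomness to output the computed value with probability $1-p$ and a fixed ``intersecting'' answer with probability $p$. Disjoint inputs then succeed with probability $1-p$, intersecting inputs with probability $p+(1-p)/2=(1+p)/2$, and taking $p=1/3$ equalizes both at $2/3$ --- all with a single rank-$O(n)$ matrix per value of the shared randomness, hence $\RNLpx_{1/3}(DISJ)\leq O(n)$ and a fortiori $\RNLp_{1/3}(DISJ)\leq O(n)$. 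Your writeup needs this balancing step in place of the \AND/majority amplification.
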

\begin{proof}
The idea is to reduce the Disjointness problem to a problem of calculating an Inner Product, which we know how to do with $n$ parallel $\nlbs$. In order to solve the general Disjointness problem with high probability, Alice and Bob proceed as follows: they look at a shared random string $r_1,\ldots,r_n$ and consider the strings $x \wedge r$ and $y \wedge r$ as inputs. In other words, they pick a random subset of their input bits, by picking each index with probability $1/2$. Then they perform an Inner Product calculation on their new inputs by using $n$ \nlbs\ in parallel. Let $a \oplus b = IP(x \wedge r, y \wedge r)$. It is easy to see that if $DISJ(x,y)=0$, then $IP(x \wedge r, y \wedge r)=0$ for all $r$. On the other hand, if $DISJ(x,y)=1$, \ie~if the intersection is non-empty, then  $\Prob_r[ IP(x \wedge r, y \wedge r) = 1]=1/2$, since for a random subset, the probability that the size of the intersection on this subset is odd is exactly the same as the probability that the intersection is even. Hence, %if Alice and Bob output $a,b$ respectively, then 
we have a one-sided error algorithm for Disjointness that is always correct when $DISJ(x,y)=0$ and is correct with probability $1/2$ when $DISJ(x,y)=1$. 

We can get a two-sided error algorithm in the following way:
Alice and Bob simulate the protocol above until they obtain the outputs $a,b$. Then, using their shared randomness, they output $a \oplus b$ with probability $1-p$, and $0\oplus 1$ or $1\oplus 0$ with probability $p/2$. It is easy to see that when $DISJ(x,y)=0$ then the success probability is $1-p$ and when $DISJ(x,y)=1$ the success probability is $p+(1-p)/2 = (1+p)/2$. Taking $p=1/3$ makes the overall success probability of our algorithm~$2/3$.
\end{proof}

\section{Non-local boxes and measurement simulation}

The question of the nature of non-local distributions arising from 
measurements of bipartite quantum states dates back to
Maudlin~\cite{maudlin92}, who used communication complexity
to quantify non-locality.  Perhaps a more natural question
is how many non-local boxes are required to simulate quantum
distributions, since non-local boxes maintain the non-signaling
property of these distributions.  For binary measurements on
maximally entangled qubit pairs
it is known that 1 use of a non-local box suffices~\cite{cgmp05}.

In this section we present another application of our results on $\nlbs$. 
Using the recent breakthrough of Regev and Toner~\cite{rt07}, 
who give a two-bit one-way protocol for 
simulating two-outcome measurements on entangled states
for arbitrary dimensions,
we show that this can be done with 3 $\nlbs$.  
Previously, no finite upper bound was known
for this problem.
%\comment{Cite  a paper of Valerio Scarani? -- I cannot find this paper, just the one for non-maximally entangled qubit pairs using 8 NLB + 4 millionnaire boxes}

Let $\mathbf{p}$ be a distribution over measurement outcomes
$\mathcal{A}\times \mathcal{B}$, conditioned on measurements
$\mathcal{X}\times\mathcal{Y}$.  For measurements on
quantum states, the distribution is non-signaling, that is,
the marginal distributions do not depend on
the other player's measurement: 
\[
\forall a\in\mathcal{A}, b\in \mathcal{B},
x\in \mathcal{X},x'\in \mathcal{X}, y\in\mathcal{Y},y'\in\mathcal{Y}, \;\;\; p(a|x,y) = p(a|x,y') \mbox{ and } p(b|x,y) = p(b|x',y).
\]
Therefore we write the marginals $p(a|x)$ and $p(b|y)$. In this paper we focus on distributions with uniform marginals over $\mathcal{A}=\mathcal{B}= \{0,1\}$. These distributions are in bijection with $[0,1]$-valued matrices.
\begin{defn}[Correlation matrix] 
 Let $\mathbf{p}$ be a distribution with uniform marginals over $\mathcal{A}=\mathcal{B}= \{0,1\}$, conditioned on measurements $\mathcal{X}\times\mathcal{Y}$. The correlation matrix $C_{\mathbf{p}}:\mathcal{X}\times\mathcal{Y}\to [0,1]$ of $\mathbf{p}$ is defined as $C_{\mathbf{p}}(x,y)=\Pr[a\oplus b=1|x,y]$, where $a,b$ are distributed according to $\mathbf{p}$.
\end{defn}
It is not hard to prove that the set of $[0,1]$-valued matrices is the convex hull of the set of Boolean matrices. This implies that the corresponding non-signaling distributions
can be written as convex combinations of distributions of the
following form. For any  $f:\mathcal{X}\times \mathcal{Y}\rightarrow \{0, 1\}$,
we define the associated distribution 
\[\mathbf{p}_{f}(a,b|x,y)=\left\{ 
	\begin{array}{ll} 
			\frac{1}{2}&\text{if } f(x,y)=a\oplus b\\
			0 & \text{otherwise}.
	\end{array}\right. 
\]
In other words, the correlation matrix $C_{\mathbf{p}_f}(x,y)$ is Boolean and coincides with the communication matrix $M_f$.  Observe that
any protocol for $f$ simulates the distribution $\mathbf{p}_f$,
since we may assume without loss of generality that the outcomes
are uniformly distributed (otherwise, Alice and Bob can 
flip their outcomes according to a shared random bit).

Just as for functions, we can define the communication and \nlb\ complexities of a distribution $\mathbf{p}$. When error $\epsilon$ is allowed, we require that the distribution $\mathbf{p}'$ simulated by the protocol be such that $\norm{C_{\mathbf{p}}-C_{\mathbf{p}'}}_\infty\leq\epsilon$.
Since binary distributions with uniform marginals may be represented as convex combinations of distributions arising from functions, we can generalize some results of the previous section to this case:
\begin{theorem}
 For any distribution $\mathbf{p}$ with uniform marginals over $\mathcal{A}=\mathcal{B}= \{0,1\}$, we have
\begin{itemize}
 \item $R_\epsilon^\rightarrow(\mathbf{p}) \leq \RNL(\mathbf{p}) \leq \RNLpx(\mathbf{p}) \leq 2^{R_\epsilon(\mathbf{p})}$,
 \item $\RNLpx(\mathbf{p}) = \epsrk(C_{\mathbf{p}})$.
\end{itemize}
\end{theorem}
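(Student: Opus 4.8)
The plan is to reduce everything to the function case already established in Theorem~\ref{theorem:nlb-rank}, Theorem~\ref{thm:randomized-parallel} and Corollary~\ref{cor:NLpx}, exploiting the two facts stated just above the theorem: the $[0,1]$-valued matrices form the convex hull of the Boolean matrices, and binary distributions with uniform marginals are in bijection with $[0,1]$-valued correlation matrices. The reduction is clean because the error criterion for simulating a distribution, $\norm{C_{\mathbf p}-C_{\mathbf p'}}_\infty\le\epsilon$, is exactly an $\ell_\infty$ condition on correlation matrices, of the same type as the one defining $\epsrk$. Moreover, since we may flip both outcomes on a shared coin, every protocol can be taken to have uniform marginals, so it simulates a distribution of the form $\sum_j\mu_j\,\mathbf p_{f_j}$, whose correlation matrix is $\sum_j\mu_j M_{f_j}$.

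For the chain $R_\epsilon^\rightarrow(\mathbf p)\le\RNL(\mathbf p)\le\RNLpx(\mathbf p)\le 2^{R_\epsilon(\mathbf p)}$, the middle inequality is immediate from the definitions. For $R_\epsilon^\rightarrow(\mathbf p)\le\RNL(\mathbf p)$ I would reuse the argument of Theorem~\ref{thm:randomized-parallel}: Alice sends Bob all her inputs to the $\nlbs$, the players use shared randomness to sample Alice's outcomes, and from these Bob reconstructs his own inputs and outcomes; the resulting one-way protocol simulates \emph{exactly} the same distribution as the $\nlb$ protocol, hence meets the $\epsilon$-error requirement. For the upper bound, I would write a cost-$t$ randomized protocol for $\mathbf p$ as a distribution $\{\mu_r\}$ over deterministic cost-$t$ protocols $P_r$, each computing in parity some Boolean function $f_r$ (so $D(f_r)\le t$); by Corollary~\ref{cor:NLpx}, $\NLpx(f_r)\le 2^{t}$, and the same distribution over the corresponding parallel $\oplus$-output $\nlb$ protocols simulates the correlation matrix $\sum_r\mu_r M_{f_r}$, which is within $\ell_\infty$-distance $\epsilon$ of $C_{\mathbf p}$ because the original communication protocol had error $\epsilon$; hence $\RNLpx(\mathbf p)\le 2^t$.

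For the identity $\RNLpx(\mathbf p)=\epsrk(C_{\mathbf p})$, both directions lift the Boolean-function argument of the previous section. Given a parallel $\oplus$-output $\nlb$ protocol for $\mathbf p$ using $t$ boxes, fix the shared randomness $r$: since the players output the parity of all box outcomes, the intrinsic randomness of the $\nlbs$ cancels and the protocol deterministically computes $g_r(x,y)=\bigoplus_{i\in[t]}p_i^{(r)}(x)\cdot q_i^{(r)}(y)$, so $\rk(M_{g_r})\le t$; the distribution simulated for this $r$ is $\mathbf p_{g_r}$, so the overall simulated correlation matrix $\sum_r\mu_r M_{g_r}$ lies in $\mathcal P_t$ and is $\epsilon$-close to $C_{\mathbf p}$, giving $\epsrk(C_{\mathbf p})\le t$. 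Conversely, if $\epsrk(C_{\mathbf p})=t$, take $A'=\sum_j\mu_j A_j\in\mathcal P_t$ with $\norm{C_{\mathbf p}-A'}_\infty\le\epsilon$, write each Boolean $A_j$ as $M_{g_j}$ with $\rk(M_{g_j})\le t$, apply Theorem~\ref{theorem:nlb-rank} to get a parallel $\oplus$-output $\nlb$ protocol for $g_j$ using $t$ boxes, and have the players sample $j\sim\mu$ and run it (with uniform outcomes); the simulated correlation matrix is $A'$, within $\epsilon$ of $C_{\mathbf p}$, so $\RNLpx(\mathbf p)\le t$.

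I do not expect a real obstacle: the whole argument is a faithful lift of the function-case proofs. The only points needing care are (i) checking that the ``flip both outcomes on a shared coin'' step genuinely lets us assume uniform marginals throughout, so that simulated distributions and their convex combinations always have the form $\mathbf p_f$; and (ii) verifying that each protocol transformation above preserves the $\ell_\infty$ metric on correlation matrices — which it does, because each transformation reproduces the \emph{same} simulated distribution or the \emph{same} convex combination of distributions — and that this metric coincides with the $\ell_\infty$ condition in the definition of $\epsrk$.
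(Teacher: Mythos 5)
Your proposal is correct and follows essentially the same route as the paper: the first chain is obtained exactly as in Theorem~\ref{thm:randomized-parallel} by decomposing the randomized communication protocol into deterministic ones and invoking Corollary~\ref{cor:NLpx}, and the identity $\RNLpx(\mathbf p)=\epsrk(C_{\mathbf p})$ is proved by the same convex-combination lift of Theorem~\ref{theorem:nlb-rank}. The only cosmetic difference is that you write out both directions of the equality, whereas the paper writes one and remarks the converse is analogous.
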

\begin{proof}
For the first inequality, the \nlbs\ are simulated by one-way communication the same way as they were for functions: the players use shared randomness to simulate Alice's output to the \nlbs, then Alice computes her inputs to the \nlbs\ according to those outputs, and sends them to Bob (see proof of Theorem~\ref{thmrand} for details).

Let $t=R_\epsilon(\mathbf{p})$, and consider the corresponding randomized communication protocol. Let the shared randomness take value $r$ with probability $p_r$. For any possible value $r$, the protocol will compute a Boolean function $f_r$ such that $D(f_r)\leq t$, and $\norm{C_{\mathbf{p}}-\sum_r p_r M_{f_r}}_\infty\leq \epsilon$. From Corollary~\ref{cor:NLpx}, $ \NLpx(\mathbf{p})\leq 2^t$. Executing the \nlb\ protocol for $f_r$ with probability $p_r$, we obtain a \nlb\ protocol simulating $\mathbf{p}$ with error at most $\epsilon$, so that $\RNLpx(\mathbf{p})\leq 2^t$.

Let $t=\epsrk(C_{\mathbf{p}})$. By definition, there exist Boolean matrices $A_r$ and a probability distribution $p_r$ such that $\rk(A_r)\leq t$ and $\norm{C_{\mathbf{p}}-\sum_r p_r A_r}_\infty\leq\epsilon$. Let $f_r$ be the Boolean function described by communication matrix $A_r$. Theorem~\ref{theorem:nlb-rank} implies that $\NLpx(f_r) \leq t$. Executing the \nlb\ protocol for $f_r$ with probability $p_r$, we obtain a \nlb\ protocol simulating $\mathbf{p}$ with error at most $\epsilon$, so that $\RNLpx(\mathbf{p})\leq t$. The proof goes conversely as well.
\end{proof}

The upper bound on the \nlb\ complexity in terms of the communication complexity may be slightly improved. While this is an insignificant improvement for most applications involving Boolean functions, this becomes relevant when considering low communication complexity distributions, as is the case for some quantum distributions.

\begin{theorem}\label{simulation}
For any distribution $\mathbf{p}$ with uniform marginals over $\mathcal{A}=\mathcal{B}= \{0,1\}$, we have
$\RNLp(\mathbf{p}) \leq 2^{R_\epsilon(\mathbf{p})}-1$.
%  For any non-signaling distribution over binary outputs with uniform marginals, any protocol with $t$ bits of communication can be simulated with $2^{t}-1$ \nlbs\ in parallel.
\end{theorem}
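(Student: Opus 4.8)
The plan is to improve the bound $\RNLpx(\mathbf p)\le 2^{R_\epsilon(\mathbf p)}$ by a careful bookkeeping of which Boolean matrices of rank at most $t=R_\epsilon(\mathbf p)$ actually arise, and to shave off one $\nlb$ by exploiting the fact that one row (or column) of a rank-$t$ matrix over $\Ftwo$ is essentially free. Concretely, recall from Theorem~\ref{theorem:nlb-rank} that $\NLpx(f)=\rk(M_f)$, and that $\rk(M_f)$ is the least $m$ with $f(x,y)=\bigoplus_{i=1}^m p_i(x)\cdot q_i(y)$. A rank-$t$ Boolean function with a distinguished input, say when Alice's input is some fixed $x_0$, has the property that on that row the function equals $\bigoplus_i p_i(x_0)q_i(y)$, a fixed $\Ftwo$-affine combination of the $q_i$'s; the point is that after fixing the randomness in the communication protocol, the deterministic protocol $P_r$ has a transcript tree of depth $\le t$, and one can always arrange the decomposition so that the ``constant'' leaf contributes no non-local box. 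I would therefore revisit the decomposition in the proof of Theorem~\ref{thmrand}: for each $r$, write $f_r(x,y)=\bigoplus_{i=1}^t p_i^{(r)}(x)q_i^{(r)}(y)$ with the last term reserved to absorb the local part, and show that one of the $t$ terms can be computed without a genuine $\nlb$ because $p_i^{(r)}$ or $q_i^{(r)}$ is identically $1$ on the relevant leaf.

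The cleaner route, and the one I would actually write, is combinatorial. Start from the fact (used for Corollary~\ref{cor:NLpx}) that $\log(2\,\rkF(M_f)-1)\le D(f)+1$ over any field, so in particular over $\Ftwo$ a depth-$t$ protocol gives $\rk(M_{f_r})\le (2^{t+1}-1)/2$, i.e. $\rk(M_{f_r})\le 2^t-1$ after accounting for the one-bit slack in the ``compute in parity'' model. Wait --- that already gives $2^t-1$ with the XOR output. To get it for $\RNLp$ (no XOR restriction) we invoke Theorem~\ref{XOR}, but that costs $+2$, overshooting. So instead I would argue directly: a deterministic parity-protocol with $t$ bits of communication partitions $\X\times\Y$ into at most $2^t$ monochromatic rectangles, but the all-zero rectangles and the combinatorics of a protocol tree of depth $t$ force the $\Ftwo$-rank of $M_{f_r}$ to be at most $2^t-1$. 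The first key step is thus the lemma: \emph{if $D(f)\le t$ (in the parity model) then $\rk(M_f)\le 2^{t}-1$.} This follows because a protocol tree of depth $t$ has at most $2^{t-1}$ leaves labelled $1$ by Alice and at most $2^{t-1}$ leaves with a nontrivial bilinear contribution, and the rank is bounded by the number of leaves that carry a genuinely bilinear term, which is at most $2^t-1$ since the tree has at most $2^t$ leaves and at least one is forced to be purely local (the empty transcript prefix cannot carry an $x$-$y$ product).

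The second step is the averaging: writing the randomized protocol for $\mathbf p$ as a mixture $\sum_r p_r P_r$ of deterministic depth-$(R_\epsilon(\mathbf p))$ parity-protocols, each $P_r$ computes a Boolean $f_r$ with $\rk(M_{f_r})\le 2^{R_\epsilon(\mathbf p)}-1$, hence by Theorem~\ref{theorem:nlb-rank} $\NLpx(f_r)\le 2^{R_\epsilon(\mathbf p)}-1$; mixing the corresponding $\nlb$ protocols gives $\RNLpx(\mathbf p)\le 2^{R_\epsilon(\mathbf p)}-1$, and since $\RNLp\le\RNLpx$ by definition, the theorem follows. The main obstacle is making the rank-counting lemma ``$D(f)\le t \Rightarrow \rk(M_f)\le 2^t-1$'' genuinely tight rather than off by a constant: the naive bound from the number of leaves of a protocol tree is $2^t$, and one must argue that one unit of rank is always recoverable — either because the all-zeros rectangle / trivial transcript contributes nothing, or via the $\log(2\,\rkF-1)\le D+1$ inequality from \cite{KN97} applied in the parity model, where the extra ``$+1$'' on the right is exactly what converts $2^{t+1}-1$ into $2^t-1$. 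I expect to lean on that cited inequality to get the constant exactly right, and the rest is routine.
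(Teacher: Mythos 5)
The central lemma you propose --- if $D(f)\le t$ in the parity model then $\rk(M_f)\le 2^t-1$ --- is false. Take $f(x,y)=x\oplus y$ with $x,y\in\{0,1\}$: Alice can output $x$ and Bob can output $y$, so $D(f)=0$ in parity, yet $M_f=\bigl(\begin{smallmatrix}0&1\\1&0\end{smallmatrix}\bigr)$ has $\Ftwo$-rank $2>2^0-1$. What a $t$-bit parity protocol actually yields is a decomposition $f(x,y)=\alpha(x)\oplus\beta(y)\oplus\bigoplus_{i=1}^{2^t-1}p_i(x)\cdot q_i(y)$, so the rank bound one can hope for is only $\rk(M_f)\le 2^t+1$: the two local terms can each contribute an extra unit of rank, and that bound is tight (e.g.\ $x_1\oplus y_1\oplus x_2 y_2$ has $D=1$ and rank $3$). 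This also kills the route through $\RNLpx$: since $\RNLpx(\mathbf p)=\epsrk(C_{\mathbf p})$, and for the $x\oplus y$ distribution one has $R_\epsilon(\mathbf p)=0$ (take $a=x\oplus r$, $b=y\oplus r$) while $\epsrk(M_f)=2$ for $\epsilon<1/3$, the intermediate claim $\RNLpx(\mathbf p)\le 2^{R_\epsilon(\mathbf p)}-1$ you set out to prove is false. Finally, your algebra on the cited inequality $\log(2\,\rkF(M_f)-1)\le D(f)+1$ gives at best $\rk\le 2^{D(f)}$, not $2^{D(f)}-1$; the ``$-1$'' you are after is simply not to be found in the rank, and the leaf-counting heuristic does not hold because the leaves of a parity protocol tree are not monochromatic rectangles and do not each contribute a single rank-one bilinear term.

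What you are missing is that the theorem is about $\RNLp$, which is genuinely cheaper than $\RNLpx$ precisely because the players may post-process the box outcomes with arbitrary local functions. That is what makes the two local terms $\alpha(x)$ and $\beta(y)$ free in $\NLp$: Alice XORs $\alpha(x)$ into her final output, Bob XORs $\beta(y)$ into his, and only the $2^t-1$ bilinear terms cost a box. The correct key lemma must therefore be stated at the level of $\NLp$, not of rank: a deterministic $t$-bit parity protocol yields $\NLp(f)\le 2^t-1$ via the decomposition above, and averaging over the shared randomness then gives $\RNLp(\mathbf p)\le 2^{R_\epsilon(\mathbf p)}-1$. The paper establishes the decomposition by downward induction on the protocol depth, peeling off the last bit of communication at each step and doubling the number of bilinear terms while always keeping exactly one purely local term per side; for one-way protocols this is the explicit ``reserve one message, say $\mathbf{0}$, to absorb the local part'' construction. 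Your opening paragraph, where you suggest reserving one term to absorb the local part, is the right instinct --- it just has to be carried through the protocol tree inductively, rather than reduced to a rank bound on $M_f$, which is the wrong invariant for $\NLp$.
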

\begin{proof}
We build on an idea presented in~\cite{dlr07} to replace communication by \nlbs.
Let $t=R_\epsilon(\mathbf{p})$, and let us first consider the case of one-way communication protocols.
Denote $m_A(x)$ the message sent by Alice to Bob, $A(x)$ is Alice's output, and $B(m,y)$ is Bob's output when he receives message $m$. %Without loss of generality, we suppose that the four possible messages sent by Alice to Bob are $00,\ 01,\ 10,\ 11$. 
Suppose without loss of generality that one message is exactly the all-zero string $\mathbf{0}$. 
We use one $\nlb$ for each message except $\mathbf{0}$. 
In the $\nlb$ for message $m$, Alice 
inputs $1$ if $m_A(x)=m$ and 0 otherwise. Bob inputs $B(m,y) \oplus B(\mathbf{0},y)$. 
At the end, Alice outputs $\bigoplus a_i \oplus A(x)$ and Bob outputs $\bigoplus b_i \oplus B(\mathbf{0},y)$, where $(a_i,b_i)$ is the output of the $i$-th $\nlb$.
It is easy to check that the output is always $A(x)\oplus B(m_A(x),y)$.

In the two-way communication case, the proof is more involved.
We proceed recursively, at each step removing the last bit of communication.
We handle the different communication scenarios by doubling the
number of protocols at each step.
Throughout this proof, $T^{(k)}$ will be a $k$-bit transcript, 
and $A_i^{(k)}(T^{(k)},x)$ and $B_i^{(k)}(T^{(k)},y)$
will be the 
players' outputs for the $2^{t-k}$ different $k$-bit communication protocols indexed by $i$. The outputs from the different protocols may then be used as inputs to \nlbs, effectively replacing communication by \nlbs.
%\comment{Moreover, for all these protocols, we assume the directions of communication of the successive bits will be the same.  Marc: I'm not sure that we need that assumption. It suffices that for each protocols, each player knows what will be the direction of the messages. This seems implicitely included in the definition of a protocol (otherwise, both player could wait for the other one to send its message forever). Jeremie: I agree with Marc, see the discussion in the proof.}
More specifically, suppose we have a deterministic $t$ bit communication protocol that computes $f(x,y)$ in parity:
$$
f(x,y)=A_0^{(t)}(T^{(t)},x)\oplus B_0^{(t)}(T^{(t)},y).
$$
We prove by downward induction on $k$, from $k=t$ to $k=0$,
that $f(x,y)$ may be written as
\begin{eqnarray}
f(x,y)&=&A_0^{(k)}(T^{(k)},x)\oplus B_0^{(k)}(T^{(k)},y) %\nonumber \\
\oplus \bigoplus_{i=1}^{2^{t-k}-1} A_i^{(k)}(T^{(k)},x)\cdot B_i^{(k)}(T^{(k)},y),\label{eq:induction-nlboxes}
\end{eqnarray}
which shows that $f(x,y)$ can be computed with $k$ bits of communication (to produce the outputs $A_i^{(k)}(T^{(k)},x)$ and $B_i^{(k)}(T^{(k)},y)$), followed by $2^{t-k}-1$ \nlbs\ in parallel.

It will then follow by induction that
\begin{eqnarray*}
f(x,y)&=&A_0^{(0)}(x)\oplus B_0^{(0)}(y)
\oplus \bigoplus_{i=1}^{2^t-1} A_i^{(0)}(x) \cdot B_i^{(0)}(y),
\end{eqnarray*}
so %$m^{NL}(f)\leq 2^t-1$, hence by Theorem~\ref{theorem:nlb-rank}, 
$f(x,y)$ can be computed with $2^t-1$ \nlbs\ in parallel. 
%\comment{Problem to fix: in Theorem~\ref{theorem:nlb-rank}, we prove $\NLpx(f) = \rk(M_f)$ and the rank takes into account local terms, so this is only $m^{NL}(f)$ up to 2.}

Let us consider the $k$-bit protocols with outputs $A_i^{(k)}(T^{(k)},x)$ and $B_i^{(k)}(T^{(k)},y)$ from Eq.~(\ref{eq:induction-nlboxes}) and focus on the $k$-th bit of the transcript $T^{(k)}$. Since both players must agree, depending on the transcript so far $T^{(k-1)}$, whether this bit is communicated by Alice to Bob or vice-versa, we may define a Boolean function $d_k=d_k(T^{(k-1)})$, which gives the direction of this bit, say $d_k(T^{(k-1)})$ is 1 if the bit is communicated by Alice to Bob, and 0 otherwise. Let us now focus on the bit strings $T^{(k-1)}$ such that $d_k(T^{(k-1)})=1$. Since Alice may compute the next bit to be communicated from her input and the $k-1$ first bits of the transcript, we may write it as
$c_i^{(k)} = c_i^{(k)}(T^{(k-1)},x)$, and her output as $A_i^{(k)}(T^{(k-1)},x)$.
As for Bob's output $B_i^{(k)}(T^{(k)},y)$, we use a construction from~\cite{dlr05}
to replace one bit of communication by a non-local box:
\begin{eqnarray*}
B_i^{(k)}(T^{(k)},y)=B_i^{(k)}(T^{(k-1)}0,y)\oplus c_i^{(k)}\cdot
[B_i^{(k)}(T^{(k-1)}0,y)\oplus B_i^{(k)}(T^{(k-1)}1,y)].
\end{eqnarray*}
Therefore, when $d_k(T^{(k-1)})=1$ we may write $f(x,y)$ as
\begin{eqnarray*}
f(x,y)&=&A_0^{(k)}(T^{(k-1)},x)\oplus B_0^{(k)}(T^{(k-1)}0,y)\oplus c_0^{(k)}(T^{(k-1)},x)\cdot
[B_0^{(k)}(T^{(k-1)}0,y)\oplus B_0^{(k)}(T^{(k-1)}1,y)]\\
&&\oplus \bigoplus_{i=1}^{2^{t-k}-1} A_i^{(k)}(T^{(k-1)},x)\cdot B_i^{(k)}(T^{(k-1)}0,y)\\
&&\oplus \bigoplus_{i=1}^{2^{t-k}-1} A_i^{(k)}(T^{(k-1)},x)\cdot c_i^{(k)} \cdot
[B_i^{(k)}(T^{(k-1)}0,y)\oplus B_i^{(k)}(T^{(k-1)}1,y)].
\end{eqnarray*}
For $0\leq i\leq 2^{t-k}-1$, we define the following output functions,
\begin{eqnarray*}
A_i^{(k-1)}(T^{(k-1)},x)&=&A_i^{(k)}(T^{(k-1)},x)\\
B_i^{(k-1)}(T^{(k-1)},y)&=&B_i^{(k)}(T^{(k-1)}0,y)\\
B_{i+2^{t-k}}^{(k-1)}(T^{(k-1)},y)&=&B_i^{(k)}(T^{(k-1)}0,y)\oplus B_i^{(k)}(T^{(k-1)}1,y)\\
A_{2^{t-k}}^{(k-1)}(T^{(k-1)},x)&=&c_0^{(k)}(T^{(k-1)},x),\\
A_{i+2^{t-k}}^{(k-1)}(T^{(k-1)},x)&=&A_i^{(k)}(T^{(k-1)},x)\cdot c_i^{(k)} \quad \text{ (if $i \neq 0$)},
\end{eqnarray*}
when $d_k(T^{(k-1)})=1$, and similar expressions, with $A$ and $B$ swapped, when $d_k(T^{(k-1)})=0$.
Thus, we may write
\begin{eqnarray*}
f(x,y)&=&A_0^{(k-1)}(T^{(k-1)},x)\oplus B_0^{(k-1)}(T^{(k-1)},y)
\oplus \bigoplus_{i=1}^{2^{t-k+1}-1} A_i^{(k-1)}(T^{(k-1)},x)\cdot B_i^{(k-1)}(T^{(k-1)},y).
\end{eqnarray*}
\end{proof}

The simpler proof in the case of one-way protocol can be used to derive an explicit protocol using 3 \nlbs\ to simulate the correlations arising from 2-outcome
measurements made on an entangled bipartite state.  By Tsirelson's theorem~\cite{tsi85}, the problem of simulating these correlations reduces to the following problem.  
\begin{itemize}
\item Alice receives a unit vector $ \vec{x} \in \Real^n $
\item Bob receives a unit vector $ \vec{y} \in \Real^n $
\item Alice outputs $A \in \{-1, 1\}$ ,
Bob outputs $B \in \{-1, 1\}$ such that the correlation
equals the inner product of the two vectors:
$E[AB] =  \vec{x}\cdot \vec{y} $.
\end{itemize}

\begin{cor}
There is a protocol for 
simulating traceless two-outcome measurements on maximally 
entangled states, using 3 \nlbs ~in parallel.
\end{cor}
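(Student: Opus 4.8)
The plan is to combine three ingredients: Tsirelson's theorem, the two-bit one-way protocol of Regev and Toner~\cite{rt07}, and the one-way case of the construction used in the proof of Theorem~\ref{simulation}.

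First, by Tsirelson's theorem the task of simulating the correlations of a traceless two-outcome measurement on a maximally entangled state is exactly the inner-product simulation problem stated just above the corollary: Alice receives a unit vector $\vec{x}\in\Real^n$, Bob receives a unit vector $\vec{y}\in\Real^n$, and they must output $A,B\in\{-1,1\}$ with $E[AB]=\vec{x}\cdot\vec{y}$. Since the measurements are traceless, the marginal outcome distribution on each side is uniform, so this is a binary distribution with uniform marginals and fits the framework of the previous section. Translating the $\pm1$ outputs to bits via $a=(1-A)/2$, $b=(1-B)/2$, we have $a\oplus b=1$ exactly when $AB=-1$, and the correlation matrix is $C(x,y)=\tfrac12(1-\vec{x}\cdot\vec{y})$. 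Regev and Toner give a protocol in which Alice sends Bob a message of only $t=2$ bits and this correlation is reproduced \emph{exactly} (possibly using shared randomness).

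Next I would apply the one-way construction from the proof of Theorem~\ref{simulation} with $t=2$. For each fixed value $r$ of the shared randomness, the Regev--Toner protocol is a deterministic one-way protocol with message $m_A(x,r)\in\{0,1\}^2$, Alice's output $A(x,r)$, and Bob's output $B(m,y,r)$; since there are $2^2=4$ possible messages and we may assume (after relabelling messages) that one of them is the all-zero string, the construction uses one \nlb\ per nonzero message, i.e. exactly $2^2-1=3$ \nlbs, all used in parallel, and reproduces the protocol's output $A(x,r)\oplus B(m_A(x,r),y,r)$ exactly. Running, for each $r$, the corresponding $3$-\nlb\ protocol after first sharing $r$ then reproduces the whole target distribution; should the marginals fail to be uniform, flipping both outputs according to one extra shared random bit makes them uniform without using any further \nlb, since flipping $a$ and $b$ simultaneously leaves $a\oplus b$ unchanged. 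Hence $3$ \nlbs\ in parallel suffice, which is the claim.

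The argument is essentially a matter of plugging black boxes together, so I do not expect a serious obstacle. The only points that require care are the translation between the $\pm1$ correlation language of Tsirelson/Regev--Toner and the $\Ftwo$/XOR language in which our \nlb\ constructions are phrased, and checking that the per-randomness one-way construction reconstructs the protocol's output with zero error, so that the convex combination over $r$ simulates the target distribution exactly rather than only approximately.
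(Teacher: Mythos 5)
Your proposal is correct and follows essentially the same route as the paper: both reduce to the inner-product simulation via Tsirelson's theorem, invoke the exact two-bit one-way protocol of Regev and Toner, and then apply the one-way $2^t-1$ non-local-box construction from the proof of Theorem~\ref{simulation} with $t=2$; the paper simply writes out the resulting three-box protocol explicitly rather than citing the general construction.
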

\begin{proof}
We sketch the protocol of Regev and Toner. 
%By Tsirelson's theorem~\cite{tsi85}, the correlations corresponding to two-outcome measurements on entangled bipartite states are scalar products of unit vectors. 
Assume that the inputs to the problem are two unit vectors $\vec x,\vec y\in \Real^n$.

Alice and Bob share a random dimension 3 subspace of $\Real^n$.
Let $G$ be the matrix of the projection onto this subspace.
Alice and Bob start by applying a transformation
$ \vec{x'}=C( \vec{x}),  \vec{y'}=C( \vec{y})$ (see~\cite{rt07} for details of this transformation $C$), then project their vectors on the random subspace, $ \vec{x''}=G\vec{x'},  \vec{y''}=G\vec{y'}$. Let $\sgn:\Real\mapsto \{-1,1\}$ be the sign function, that is, $\sgn(x)=1$ if $x\geq 0$ and $\sgn(x)=-1$ otherwise.
Alice lets $\alpha_i = \sgn(x''_ i)$ for $i = 0, 1, 2$ and lets
$c_i = \alpha_0\cdot\alpha_i$ for $i = 1, 2$. Alice outputs $A=\alpha_0$
and sends $(c_1 ,c_2)$ to Bob.
Bob outputs $B = \sgn ( \vec{y''}\cdot \vec{z}_{c_1,c_2} ) $, where $\vec{z}_{c_1,c_2}=(1,c_1,c_2)$.

In the protocol with 3 \nlbs, labeled $(1,-1),(-1,1),(-1,-1)$, Alice 
inputs 1 into the box labeled $(c_1,c_2)$ if $(c_1,c_2)\neq (1,1)$, and 0 into
the other boxes.  Bob inputs $(1-\sgn (\vec{y''}\cdot\vec{z}_{m_1,m_2})\cdot(\sgn(\vec{y''}\cdot\vec{z}_{1,1})))/2$ into the box labeled $(m_1,m_2)$.  Let the outputs of the \nlb\ labeled $m$ be $(a_m,b_m)$.  Then Alice outputs $A=\alpha_0\cdot (-1)^{\bigoplus_ma_m}$ and
Bob outputs $B = \sgn(\vec{y''}\cdot\vec{z}_{1,1})\cdot (-1)^{\bigoplus_mb_m}$.

\end{proof}

\section{Secure Function Evaluation}

\subsection{Honest-but-curious model}
%\subsection{Introconversion of cryptographic primitives}

In the honest-but-curious model, it is well known that \OT\ and \AND\ are equivalent (up to a factor of 2).
\begin{claim}
One \AND\ may be simulated by one \OT. One \OT\ may be simulated by two \AND s. These simulations preserve security in the honest-but-curious model.
\end{claim}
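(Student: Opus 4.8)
The plan is to verify the two simulation claims directly by exhibiting small protocols and checking correctness and privacy in the honest-but-curious model. Recall the primitives: a secure \OT\ takes $p_0,p_1$ from Alice and $q$ from Bob and delivers $p_q$ to Bob; a secure \AND\ takes $p$ from Alice and $q$ from Bob and delivers $p\cdot q$ to Alice. Since these are devices that produce outputs only after all inputs are entered, there is no timing subtlety to worry about here, unlike for the \nlb.

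For the first direction, \emph{one \AND\ from one \OT}: I would have Alice, who wants to learn $p\cdot q$, play the role of the \OT-receiver and Bob the role of the \OT-sender. Bob sets $(q_0,q_1)=(0,q)$ as his two messages and Alice uses her bit $p$ as her choice bit; then Alice receives $q_p$, which equals $0$ if $p=0$ and $q$ if $p=1$, i.e. exactly $p\cdot q$. For privacy: Bob learns nothing at all from the \OT\ (the receiver gets an output, the sender gets none), so Bob learns nothing about $p$, as required. Alice learns $q_p=p\cdot q$; when $p=1$ this is $q$, which is allowed since the output $p\cdot q$ already reveals $q$ in that case, and when $p=0$ she learns $q_0=0$ regardless of $q$, so she learns nothing beyond $p\cdot q$. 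Hence security in the honest-but-curious model is preserved.

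For the second direction, \emph{one \OT\ from two \AND s}: here Bob wants to learn $p_c$ where $c$ is his choice bit. Write $p_c = p_0 \oplus c\cdot(p_0\oplus p_1)$. The term $c\cdot(p_0\oplus p_1)$ is an AND of Alice's bit $p_0\oplus p_1$ with Bob's bit $c$, but a plain \AND\ delivers its output to \emph{Alice}, not Bob, so a direct call is the wrong direction. The standard fix, which I would use, is to have Alice mask: Alice picks a uniformly random bit $r$ (this is where randomness enters, consistent with the remark that perfect privacy needs randomness in some models), and the two parties invoke the first \AND\ with Alice's input $(p_0\oplus p_1)$ — wait, this still outputs to Alice. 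The cleaner route is to reverse roles so the \AND\ output lands with Bob: since \AND\ as defined outputs to Alice, to get a bit to Bob we compose two \AND s in a chaining pattern, one to let Alice compute a masked value and a second (with roles effectively swapped via the free two-way communication that is permitted in the model) to transfer it; concretely, Alice inputs $p_0\oplus p_1$ and Bob inputs $c$ into the first \AND, but since that is the wrong direction, instead I would use the two \AND s to implement a ``Bob-output AND'': Bob picks random $s$, the first \AND\ computes $s\cdot(p_0\oplus p_1)$ to Alice, Alice sends $s\cdot(p_0\oplus p_1)\oplus r$ back, and a second \AND\ lets Bob recover $c\cdot(p_0\oplus p_1)$ masked appropriately; then Alice sends $p_0\oplus(\text{mask})$ and Bob unmasks to get $p_c$. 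I would then check line by line that Alice's view is independent of $c$ (it consists only of AND-outputs of her own bits with Bob's random masks, hence uniform) and Bob's view reveals only $p_c$ (the other message $p_{1-c}$ stays perfectly hidden by a fresh random bit).

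The main obstacle is getting the direction of the \AND\ output right while maintaining perfect privacy: because \AND\ hands its result to Alice but \OT\ must hand $p_c$ to Bob, the reduction cannot be a one-line algebraic identity and genuinely needs randomized masking and the free communication allowed in the model. I expect the bulk of the work to be the bookkeeping that shows both players' views, across the two \AND\ calls and the accompanying messages, are simulable from their prescribed inputs and outputs alone; the correctness identity $p_c = p_0\oplus c\cdot(p_0\oplus p_1)$ itself is immediate.
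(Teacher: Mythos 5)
The paper states this Claim without proof, so there is no canonical argument to compare against; I'll assess your proposal on its own merits.

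Your first direction (one \AND\ from one \OT) is correct and clean: role-swap the \OT\ so Alice is the receiver, Bob (as sender) inputs $(0,q)$, Alice inputs $p$ as her choice bit and receives $p\cdot q$; Bob's view is empty and Alice's view is simulable from $(p,pq)$.

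Your second direction (one \OT\ from two \AND s) has a genuine gap: you never write down a concrete protocol, and the route you gesture at is both over-engineered and incomplete. You correctly identify the output-direction mismatch, but then launch into a masking scheme with several undefined pieces ("a second \AND\ lets Bob recover $c\cdot(p_0\oplus p_1)$ masked appropriately", "Alice sends $p_0\oplus(\text{mask})$") whose inputs and security you never pin down, and you explicitly defer "the bulk of the work." Worse, the belief that the reduction "genuinely needs randomized masking and the free communication" is mistaken. The resolution of the direction mismatch is the same role-swap you already used (and accepted as legitimate) in the \AND-from-\OT\ direction: the \AND\ device is a two-port box, so the parties may sit at whichever port they like. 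With the \AND\ role-swapped so that Bob receives the output, two \AND s suffice with \emph{no} communication and \emph{no} private coins: in the first \AND\ Bob inputs $\bar c$ and Alice inputs $p_0$, so Bob receives $\bar c\cdot p_0$; in the second \AND\ Bob inputs $c$ and Alice inputs $p_1$, so Bob receives $c\cdot p_1$; Bob outputs $(\bar c\cdot p_0)\oplus(c\cdot p_1)=p_c$. Alice receives nothing at all (so her view is trivially independent of $c$), and Bob's two received bits are $(p_0,0)$ when $c=0$ and $(0,p_1)$ when $c=1$, i.e.\ exactly $(p_c,0)$ up to ordering, which is simulable from $(c,p_c)$; hence perfect privacy in the honest-but-curious model. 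You should replace the vague masking sketch with this concrete two-\AND\ protocol and the corresponding view-simulation argument.
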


As a starting point, we consider the most basic model, namely deterministic secure computation with \AND s in the honest-but-curious model. Beimel and Malkin \cite{BM} have shown that $\AND(f)\leq 2^{|\X|}$. We show that it is characterized by the one-way communication complexity of $f$.

\begin{theorem}\label{claim:one-way-secure-and}
\COMMENT{For any Boolean function $f:\X\times\Y\to\zo$,} $\AND(f)=2^{D^\rightarrow(f)}$.
\end{theorem}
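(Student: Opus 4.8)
The plan is to prove the two inequalities $\AND(f) \le 2^{D^\rightarrow(f)}$ and $\AND(f) \ge 2^{D^\rightarrow(f)}$ separately, exploiting the structure of one-way deterministic protocols (equivalently, the partition of Alice's rows of $M_f$ into classes indexed by her message). Let $D^\rightarrow(f) = d$, so there are at most $2^d$ distinct messages Alice could send, which partitions $\X$ into at most $2^d$ groups $\X_1, \dots, \X_{2^d}$ such that on each group Bob's output depends only on the group index and on $y$. Equivalently, $M_f$ has at most $2^d$ distinct rows; call the distinct row functions $g_1(y), \dots, g_{2^d}(y)$.

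\emph{Upper bound.} I would give a protocol using $2^d$ secure \AND{} gates in which Bob learns only $f(x,y)$. For each message $m \in \{1, \dots, 2^d\}$, Alice and Bob invoke one secure \AND: Alice inputs the bit $[\,m_A(x) = m\,]$ (indicator that her message is $m$) and Bob inputs $g_m(y)$; the \AND{} outputs $[\,m_A(x)=m\,]\cdot g_m(y)$ to Alice. Exactly one of these indicator bits is $1$, namely for the correct message $m^* = m_A(x)$, so Alice's XOR (or sum, since only one term is nonzero) of the $2^d$ outputs equals $g_{m^*}(y) = f(x,y)$; Alice then reveals it (recall free two-way communication is allowed). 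For privacy: on the wrong \AND{} gates Alice's input is $0$, so the output is $0$ regardless of Bob's bit, hence Alice learns nothing there; on the correct gate she learns $g_{m^*}(y) = f(x,y)$, which she is allowed to know. Bob learns nothing from the \AND{} gates (he receives no output), and from the reveal phase he learns $f(x,y)$. One must double-check the honest-but-curious privacy definition is met exactly — this is the routine part, and it should follow the same bookkeeping as in \cite{BM}.

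\emph{Lower bound.} This is the main obstacle and is where I expect the real work. I would argue that any secure deterministic protocol using $k$ \AND{} gates yields a one-way protocol with $\lceil \log(k+1) \rceil$-ish bits, forcing $k \ge 2^{D^\rightarrow(f)} - O(1)$; but to land exactly $2^{D^\rightarrow(f)}$ one needs a sharper combinatorial argument reminiscent of the $\NLpx$/rank proofs. The idea: fix the protocol and, for Bob's fixed input $y$, consider the transcript of \AND{} outputs (all going to Alice) as a function of $x$; perfect privacy for Bob forces this transcript, together with the communication, to reveal nothing about $y$ beyond $f(x,y)$, while perfect privacy for Alice constrains how Alice's queries can depend on $x$. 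The key step is to show that the number of distinct "behaviours" Bob can exhibit — i.e., distinct functions $x \mapsto (\text{outputs Alice sees})$ — must be at least the number of distinct rows of $M_f$, because otherwise two inputs $x, x'$ with $M_f(x,\cdot) \ne M_f(x',\cdot)$ would be indistinguishable to the protocol; and simultaneously this number is at most $2^k$ (it is determined by which of the $k$ \AND{} gates fired with a $1$ from Bob's side, or more carefully by the sequence of Bob-side bits), while the one-way communication complexity is exactly $\lceil \log(\text{number of distinct rows})\rceil$. Making "behaviours $\le 2^k$" tight to get $k \ge 2^{D^\rightarrow(f)}$ rather than $k \ge D^\rightarrow(f)$ is the delicate point: one must use that each \AND{} gate contributes a genuine binary branching only if Alice's input to it can be $1$, and privacy for Alice (she should not learn Bob's input bits beyond what $f$ gives) limits how these branchings can be reused — so the counting is really over subsets of gates, giving $2^k$, matched against $2^{D^\rightarrow(f)}$. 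I would mirror the induction/factorization style of the proof of Theorem~\ref{XOR}, peeling off one \AND{} gate at a time and tracking how the "live" rows split.

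\emph{Remark on the exponent.} I would double-check whether the statement should be $\AND(f) = 2^{D^\rightarrow(f)}$ exactly or up to additive constants, since in the parity-distributed conventions used earlier (cf. the $\pm 1$ slack noted after Corollary~\ref{cor:NLpx}) the clean equality may require that the "all-zero message" class be handled for free, exactly as in the one-way case of Theorem~\ref{simulation}; if so, the upper-bound protocol above can be shaved to $2^d - 1$ \AND{} gates by letting Alice default to $g_{m_0}(y)$ for a designated message $m_0$ and only using gates for the other messages — but I would keep the exact bookkeeping aligned with whatever convention the authors fixed for $D^\rightarrow$.
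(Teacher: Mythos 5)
Your upper bound is essentially the paper's proof: one secure $\AND$ per possible one-way message, Alice's indicator bit against Bob's row-value, correctness and privacy by the "exactly one gate fires" observation. That part is fine.

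The lower bound is where you have a genuine gap, and you already sense it: the counting you propose ("number of distinct behaviours is at most $2^k$, number of distinct rows is $2^{D^\rightarrow(f)}$, so $2^k \geq 2^{D^\rightarrow(f)}$") only yields $k \geq D^\rightarrow(f)$, which is exponentially weaker than the claim. Your proposed fix — "each $\AND$ gate contributes a genuine binary branching only if Alice's input to it can be $1$" and "the counting is really over subsets of gates" — is exactly backwards: the point of the paper's argument is that Alice's view does \emph{not} branch over subsets of gates, so the number of behaviours Alice can exhibit is at most $t$, not $2^t$, and it is that linear (not exponential) behaviour count that gives the exponential lower bound on $t$.

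The key idea you are missing is how perfect privacy collapses Alice's transcript. Fix a deterministic protocol using $t$ $\AND$ gates with no communication (Beimel--Malkin justify the no-communication assumption in the deterministic honest-but-curious case), where Alice receives the outputs. For a fixed $x$, privacy demands Alice learn exactly one bit of information — the value of $f(x,y)$ — so the output vector $\mathbf{a}\in\{0,1\}^t$ she sees takes at most two values, say $\mathbf{a}^0(x)$ (when $f(x,y)=0$) and $\mathbf{a}^1(x)$ (when $f(x,y)=1$). Let $m(x)$ be the first coordinate where these two vectors differ. For $i<m(x)$ Alice already knows $a_i$ a priori, so WLOG she inputs $0$ there; at $i=m(x)$ she must input $1$ (else the output could not differ); and as soon as she sees $a_{m(x)}$ she already knows $f(x,y)$, so WLOG she inputs $0$ for $i>m(x)$ too. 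Hence Alice's entire contribution to the $\AND$ gates is a one-hot vector indexed by $m(x)\in[t]$, and $f(x,y)=q_{m(x)}(y)\oplus a^0_{m(x)}(x)$. Alice sending $m(x)$ is then a $\lceil\log t\rceil$-bit one-way protocol, giving $D^\rightarrow(f)\leq\log\AND(f)$, i.e.\ $\AND(f)\geq 2^{D^\rightarrow(f)}$. This two-possible-transcripts observation, forced by perfect privacy, is the entire content of the lower bound, and it is absent from your sketch.

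Two secondary remarks. First, your appeal to "mirroring the proof of Theorem~\ref{XOR}" (the $\NLp$-to-$\NLpx$ theorem) is misplaced: that proof is an algebraic normal-form manipulation over $\mathbb{GF}_2$ driven by the \nlb\ correctness identity, whereas here the driving constraint is an information-theoretic privacy requirement, and the arguments have little in common beyond "induction on gates." Second, your remark about shaving to $2^d-1$ gates via an "all-zero message" default is a reasonable optimization but would break the stated equality rather than fix it; the paper keeps all $2^d$ gates precisely to land on $\AND(f)=2^{D^\rightarrow(f)}$.
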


\begin{proof}[Proof]

[$\AND(f)\leq 2^{D^\rightarrow(f)}$]. Let $P$ be a one-way communication protocol for $f$ using $t=D^\rightarrow(f)$ bits of communication, where, on input $x$, Alice sends a message $m(x)\in\zo^t$ to Bob and outputs $A(x)$, while, on input $y$, Bob outputs $B(y,m(x))$.
We now a build a secure protocol for $f$ using $2^t$ secure ANDs. We label the AND gates by a $t$-bit string $i$. Let $m=m(x)$. For the AND gate labeled $i$, Alice inputs $1$ iff $m=i$, while Bob inputs $B(y,i)$. Let $a_i$ be the outputs of the AND gates (received by Alice). Note that $a_m=B(y,m)$, and $a_i=0$ for all $i\neq m$. It then suffices for Alice to output $A(x)\oplus a_m$. The correctness of the protocol is immediate. The privacy for Alice is trivial since Bob does not receive the output of the ANDs, and as a consequence no information from Alice. The privacy for Bob follows from the fact that the only possibly non-zero output that Alice receives from the ANDs is $a_m=B(y,m)$, which she can deduce from $f(x,y)$ and her input $x$.

[$D^\rightarrow(f)\leq \log (\AND(f))$]. Let $P$ be a secure protocol for $f$ using $t=\AND(f)$ AND gates. Beimel and Malkin showed that in the deterministic case, we can assume without loss of generality that there is no communication between Alice and Bob. In the protocol $P$, Alice and Bob input $p_i$ and $q_i$, respectively, in the AND gate labeled $i\in[t]$, and Alice receives the output $a_i=p_i\cdot q_i$. Since Bob does not receive any information, his inputs to the AND gates only depend on his input $y$, that is, $q_i=q_i(y)$. We show that the same holds for Alice.

Let $\ba=(a_1,\cdots,a_t)$ be the vector of outputs from the AND gates. For fixed $x$, since the protocol is deterministic, and Alice should only learn whether $f(x,y)$ is $0$ or $1$, she should only receive two possible vectors, say $\ba^0(x)$ when $f(x,y)=0$ and $\ba^1(x)$ otherwise. Note that if there exists some $x_0\in\X$ such that $f(x_0,y)$ is constant for all $y\in\Y$, say $f(x_0,y)=0$, Alice only receives one possible vector $\ba^0(x_0)$ when $x=x_0$. In that case, we can fix $\ba^1(x_0)$ arbitrarily to any vector different from $\ba^0(x_0)$. Let $m=m(x)$ be the first index such that $a_m^0(x)\neq a_m^1(x)$. For any $i<m(x)$, $a_i^0(x)=a_i^1(x)$, hence Alice knows in advance what outputs she will receive from the first $m(x)-1$ gates. Therefore, Alice does not need these outputs (since she may infer them by herself) and we may assume without loss of generality that she inputs $p_i(x)=0$ in the first $m(x)-1$ gates. For the AND gate number $m$, $a_m^0(x)\neq a_m^1(x)$, so it has to be the case that $p_m(x)=1$ (otherwise $a_m$ is always $0$). From the output of that gate, Alice already knows the value $f(x,y)$ (depending on whether the output is $a_m^0(x)$ or $a_m^1(x)$), so she does not need the outputs of the last $t-m(x)$ AND gates, and we can assume without loss of generality that she just inputs $p_i(x)=0$ for all $i>m(x)$.

To summarize, we can always assume that Alice inputs $p_i(x)=0$ in all AND gates, except for some index $i=m(x)$ where she inputs $1$. For this AND gate, the output will therefore coincide with Bob's input $q_m(y)$. From the definition of $a_m^0(x)$ and $a_m^1(x)$, we then have for this output $q_m(y)=a_m^0(x)$ iff $f(x,y)=0$, that is, in turn, $f(x,y)=q_m(y)\oplus a_m^0(x)$.
We are now ready to build a one-way protocol for $f$. It suffices for Alice to compute the index of the relevant AND gate $m=m(x)$ and to send it to Bob. Then, Bob sets his output to $B(y,m)=q_m(y)$, while Alice sets hers to $A(x)=a_m^0(x)$.
\end{proof}

One can say that this shows that for most functions, randomization is necessary in order to construct efficient protocols even in the honest-but-curious model.

\subsection{Malicious model}

As we said, the \AND\ primitive cannot be used in the malicious model: indeed, a dishonest Alice may input $1$ in all \AND s, and she obtains Bob's input for free, which still allows her to compute the \AND. Therefore, for a dishonest Alice, each \AND\ is just equivalent to a bit of communication (Bob sends his input to Alice), and this does not allow for unconditional secure computation. For this reason, in the malicious model we consider the \OT\ primitive.
Moreover, it is known that in the malicious model, deterministic secure computation is impossible~\cite{dodismicali}, so we consider the case where Alice and Bob may use private coins and the protocol can have $\eps$ error.

Due to their non-signaling property, protocols using \nlbs\ only and no communication, such as those presented in the previous sections, are trivially secure even against malicious players. Indeed, the non-signaling property implies that the view of the protocol by a possibly dishonest player is always independent from the actions of the other player. 
%On the other hand, this strong notion of security also implies that these protocols may never satisfy another notion called \emph{correctness}, which imposes that one player may not corrupt the output of the protocol without being detected.
We show that certain such protocols may be transformed into protocols using \OT s, namely the protocols where Alice and Bob use their \nlbs\ in the same order. At this point, we don't know if this type of protocols are strictly weaker than general \nlb\ protocols. Nevertheless, our upper bounds in terms of communication complexity hold for such protocols as well (Corollary \ref{ordered}) and hence they translate into upper bounds on $\OT_\eps(f)$.

\begin{theorem}
For any \COMMENT{Boolean function $f:\X\times\Y\to\zo$,} $\eps\geq 0$,
$\OT_{\eps}(f) \leq \RNLo(f)$.
\end{theorem}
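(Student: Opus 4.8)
The plan is to show that any $\RNLo$ protocol for $f$, in which both players use their $t$ non-local boxes in the same order $1,\dots,t$, can be simulated by a protocol using $t$ Oblivious Transfer calls together with free two-way communication, preserving perfect privacy against malicious players and the same $\eps$ error. The starting point is the observation already recorded in the excerpt, following Wolf and Wullschleger~\cite{ww05}, that a single (timed) \nlb\ is equivalent to one \OT\ up to a factor of $2$; more useful here is the direct fact that one use of a \nlb\ can be simulated by one \OT\ call plus a constant amount of communication. Concretely, to realize $a\oplus b = x\cdot y$: Bob picks a uniformly random bit $b$, inputs the pair $(p_0,p_1)=(b,\,b\oplus x)$ into an \OT\ as sender — wait, the roles need care: we want Alice to end up with $a$ and Bob with $b$ such that $a\oplus b=x\wedge y$. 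So let Alice be the \OT\ sender with inputs $(r,\,r\oplus x)$ where $r$ is a private random bit, and let Bob be the receiver with choice bit $y$; Bob obtains $a' = r\oplus (x\wedge y)$. Now set Bob's box-output to $b = a'$ and Alice's box-output to $a = r$, so $a\oplus b = x\wedge y$ and each is individually uniform. This is exactly one \OT\ call, and crucially it is secure in the malicious model: the standard \OT\ functionality leaks nothing to the sender and only $p_y$ to the receiver, so Alice learns nothing and Bob learns only the uniformly random bit $a'$, which carries no information about $x$ beyond what is masked by $r$.

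The next step is to argue that this local replacement can be done \emph{box by box, in the prescribed order}, inside the full protocol. Since the protocol is in $\NLo$, Alice's input $p_i(x,\cdot)$ to the $i$-th box depends only on $x$, the shared randomness, and the outputs of boxes $1,\dots,i-1$ (similarly for Bob); there is no dependence on later boxes for either player. Hence we may process the boxes sequentially: replace box $1$ by an \OT\ call as above, which gives both players their box-$1$ outputs; then both players can compute their inputs to box $2$; replace box $2$ by an \OT\ call; and so on up to box $t$. At the end, Alice and Bob hold exactly the same joint distribution of box-outputs as in the original \nlb\ protocol (the private random bits $r_i$ used by the \OT\ simulations play the role of the internal randomness of the boxes), so they produce outputs with $a\oplus b = f(x,y)$ with probability at least $1-\eps$. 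The free two-way communication allowed in the definition of $\OT_\eps(f)$ absorbs the constant-size classical messages needed in each \OT-based box simulation (in fact, with the construction above, no extra communication is needed at all beyond the \OT\ itself, since Alice's sender inputs and Bob's choice bit are computed locally).

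It remains to check perfect privacy of the composed protocol against a malicious adversary, and this is the step I expect to be the main subtlety. The key point is that, exactly as for \nlb\ protocols, the ideal \OT\ functionality is non-signaling from the sender's side and leaks only the selected string on the receiver's side; composing such functionalities in sequence, where each call's inputs are computed from previously revealed (already-simulatable) data, keeps the adversary's entire view simulatable from its own input and the prescribed output. One has to verify that a malicious player cannot gain anything by choosing adversarial inputs to the \OT s: since in the original \nlb\ protocol the non-signaling property guarantees the honest player's view is independent of the other's behaviour, and since \OT\ reveals strictly less than a \nlb\ (Alice-as-sender learns nothing, Bob-as-receiver learns one bit which is what his \nlb\ output was anyway), any attack in the \OT-world translates to a no-better attack in the \nlb-world, which by the discussion preceding the theorem is trivially secure. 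Assembling these observations: we have built, from any order-respecting \nlb\ protocol with $t$ boxes, a perfectly private $\eps$-error protocol in the malicious model using $t$ \OT\ calls and free communication, i.e.\ $\OT_{\eps}(f)\leq \RNLo(f)$, as claimed.
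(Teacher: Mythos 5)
Your proposal follows essentially the same route as the paper: replace each \nlb\ in the prescribed common order by a single \OT\ call in which Alice, as sender, uses a fresh private random bit $r_i$ to input $(r_i,\, r_i\oplus p_i)$ while Bob, as receiver, inputs $q_i$, yielding the exact joint distribution of box outputs, and then argue privacy from the facts that Alice receives nothing from the \OT s and that Bob's received bits are independent uniformly random masks. The paper's security argument is slightly more direct (it simply observes Bob's view consists of independent uniform bits, rather than reducing to an attack in the \nlb-world), but the construction and the reasoning are the same.
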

\begin{proof}
Let us consider a protocol for $f$ using $t$ \nlbs\ in order and no communication. Let us denote $(p_1,\ldots,p_t)$ and $(a_1,\ldots,a_t)$ the inputs and outputs of Alice's \nlbs\ and $(q_1,\ldots,q_t)$ and $(b_1,\ldots,b_t)$ the inputs and outputs of Bob's $\nlbs$. The fact that they use these \nlbs\ in order implies that $p_i$ (and $q_i$) can only depend on the inputs and outputs of the first $(i-1)$ \nlbs\ but not on the remaining ones. Note that still Alice and Bob get their outputs immediately when they enter their inputs.

We now replace each $\nlb$ with an \OT\ starting from the first one, keeping the distribution of the view of the protocol exactly the same.
Alice and Bob know how to pick the inputs to the first $\nlb$ $p_1,q_1$ since they only depend on their inputs $(x,y)$ and the randomness. To replace this $\nlb$, Alice picks a random bit $r_1$ and inputs  $\{r_1,r_1 \oplus p_1\}$ to the \OT\ box; Bob inputs $q_1$ and hence, his output becomes $r_1\oplus p_1\cdot q_1$. Finally, Alice and Bob set the outputs of the simulated $\nlb$ to $a_1=r_1$ and $b_1=r_1\oplus p_1\cdot q_1$. The simulation of the distribution of the outputs of the $\nlb$ is perfect, since $a_1,b_1$ are unbiased random bits and $a_1\oplus b_1=p_1\cdot q_1$.

Alice and Bob continue with the simulation of the remaining \nlbs\ until the end (Alice using a fresh private random bit for each \NLB). Note that for each $\nlb$, Alice and Bob can compute the inputs $p_i,q_i$ from exactly the correct distribution, since they only depend on the previous $(i-1)$ \nlbs\ which have been perfectly simulated. Hence, at the end, we obtain a protocol for $f$ with the same success probability as the original one. Note that this construction works only when the \nlbs\ are used in order. 

It remains to prove that the new protocol with Oblivious Transfer boxes that we constructed is still secure.
Privacy for Bob is immediate since he only interacts with Alice through the \OT s (there is no additional communication), and Alice obtains no output from the \OT s.
Privacy for Alice follows from the fact that the only information that Bob receives from Alice during the protocol is the outputs of the \OT s, and that these output bits are independent from each other and from Alice's input (since Alice uses independent private random bits to generate her \OT\ inputs).
\end{proof}
From the above theorem we can conclude that all the upper bounds that we had for the $\RNLo$ complexity (see Corollaries~\ref{cor:factorization-norms}-\ref{ordered}) translate into upper bounds for $OT_\eps(f)$.

\COMMENT{The construction used to replace a \NLB\ by a \OT\ is due to Wolf and Wullschleger~\cite{ww05}. They use this construction to prove that \OT\ is equivalent to \NLB, but note that this is strictly speaking incorrect due to the different timing properties of \OT\ and \NLB~\cite{BCUWW}.}

The construction used to replace a \NLB\ by a \OT\ is due to Wolf and Wullschleger~\cite{ww05}. In this reference, this construction is used to prove that \OT\ is equivalent to \NLB, but note that this is strictly speaking incorrect due to the different timing properties of \OT\ and \NLB, as pointed out in~\cite{BCUWW}.

We now turn our attention to lower bounds, and for this we need to restrict ourselves to what we call `optimal' secure protocols. An `optimal' secure protocol is one where the function is computed securely in the usual sense, but we also require that for all the \OT\ calls, there is always an input that remains perfectly secure throughout the protocol. Intuitively, since we try to minimize the number of \OT s that we use, it should be the case that these \OT\ calls are really necessary, in the sense that one of the two inputs should always remain secure. If for example both inputs are revealed at some point during the protocol, then one may not use this \OT\ at all, resulting into a more efficient protocol. Even though intuitively our definition seems natural, at this point, we do not know whether this assumption can be done without loss of generality.

Formally, we define optimal secure protocols as follows.
Let us fix some notation. 
Consider a protocol for the secure computation of a function $f$, using
communication and \OT\ boxes.
$A$ denotes the messages from Alice to Bob; $B$ Bob's messages, $S$ and $T$ Alice and Bob's inputs to the \OT\ boxes and $O$ the outputs of the OT boxes. Note that only Bob receives these outputs.
We assume that at every round $i$ of the protocol, $A_i$ is Alice's message, $B_i$ is Bob's message and $S_i=(S^0_i,S^1_i), T_i, O_i$ are the inputs and the output of the $i$-th \OT\ box. (In some rounds we may not have communication, or the communication can proceed in several rounds; these cases can be handled in the same way as in the proof below.) 
$A_{[i]}, B_{[i]}, S_{[i]}, T_{[i]}, O_{[i]}$ is the concatenation of the first $i$ messages of Alice, messages of Bob, inputs and outputs of the $i$-th first \OT\ boxes respectively. 
\begin{definition}\label{def:optimality}
$\OTsec_\eps(f)$ is the number of 2-1 Oblivious Transfer calls required to compute $f(x,y)$ in parity with perfect privacy and $\eps$ error over the players' private coins, assisted with (free) two-way communication, in the malicious model,
subject to the additional conditions that for each $i$,
\begin{eqnarray*}
 \Prob[A_i | A_{[i-1]},B_{[i-1]},S_{[i]},x,r] &=& \Prob[A_i | A_{[i-1]},B_{[i-1]},r],\\
 \Prob[B_i | A_{[i]},B_{[i-1]},T_{[i]},O_{[i]},y,r] &=& \Prob[B_i | A_{[i]},B_{[i-1]},r],
\end{eqnarray*}
where $r$ is the shared random string used in the protocol. %\comment{Is the second condition correct? I was hesitating whether I should also condition on $O_{[i]}$. This could be checked by doing the simulation of the inverted OT boxes and see what condition we need to keep optimality}
\end{definition}

Let us see exactly what our definition says and how it is related to our intuitive definition of `optimal' protocols.
Let us consider the first condition (a similar discussion holds for the second condition, by swapping Alice and Bob's roles).
We claim that the distribution of $A_i$ conditioned on $(A_{[i-1]}, B_{[i-1]},r)$ should be independent of $(S_{[i]},x)$. Imagine that it is not the case. Then there exist two different strings $(S_{[i]},x)$ and $(S_{[i]},x)'$, such that the distribution of $A_i$ conditioned on $(A_{[i-1]}, B_{[i-1]}, r)$ is different depending on whether Alice's inputs are $(S_{[i]},x)$ or $(S_{[i]},x)'$. This is a contradiction to some strong notion of privacy. Bob, knowing $(A_{[i-1]}, B_{[i-1]}, r)$ and receiving $A_i$, will get some information about whether Alice's inputs are $(S_{[i]},x)$ or $(S_{[i]},x)'$. This means that first, if these two strings differ in $x$, then Bob learns information about the input, which cannot happen; and second, if they differ in one of Alice's inputs to some \OT\ box, then the malicious Bob could get information about both Alice's inputs. A malicious Bob can do this by for example picking an \OT\ call at random and input a random bit into the \OT\ box. With non-zero probability, this would be exactly the box for which he can get information about one input bit and with half probability this bit will be different than the one he learned from the \OT\ box. Hence, we believe that our definition captures exactly the notion of an `optimal' protocol where the inputs $x,y$ as well as the inputs to the \OT\ boxes must remain secure throughout the protocol.

\begin{theorem}\label{thm:ot-lower-bound}
For any Boolean function $f$, $\OTsec_{\eps}(f) =  \Omega(R_\eps(f))$
\end{theorem}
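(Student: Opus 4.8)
The plan is to show that an optimal secure protocol using $\OTsec_\eps(f)$ Oblivious Transfer calls together with free two-way communication can be converted into an ordinary (randomized, bounded-error) communication protocol for $f$ in parity, with communication overhead only a constant factor times the number of \OT\ calls, so that $R_\eps(f) = O(\OTsec_\eps(f))$. The key observation is that the optimality conditions in Definition~\ref{def:optimality} pin down precisely which messages carry genuine ``new'' information: conditioned on the transcript so far and the shared randomness $r$, the next message $A_i$ of Alice is \emph{independent} of her private data $(S_{[i]},x)$, and symmetrically for Bob. This means that the communication transcript is, modulo $r$, a fixed object; the only place Alice's and Bob's private inputs can enter the computation is through the \OT\ boxes.

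First I would set up the simulation: Alice and Bob, in the communication model, share $r$ and simply run the secure protocol step by step. Whenever the secure protocol has a communication round, they can produce the message by local sampling, since by the optimality condition its distribution depends only on $(A_{[i-1]},B_{[i-1]},r)$, which both players already hold — so in fact these messages need \emph{not} be exchanged at all, or can be exchanged at zero amortized cost since either party can generate them. Whenever the secure protocol invokes an \OT\ box, we replace that call by $O(1)$ bits of genuine communication: Alice simply sends both her input bits $S^0_i,S^1_i$ to Bob (2 bits), Bob computes the output $O_i=S^{T_i}_i$ himself, and if needed sends back whatever Alice would have learned — but in a 2-1 \OT\ only Bob receives output, so a constant number of bits per \OT\ suffices. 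Since the players faithfully reproduce the joint distribution of the entire transcript (communication plus \OT\ in/outputs), the final outputs $a,b$ have exactly the same distribution as in the secure protocol, hence $a\oplus b=f(x,y)$ with probability at least $1-\eps$. This gives $R_\eps(f)\leq c\cdot\OTsec_\eps(f)$ for an absolute constant $c$, i.e. $\OTsec_\eps(f)=\Omega(R_\eps(f))$.

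The step I expect to be the main obstacle is making rigorous the claim that the optimality conditions really do let the communication messages be generated locally. One has to argue carefully that $\Prob[A_i\mid A_{[i-1]},B_{[i-1]},r]$ is a quantity both players can compute (it is determined by the protocol description and $r$), and that sampling $A_i$ from this distribution independently on Alice's and Bob's sides, then using a shared random seed to couple the two samples, yields the correct joint distribution of $(A_i,\text{everything else})$ — here the shared randomness $r$ must be rich enough, or we fold the coupling randomness into $r$. A secondary subtlety is bookkeeping the rounds where communication and \OT\ calls interleave in complicated ways; as the authors note parenthetically after Definition~\ref{def:optimality}, rounds with no communication or with multi-round communication are handled the same way, so I would treat a generic round uniformly and simply charge $O(1)$ communication bits per \OT\ invocation and $0$ for the rest. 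Putting these together yields the bound; the constant hidden in $\Omega(\cdot)$ is $1/c$ where $c$ counts the bits needed to emulate one \OT\ by communication (at most $2$ or $3$).
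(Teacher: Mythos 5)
Your high-level plan — suppress the communication rounds by exploiting the optimality conditions and replace each \OT\ by a constant number of bits of ordinary communication — is essentially the route the paper takes. But the step you flag as ``the main obstacle'' is in fact the heart of the proof, and the idea you offer to discharge it does not suffice.

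Here is the gap. Definition~\ref{def:optimality} says that, conditioned on $(A_{[i-1]},B_{[i-1]},r)$, the \emph{marginal} distribution of $A_i$ does not depend on $(S_{[i]},x)$. It says nothing about the joint distribution of $A_i$ with Alice's \emph{private} randomness $r_A$, and in general $A_i$ is a nontrivial function of $r_A$. This matters because $r_A$ also determines Alice's future moves — her subsequent \OT\ inputs $S_{i+1},S_{i+2},\ldots$ and her later messages. If, as you propose, Alice and Bob both draw $A_i$ from $\Prob[A_i\mid A_{[i-1]},B_{[i-1]},r]$ using a fresh shared seed while Alice keeps her original $r_A$ to run the rest of the protocol, then the joint law of $(A_i,S_{i+1},\ldots)$ is wrong: you have severed exactly the correlation that $r_A$ carries. ``Folding the coupling randomness into $r$'' does not repair this, because the coupling you need is between the publicly sampled $A_i$ and Alice's otherwise-hidden state $r_A$, and you cannot make that public without changing the transcript distribution.

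The paper's proof resolves this by having Alice \emph{re-sample} her private randomness after each round, drawing $r_A\sim\Prob[r_A\mid A_{[i]},B_{[i]},S_{[i]},x,r]$, and then verifying by a telescoping product that the transcript distribution is preserved exactly; it also checks (using privacy) that this conditional is always well-defined, i.e.\ a consistent $r_A$ always exists. That verification is the substantive content. The paper also proceeds in four steps — first suppress Alice-to-Bob communication, then invert the \OT\ direction (adding $t$ bits), then suppress Bob-to-Alice communication by re-using the same lemma symmetrically, and finally replace the $t$ \OT s by $2t$ bits — arriving at $3t$. Your one-shot version, replacing each \OT\ directly by $2$ bits and suppressing all communication at once, is plausible and would give a slightly better constant if it worked, but it would require carrying out the $r_A$- and $r_B$-resampling argument simultaneously on both sides and re-proving the distributional identity; as written, you have only asserted the conclusion of the hard lemma, not proved it.
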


\begin{proof}
% The proof of the upper bound follows closely that of the Non-Local Box complexity. Fix a $t$-bit simultaneous protocol for $f$,  where the referee receives two messages $a$ and $b$ of size $t$ from Alice and Bob and outputs $MAJ(a_1\oplus b_1,\ldots,a_t \oplus b_t)$. It is well-known that the majority of $t$ bits can be computed by a circuit of size $O(t)$ with $AND,NOT$ gates. 
% Hence, it suffices to show that the distributed $AND$ of two bits, \ie $AND(a_1\oplus b_1,a_2\oplus b_2)$ can be computed securely using two Oblivious Transfer boxes. This follows exactly the construction with the two Non-Local Boxes. In order to compute the above expression it suffices to compute the expressions $a_1 \wedge b_2$ and $a_2 \wedge b_1$. For each one of them, Alice and Bob use an \OT\ box where Alice inputs $(c_1, a_1+c_1)$ and $(c_2,a_2+c_2)$ and Bob inputs $b_2$ and $b_1$ respectively. It is easy to see that if both parties are honest then at the end of the protocol they have computef $f$ in parity. Moreover the protocol is secure in the malicious model: first, Alice does not receive any information from Bob nor from the \OT\ boxes so she cannot cheat. second, Bob will always receive a random bit from the \OT\ boxes, no matter what he inputs into the box, hence he cannot cheat either. Of course, the \OT\ complexity can be only smaller if we allow communication between Alice and Bob.
% 
% The proof of the lower bound is more involved, because

We want to show that even in the randomized case, communication doesn't help a lot. In other words, we 
want to show that $O(t)$ bits of communication are sufficient, where $t$ is the number of \OT\ boxes the players use.
Recall that we assume that the protocol is optimal in the following sense: privacy is preserved if both the inputs of the players and one of the inputs to the Oblivious Transfer boxes remain secure throughout the protocol. We start with a perfectly secure protocol that uses $t$ OT\ boxes where Bob receives the outputs, and arbitrary two-way communication between Alice and Bob. We show how to obtain a protocol using $O(t)$ bits of communication (and no \OT\ boxes) by proceeding in four steps. 
\begin{enumerate}
 \item First, we show that the optimality conditions imply that we can entirely suppress the communication from Alice to Bob. We defer the proof of this part to the end of the proof. After this first step, we have a perfectly secure protocol using $t$ \OT\ boxes where Bob receives the outputs, no communication from Alice to Bob and arbitrary communication from Bob to Alice.
 \item The second step is to invert the $t$ \OT\ boxes, meaning that we simulate each \OT\ box where Bob receives the output by an \OT\ box where Alice receives the output. It is well-known that this is possible if we add one bit of communication from Alice to Bob
for each \OT\ box~\cite{wwsym}. Therefore, this step requires to reintroduce $t$ bits of communication from Alice to Bob.  Note that the simulation is such that the new protocol is still optimal, that is, it still satisfies the conditions of Definition~\ref{def:optimality}, where Alice and Bob's roles are swapped.
%\comment{Maybe we should provide the details, as Marc wrote: It seems to be a trivial consequence of the simulation, since the in the simulation, Bob's inputs to the TOs are $r$ and $r \oplus c$ where $r$ is a uniform random bit. Certainly Bob's messages cannot depend on $c$, but I think it should be explicitely written somewhere. }
Hence, we now have a perfectly secure protocol using $t$ \OT\ boxes where Alice receives the outputs, $t$ bits of communication from Alice to Bob and arbitrary communication from Bob to Alice.
\item The third step is to suppress the communication from Bob to Alice. For this step, we just need to reuse the analysis of step one. Notice that the situation is similar to step one, since now Alice gets the outputs of the \OT\ boxes. After this, we end up with a perfectly secure protocol using $t$ \OT\ boxes where Alice receives the outputs, $t$ bits of communication from Alice to Bob, and no communication from Bob to Alice.
\item Finally, Alice and Bob can simulate these $t$ \OT\ boxes by communicating $2t$ bits and hence we end up with a communication protocol of complexity $3t$. This concludes the proof of the theorem.
\end{enumerate}

We now show how to perform the first step of the proof.
The goal is to have Alice and Bob use their shared randomness in order to pick Alice's messages without her sending any bit. On the other hand, Bob is going to send the same messages to Alice as before and they will also use the same \OT\ boxes. Bob can fix his private randomness in the beginning of the protocol. Alice is going to start with a uniform distribution on her private randomness and during the protocol she will update this distribution in order to remain consistent with the protocol up to that point.

We now describe the original protocol in more detail:
\begin{itemize}
\item  Alice and Bob pick their private randomness $r_A$ and $r_B$ uniformly at random.
\item For every round $i$ of the protocol
\begin{itemize}
\item Alice and Bob use an \OT\ box with inputs $S_i$ and $T_i$ respectively and Bob receives output $O_i$. Alice's input to the $i$-th \OT\ box is a fixed function of $(A_{[i-1]},B_{[i-1]},S_{[i-1]},x,r,r_A)$ and Bob's input is a fixed function of $(A_{[i-1]},B_{[i-1]},T_{[i-1]},O_{[i-1]},y,r,r_B)$.
\item Alice computes her message $A_i$ as a function of $(A_{[i-1]}, B_{[i-1]},S_{[i]},x,r,r_A)$ and sends it to Bob.
\item Bob computes his message $B_i$ as a function of $(A_{[i]}, B_{[i-1]},T_{[i]},O_{[i]},y,r,r_B)$ and sends it to Alice.
\end{itemize}
\end{itemize}

We look at the distribution induced by this protocol $\Prob[r_A,r_B,A,B,S,T,O | x,y,r]$ and have
\begin{eqnarray*}
\lefteqn{\Prob[r_A,r_B,A,B,S,T,O | x,y,r]}\\
& = & \Prob[r_A] \cdot \Prob[r_B] \cdot \prod_{i\in[t]}  \Prob[S_i | A_{[i-1]}, B_{[i-1]},S_{[i-1]},x,r,r_A]  \\
&& \cdot \Prob[T_i | A_{[i-1]}, B_{[i-1]},T_{[i-1]}, O_{[i-1]},y,r,r_B] \cdot \Prob[O_i | S_{i},T_{i}]\\
&& \cdot \Prob[A_i | A_{[i-1]},B_{[i-1]},S_{[i]},x,r,r_A] 
\cdot   \Prob[B_i | A_{[i]},B_{[i-1]},T_{[i]},O_{[i]},y,r,r_B].
\end{eqnarray*}

We now give a new protocol where Alice and Bob use their shared randomness to simulate Alice's messages. The distribution remains exactly the same as in the original protocol. In order for this to hold, Alice needs to ``update'' her private randomness to retain consistency. Here is the new protocol:

\begin{itemize}
\item Alice and Bob pick their private randomness $r_A$ and $r_B$ uniformly at random.
\item For every round $i$ of the protocol
\begin{itemize}
\item Alice and Bob use an \OT\ box with inputs $S_i,T_i$ respectively and Bob receives output $O_i$. \\
Alice picks her input $S_i$ according to $\Prob[S_i | A_{[i-1]},B_{[i-1]},S_{[i-1]},x,r]$. Bob's input is the same functions of $(A_{[i-1]},B_{[i-1]}, T_{[i-1]},O_{[i-1]},y,r,r_B)$ as in the original protocol.
\item Alice and Bob simulate Alice's message by sampling from the distribution $\Prob[A_i| A_{[i-1]},B_{[i-1]},r]$, i.e. they pick a next message from all messages that are consistent in the original protocol with the shared randomness $r$ and the transcript so far, averaged over the private randomness $r_A$ for Alice. The key point is that due to the optimality condition in Definition~\ref{def:optimality}, Bob also knows the distribution of Alice's message $A_i$ when  averaged over $r_A$, since it does not depend on $(S_{[i]},x)$.
\item Bob computes his message $B_i$ as the same function of 
$( A_{[i]},B_{[i-1]},T_{[i]},O_{[i]},y,r,r_B)$ as in the original protocol and sends it to Alice.
\item Alice ``updates'' her private randomness by picking $r_A$ according to $\Prob[r_A|A_{[i]}, B_{[i]},S_{[i]},x,r]$. 
\end{itemize}
\end{itemize}

We need to show that the distribution corresponding to the above protocol is exactly the same as in the original protocol and also that this a well-defined procedure. We have for the new protocol:

\begin{eqnarray*}
\lefteqn{\Prob[r_A,r_B,A,B,S,T,O | x,y,r]= \Prob[r_A] \cdot \Prob[r_B]}\\
& &  \cdot  \prod_{i\in[t]} \Prob[S_i | A_{[i-1]}, B_{[i-1]},S_{[i-1]},x,r] \cdot\Prob[T_i | A_{[i-1]}, B_{[i-1]},T_{[i-1]},O_{[i-1]},y,r,r_B] \cdot \Prob[O_i | S_{i},T_{i}]\\  
&  & \cdot \Prob[A_i | A_{[i-1]},B_{[i-1]},r] \cdot \Prob[B_i | A_{[i]},B_{[i-1]},T_{[i]},O_{[i]},y,r,r_B] \cdot 
\frac{\Prob[r_A|A_{[i]}, B_{[i]},S_{[i]},x,r]}{\Prob[r_A|A_{[i-1]},B_{[i-1]},S_{[i-1]},x,r]}
\end{eqnarray*}

Note that as it should be the distribution of $r_A$ after the $\ell$-th round is exactly 
\[ \Prob[r_A] \cdot \prod_{i \in [\ell]} \frac{\Prob[r_A|A_{[i]}, B_{[i]},S_{[i]},x,r]}{\Prob[r_A|A_{[i-1]},B_{[i-1]},S_{[i-1]},x,r]} = \Prob[r_A|A_{[\ell]}, B_{[\ell]},S_{[\ell]},x,r]
\]

We now show that the distributions which correspond to the two protocols are the same. It is easy to see that we need to prove the following fact
\begin{eqnarray*}
\lefteqn{\Prob[S_i | A_{[i-1]}, B_{[i-1]},S_{[i-1]},x,r,r_A]
\cdot \Prob[A_i | A_{[i-1]},B_{[i-1]},S_{[i]},x,r,r_A] =}\\
&& \Prob[S_i | A_{[i-1]}, B_{[i-1]},S_{[i-1]},x,r] \cdot \Prob[A_i | A_{[i-1]},B_{[i-1]},r] \cdot \frac{\Prob[r_A|A_{[i]}, B_{[i]},S_{[i]},x,r]}{\Prob[r_A|A_{[i-1]},B_{[i-1]},S_{[i-1]},x,r]}
\end{eqnarray*}

We have

\begin{eqnarray*}
\lefteqn{\Prob[S_i | A_{[i-1]}, B_{[i-1]},S_{[i-1]},x,r,r_A]
\cdot \Prob[A_i | A_{[i-1]},B_{[i-1]},S_{[i]},x,r,r_A]}\\
& = & \Prob[S_i, A_i | A_{[i-1]}, B_{[i-1]},S_{[i-1]},x,r,r_A] \; = \;
\frac{\Prob[S_i, A_i,r_A | A_{[i-1]}, B_{[i-1]},S_{[i-1]},x,r]}
{\Prob[r_A|A_{[i-1]},B_{[i-1]},S_{[i-1]},x,r]} \\
& = & \Prob[S_i, A_i | A_{[i-1]}, B_{[i-1]},S_{[i-1]},x,r] \cdot
\frac{\Prob[r_A|A_{[i]}, B_{[i-1]},S_{[i]},x,r]}{\Prob[r_A|A_{[i-1]},B_{[i-1]},S_{[i-1]},x,r]}\\
& = & \Prob[S_i | A_{[i-1]}, B_{[i-1]},S_{[i-1]},x,r] \cdot \Prob[A_i | A_{[i-1]},B_{[i-1]},r] \cdot \frac{\Prob[r_A|A_{[i]}, B_{[i]},S_{[i]},x,r]}{\Prob[r_A|A_{[i-1]},B_{[i-1]},S_{[i-1]},x,r]}
\end{eqnarray*}

For the last equation we used first that $\Prob[r_A|A_{[i]}, B_{[i-1]},S_{[i]},x,r] = \Prob[r_A|A_{[i]}, B_{[i]},S_{[i]},x,r]$ since $B_i$ is independent of $r_A$ (for fixed $A_{[i]},S_{[i]}$) and more importantly that 
we have
$\Prob[A_i | A_{[i-1]},B_{[i-1]},S_{[i]},x,r] = \Prob[A_i | A_{[i-1]},B_{[i-1]},r]$. The last equality comes from the privacy of the protocol and the fact that it is optimal.

Now, let us make sure that all these probabilities are non-zero. This follows from the privacy of the protocol. Let's say that after the $(i{-}1)$-th round Alice has been able to update her private randomness to a consistent $r_A$. In the next round, Alice picks input $S_i$ for the \OT\ box from the distribution of all inputs $S_i$ that are consistent for some $r_A$ and Alice and Bob pick a message $A_i$ as Alice's next message. The distribution from which Alice and Bob picked $A_i$ is from all messages for which for shared randomness $r$, there exists an input $x$ and inputs $S_{[i]}$ such that there exists a string $r_A$ so that $A_i$ is consistent with $(x,r,r_A,A_{[i-1]},S_{[i]})$. 
From privacy, if there exists an $x$ and inputs $S_{[i]}$ for which the transcript $A_{[i]}$ is consistent for some $r_A$, then it has to be consistent for all inputs $x$ and all inputs $S_{[i]}$ and for some other $r_A$. Otherwise, Bob will gain information about $x$ or $S_{[i]}$. Hence, there will always be a choice of $r_A$ which is consistent with the protocol.
This finishes the first step, and consequently, the whole proof.
\end{proof}

\section{Conclusion and open questions}
We have shown various upper and lower bounds on \nlb\ complexity, and shown how the upper bounds could be translated into bounds for secure function evaluation. We have also shown how to simulate quantum correlations arising from binary measurements on bipartite entangled states using 3 \nlbs. %\comment{Do not hesitate to add a little more details. For the next remark, I added it because it seemed interesting to me, I hope you agree ;-). Maybe this should be moved in the section on simulating quantum correlations, but the problem is that we have not introduced the simulation of NLB by OT at this point of the paper... } 
Note that combining these last two results also implies that such quantum correlations may also be simulated using 3 \OT\ boxes. The advantage is that, while \nlbs\ may not be actually realized (due to their violation of Tsirelson's bound), \OT\ boxes may be implemented under computational assumptions.
Note that such a simulation with \OT\ boxes breaks the timing properties of the EPR experiment, but this is unavoidable when simulating quantum correlations using classical resources, due to the violation of Bell inequalities. Moreover, contrary to a simulation with communication such as in~\cite{rt07}, using \OT\ boxes preserves the cryptographic properties of the experiment, that is, Alice does not learn anything about Bob's measurement (and vice versa).

During our investigations, we have come across a series of interesting open questions.
\begin{itemize}
 \item For randomized \nlb\ complexity in parallel, can we remove the XOR restriction, that is, is the case that  $\RNLp(f)\approx\RNLpx(f)= \epsrk(M_f)$? The proof for the deterministic does not carry over because of the inherent randomness of the \nlbs, which could be used to save on the number of \nlbs\ when some error probability is authorized.
 \item While the Disjointness function provides an example of exponential gap between parallel and general deterministic \nlb\ complexity, the gap disappears in the randomized model. Is it always the case that parallel and general randomized \nlb\ complexities are polynomially related?
 \item In general, \nlbs\ could be used in a different order on Alice and Bob's side. Does this provide any advantage, that is, are there functions for which $\RNL(f)<\RNLo(f)$?
 \item As for secure function evaluation, we proved that the communication complexity is a lower bound on \OT\ complexity only under some optimality assumption. Can this assumption be made without loss of generality?
 \item Finally, another interesting question is whether we can prove an analogue of Theorem~\ref{thm:ot-lower-bound} for \nlbs. Ideally, we would like to prove that for secure computation with ordered \nlbs, communication does not help. Indeed, due to the reduction from ordered \nlb\  protocols to \OT\ protocols and vice versa, this would imply that $\RNLo(f)$ is exactly $\OT_{\eps}(f)$, and not just an upper bound. This would of course provide even more motivation to study \nlb\ complexity in the context of secure function evaluation. Note that working with \nlbs\ instead of \OT\ boxes provides a few advantages. First, protocols using \nlbs\ (and no communication) are necessarily secure, even in the malicious model. Second, contrary to \OT\ protocols, such \nlb\ protocols do not require private randomness (except the inherent randomness of the \nlbs) to ensure security in this model. %Finally, the fact that \nlbs\ are symmetric could be another advantage.
\end{itemize}

\COMMENT{
\subsection{Summary}
We can prove the following relations.
\begin{theorem}\label{theorem:summary}
$D^\rightarrow\leq\OT_s^{\priv}=\OT_s^{\priv,\com}=\NL_s^{\priv,\com}\leq\NL_s^{\pub,\com}=\NL\leq\NLp\leq 2^{D}\leq2^{D^\rightarrow}=\AND_s=\AND_s^\com$.
\end{theorem}

\begin{proof}[Proof]
[$D^\rightarrow\leq\OT_s^{\priv}$] Let $\OT_s^{\priv}(f)=t$. Since each \OT\ may be simulated by two bits of communication (Alice sends both her inputs to Bob), we immediately obtain that $D^\rightarrow\leq2t$. To remove the factor 2, we first simulate each \OT\ with one \NLB\ and one bit of communication. From the proof of Theorem~\ref{thm:nand-nlb}, we may then transform this protocol into a protocol using $t$ \NAND s, and the second input of Alice in each \NAND\ is just a private coin. In standard communication complexity, we can replace that private coin by a public coin (we do not need to ensure security), so Alice just has to send the first input bit for each \NAND.

[$\OT_s^{\priv}=\OT_s^{\priv,\com}$] See Corollary~\ref{cor:ot-communication}.

[$\OT_s^{\priv,\com}=\NL_s^{\priv,\com}$] This follows from Wolf and Wullschleger, who showed that one \nlb\ may be simulated by one 2-1 oblivious transfer assisted with one bit of private randomness, and that one 2-1 oblivious transfer may be simulated with one \nlb\ assisted with one bit of communication.

[$\NL_s^{\priv,\com}\leq\NL_s^{\pub,\com}$] This trivially follows from the fact that public coins may be simulated with private coins and communication.

[$\NL_s^{\pub,\com}=\NL$] See Claim~\ref{claim:nlb-communication}.

[$\NLp\leq 2^{D}$] This was shown in the \nlb\ complexity paper.

[$2^{D^\rightarrow}=\AND_s$] See Claim~\ref{claim:one-way-secure-and}.

[$\AND_s=\AND_s^\com$] This is a standard result, see e.g. Beimel-Malkin.
\end{proof}
}

\subsection*{Acknowledgements}
We would like to thank Troy Lee and Falk Unger for pointing out the 
deterministic protocol for Disjointness. We thank Ronald de Wolf 
for suggesting a randomized protocol for disjointness 
with bias $1/\log(n)$, using Valiant-Vazirani.

\bibliographystyle{alpha}
\bibliography{biblio}

\end{document}